\documentclass[draftcls,journal,onecolumn]{IEEEtran}

\usepackage{times}

\usepackage{amssymb}
\usepackage{amsthm}
\usepackage{amsxtra}
\usepackage{amsmath}
\usepackage{array}
\usepackage{algpseudocode}
\usepackage{algorithm}

\usepackage{verbatim}
\usepackage{epsfig}
\usepackage{setspace}
\usepackage{caption}
\usepackage{subcaption}
\usepackage{breqn}

\usepackage{hyperref}

\allowdisplaybreaks

\newtheorem{theorem}{Theorem}
\newtheorem{definition}{Definition}
\newtheorem{lemma}[theorem]{Lemma}
\newtheorem{proposition}[theorem]{Proposition}
\newtheorem{corollary}[theorem]{Corollary}
\newtheorem{example}{Example}[section]

\newtheorem{claim}{Claim}
\newtheorem*{remark*}{Remarks}

\def\CI{\text{CI}}

\def\CO{\text{CO}}
\def\SK{\text{SK}}

\def\cP{\mathcal{P}}
\def\cS{\mathcal{S}}
\def\cE{\mathcal{E}}
\def\cF{\mathcal{F}}
\def\cG{\mathcal{G}}
\def\cM{\mathcal{M}}
\def\cV{\mathcal{V}}
\def\cX{\mathcal{X}}
\def\BL{\textbf{L}}
\def\BK{\textbf{K}}
\def\BF{\textbf{F}}
\def\cH{\mathcal{H}}
\def\Ixm{\textbf{I}(X_{\mathcal{M}})}
\def\Itxm{\textbf{I}_T(X_{\mathcal{M}})}
\def\RTmin{R_{T}^{(\min)}}
\def\cR{\mathcal{R}}
\def\sfR{\mathsf{R}}
\def\cC{\mathcal{C}}
\def\STS{\text{STS}}
\def\Ixml{\textbf{I}(X_{\cM}^n|\textbf{L})}

\def\N{\mathbb{N}}

\def\beq{\begin{equation}}
\def\eeq{\end{equation}}


\title{On the Public Communication Needed \\ to Achieve SK Capacity \\ in the Multiterminal Source Model$^*$}

\author{
\IEEEauthorblockN{Manuj Mukherjee$^\dag$} \hspace{1cm} \and \IEEEauthorblockN{Navin Kashyap$^\dag$} \hspace {1cm} \and \IEEEauthorblockN{Yogesh Sankarasubramaniam$^\ddag$}
}

\begin{document}

\maketitle

\renewcommand{\thefootnote}{}
\footnotetext{
\noindent $^\dag$M.\ Mukherjee and N.\ Kashyap are with the 
Department of Electrical Communication Engineering, 
Indian Institute of Science, Bangalore. Email: \{manuj,nkashyap\}@ece.iisc.ernet.in.

\smallskip

$^\ddag$Email: yogesh@gatech.edu
}

\renewcommand{\thefootnote}{}
\footnotetext{$^*$This work was supported in part by a Swarnajayanti Fellowship granted by the Department of Science and Technology, India. Parts of this work were presented at the 2014 IEEE International Symposium on Information Theory (ISIT 2014), Honolulu, Hawaii, USA, and at ISIT 2015, Hong Kong, China.}

\renewcommand{\thefootnote}{\arabic{footnote}}

\begin{abstract}
The focus of this paper is on the public communication required for generating a maximal-rate secret key (SK) within the multiterminal source model of Csisz{\'a}r and Narayan. Building on the prior work of Tyagi for the two-terminal scenario, we derive a lower bound on the communication complexity, $R_{\SK}$, defined to be the minimum rate of public communication needed to generate a maximal-rate SK. It is well known that the minimum rate of communication for omniscience, denoted by $R_{\CO}$, is an upper bound on $R_{\SK}$. For the class of pairwise independent network (PIN) models defined on uniform hypergraphs, we show that a certain ``Type $\cS$'' condition, which is verifiable in polynomial time, guarantees that our lower bound on $R_{\SK}$ meets the $R_{\CO}$ upper bound. Thus, PIN models satisfying our condition are $R_{\SK}$-maximal, meaning that the upper bound $R_{\SK} \le R_{\CO}$ holds with equality. This allows us to explicitly evaluate $R_{\SK}$ for such PIN models.  We also give several examples of PIN models that satisfy our Type $\cS$ condition. 
Finally, we prove that for an arbitrary multiterminal source model, a stricter version of our Type $\cS$ condition implies that communication from \emph{all} terminals (``omnivocality'') is needed for establishing a SK of maximum rate. For three-terminal source models, the converse is also true: omnivocality is needed for generating a maximal-rate SK only if the strict Type $\cS$ condition is satisfied. Counterexamples exist that show that the converse is not true in general for source models with four or more terminals.
\end{abstract}

\section{Introduction}\label{sec:intro}

Maurer \cite{Maurer93} and Ahlswede and Csisz{\'a}r \cite{AC93} independently introduced the problem of generating a secret key (SK) for a pair of terminals observing distinct, albeit correlated, components of a discrete memoryless multi-component source. The SK is to be generated by communicating interactively over a noiseless public channel, and it is to be kept secure from all passive eavesdroppers having access to the public channel. The problem was subsequently extended to a multiterminal setting by Csisz{\'a}r and Narayan \cite{CN04}. The Csisz{\'a}r-Narayan model is now commonly referred to as the \emph{multiterminal source model}. The quantity of interest in these papers, and indeed in much of the literature that followed on this topic \cite{CN08}, \cite{NYBNR10}, \cite{NN10},  \cite{GA10}, is the \emph{secret key capacity}, i.e., the supremum of the rates of SK that can be generated within this model.  In the two-terminal case, an exact characterization of the SK capacity can be found in the original works of Maurer \cite{Maurer93} and Ahlswede and Csisz{\'a}r \cite{AC93}. Csisz{\'a}r and Narayan \cite{CN04} later gave an elegant single-letter expression for SK capacity in the general multiterminal source model. 

In all the aforementioned studies, the noiseless public channel is viewed as an unlimited free resource, and  no attempt is made to restrict the amount of communication sent through it. Indeed, Csisz{\'a}r and Narayan \cite[Section VI]{CN04} left open the question of determining the minimum rate of interactive public communication needed to achieve SK capacity. Tyagi \cite{Tyagi13} addressed this question in the two-terminal case, and gave an exact, although difficult to compute, characterization of the minimum rate of communication.  In this paper, we extend some of Tyagi's ideas to the multiterminal setting, and apply them to give an explicit answer to Csisz{\'a}r and Narayan's open question in some interesting special cases, namely, certain instances of the so-called pairwise independent network (PIN) model \cite{NYBNR10}, \cite{NN10}.

\subsection{Our Contributions}\label{sec:contrib}

The primary focus of our work is on the following question: \emph{What is the minimum rate of interactive public communication required to achieve SK capacity in the multiterminal source model?} We shall refer to the minimum rate of public communication as the \emph{communication complexity}\footnote{Our use of ``communication complexity'' differs from the use prevalent in the theoretical computer science literature where, following \cite{Yao79}, it refers to the total amount of communication, in bits, required to perform some distributed computation.} of achieving SK capacity, denoted by $R_{\SK}$. Csisz{\'a}r and Narayan's original proof of the achievability of their single-letter expression for SK capacity \cite[Theorem~1]{CN04} used a (non-interactive) communication protocol that enabled ``omniscience'' at all terminals, which means that the communication over the public channel allows each terminal to recover the observations of all the other terminals. It follows from their results that $R_{\SK}$ is always upper bounded by $R_{\CO}$, the minimum rate of interactive public communication required to achieve omniscience at all terminals. Furthermore, $R_{\CO}$ is given by the solution to a linear program \cite[Proposition~1]{CN04} (see \eqref{def:RCO} in Section~\ref{sec:prelim}), so it can be computed efficiently.  On the other hand, it is also well known that omniscience is not necessary for maximal-rate SK generation --- see the remark following Theorem~1 in \cite{CN04}, and also the proof of Theorem~3.2 in \cite{CN08}. Indeed, it is not difficult to find examples where $R_{SK} \ll R_{\CO}$; our Example~\ref{ex:omni} is one such. Thus, the sources for which we have $R_{\SK} = R_{\CO}$ constitute the worst-case sources in terms of communication complexity; we call such sources \emph{$R_{\SK}$-maximal}. We give a sufficient condition for a PIN model defined on a uniform hypergraph to be $R_{\SK}$-maximal, and show that PIN models satisfying this condition do exist. For these PIN models, it is easy to explicitly compute $R_{\CO}$, which then gives us an exact expression for $R_{\SK}$. This is the first (non-trivial) explicit evaluation of $R_{\SK}$ to be found in the literature, for a multiterminal source model with more than two terminals. Interestingly, for PIN models defined on ordinary graphs (i.e., each edge is incident with only two vertices), our sufficient condition is also necessary, which gives us an exact characterization, decidable in polynomial time, of ordinary graph PIN models that are $R_{\SK}$-maximal.

The main tool in our analysis is a lower bound on $R_{\SK}$ obtained via a multiterminal extension of Tyagi's work \cite{Tyagi13}. Tyagi's characterization of $R_{\SK}$ for two terminals \cite[Theorem~3]{Tyagi13} was in terms of the minimum rate of an \emph{interactive common information}, a type of Wyner common information (see \cite{Wyner75}). In order to appropriately generalize these ideas, we propose extensions of conditional mutual information and Wyner common information to the setting of more than two terminals. With these new multiterminal definitions in hand, we essentially follow the approach in \cite{Tyagi13} to derive a lower bound on $R_{\SK}$ in terms of the minimum rate of a multiterminal analogue of interactive common information. As in the case of Tyagi's result for two terminals, an exact evaluation of this bound appears to be a difficult task even for simple source models such as Markov chains. However, unlike the two-terminal result, we are unable to show that our lower bound to $R_{\SK}$ is tight in general. Luckily, we are able to evaluate this bound exactly for certain PIN models as mentioned above, and the bound turns out to be tight in these cases as it matches the $R_{\CO}$ upper bound. 

A secondary line of investigation carried out in this paper concerns the \emph{nature} of the public communication protocols that achieve SK capacity. It is well known that, in order to generate a maximal-rate SK in the two-terminal model, it is sufficient for only one terminal to communicate \cite{Maurer93, AC93, CN04}. All this terminal has to do is convey its local observations to the other terminal at the least possible rate of communication required to do so. Thus, in the two-terminal setup, it is \emph{never necessary} for both terminals to communicate to generate a capacity-achieving SK. Even in the case of more than two terminals, there are examples wherein not all terminals need to communicate --- again, see the remark following Theorem~1 in \cite{CN04}. However, as we will show in this paper, there are plenty of other examples where all terminals \emph{must} communicate in order to achieve SK capacity. We coin the term ``omnivocality'' to describe the state when all terminals communicate. The problem of interest to us then is the following: \emph{Characterize the instances of the multiterminal source model in which omnivocality is necessary for maximal-rate SK generation}. In this paper, we report some partial progress towards such a characterization. 

In \cite{GA10},  Gohari and Anantharam considered the scenario where a subset of terminals is required to remain silent, and yet all the terminals must agree upon an SK using only the communication from terminals that are allowed to talk. They derived a linear programming formulation for the maximum SK rate achievable in this scenario. Observe that omnivocality is necessary for achieving SK capacity in a source model iff any one terminal not being allowed to communicate strictly lowers the maximum achievable SK rate for that model. This establishes a correspondence between the omnivocality condition and the Gohari-Anantharam scenario involving silent terminals. We use this correspondence to identify a sufficient condition under which omnivocality is necessary for achieving SK capacity in a source model with at least three terminals. We further show that in the case of exactly three terminals, our sufficient condition is also necessary. Based on this evidence, we had conjectured in \cite{MKS14} that our condition was always necessary and sufficient. Unfortunately, a counterexample has been given by Chan et al.\ \cite{Chan14} that shows that the condition is not necessary for four or more terminals \cite{Chan14}. This has also been independently observed by Zhang et al.\ in \cite{ZLL15}. 

\subsection{Related work} \label{sec:rel}

Besides the work of Tyagi \cite{Tyagi13} that we have already mentioned, a few other recent papers have considered the SK generation problem from a communication complexity angle. The line of work that is perhaps most directly related to ours is that of Courtade and co-authors \cite{CW14,CHISIT14,CH14}, which considers the \emph{coded cooperative data exchange (CCDE)} problem \cite{ESS10} with the goal of generating an SK. This is, in essence, a single-shot version of the SK generation problem defined on hypergraph PIN models. Here, by ``single-shot'', we mean that each terminal sees only one realization of the component of the source available to it, as opposed to the Csisz{\'a}r-Narayan setup within which each terminal sees a sequence of  i.i.d.\ realizations. The single-shot SK capacity, i.e., the maximum size (as opposed to maximum rate) of an SK that can be generated, was evaluated in \cite[Theorem~6]{CW14}. The capacity achieving protocol used is a one-shot version of the communication-for-omniscience protocol of \cite{CN04}. The follow-up works \cite{CHISIT14} and \cite{CH14} addressed the issue of determining the minimum amount (again, as opposed to rate) of communication required to generate an SK of a particular size. However, this is done under a additional linearity requirement on the communication, i.e., the communication is required to be a linear function of the source outputs.  Theorem 11 of \cite{CH14} then gives an explicit characterization of what could be rightfully called the \emph{linear communication complexity} of generating an SK of a given size, in terms of the minimum number of hyperedges of an ``inherently $\tau$-connected subhypergraph''. It was further shown in \cite[Theorem 4]{CH14} that there exist hypergraph PIN models for which non-linear communication protocols for achieving (single-shot) SK capacity require lower amounts of communication than the linear communication complexity. It should be emphasized that our results do not make any linearity assumptions on the public communication.

In \cite{LCV15}, Liu et al.\ study public communication for SK generation in another variant of the multiterminal source model. The authors consider $m+1$ terminals observing correlated i.i.d. sources. One terminal acts as the communicator, sending information to each of the remaining $m$ terminals via $m$ different noiseless channels. A communication rate-key rate tradeoff region is identified for this model. However, the model is of somewhat limited interest to us because of the fact that each of the $m$ different links have individual eavesdroppers, but co-operation is not allowed among them. Secrecy is no longer guaranteed if the eavesdroppers co-operate. Therefore, the problem setup is more of an amalgam of two-terminal problems rather than a truly multiterminal setup. 

Communication complexity has also been studied for two-terminal interactive function computation without a secrecy constraint. Braverman and Rao in \cite[Theorem~II.3]{BR14} gave an exact characterization of the communication complexity for two-party interactive function computation.\footnote{To be precise, the quantity which we are calling communication complexity is referred to as \emph{amortized} communication complexity in \cite{BR14}.} The communication complexity is shown to be equal to an information-theoretic quantity called the \emph{internal information cost}. In a follow-up work, Braverman and Schneider provide an algorithm to compute the internal information cost for binary function computation --- see Theorem~1.1 of \cite{BS15}. 

Turning our attention to the topic of omnivocality originally studied in our paper \cite{MKS14}, Zhang et al.\ \cite{ZLL15} have recently obtained some new results. In particular, their Theorem~5 gives a sufficient condition for when a particular terminal \emph{must} communicate in any SK-capacity-achieving protocol. Our original sufficient condition for omnivocality \cite[Theorem~4]{MKS14}  (Theorem~\ref{th:mge3} in this paper) can now be obtained as a consequence of Zhang et al.'s Theorem~5. In addition, Theorem~4 of \cite{ZLL15} provides a sufficient condition that guarantees the existence of an SK-capacity-achieving protocol within which a given terminal can remain silent.

\subsection{Organization} \label{sec:org}
 
The paper is organized as follows. In Section~\ref{sec:prelim}, we provide the definitions and preliminaries needed for the rest of the paper. In Section~\ref{sec:commcomp}, we state and prove our lower bound on the communication complexity $R_{\SK}$. In Section~\ref{sec:RSKmax}, we identify a class of uniform hypergraph PIN models which are $R_{\SK}$-maximal. Section~\ref{sec:omnivocal} identifies a condition that makes omnivocality necessary for achieving SK capacity. The issue of verifying whether that condition holds for a given multiterminal source model is addressed in Section~\ref{sec:singleton}. Finally, Section~\ref{sec:conc} summarizes our results and presents some open problems. To preserve the flow of the exposition, the proofs of some of our results have been moved to appendices.

\section{Preliminaries} \label{sec:prelim}
 We start by giving a mathematical description of the multiterminal source model of \cite{CN04}. Throughout, we use $\N$ to denote the set of positive integers. Consider a set of $m$ terminals denoted by $\mathcal{M}=\{1,2, \ldots, m\}$. Each terminal $i \in \mathcal{M}$ observes $n$ i.i.d.\ repetitions of the random variable $X_i$ taking values in the finite set $\mathcal{X}_i$. The $n$ i.i.d.\ copies of the random variable are denoted by $X_i^n$. For any subset $A\subseteq \mathcal{M}$, $X_A$ and $X_A^n$ denote the collections of random variables $(X_i:i \in A)$ and $(X_i^n: i \in A)$, respectively. The terminals communicate through a noiseless public channel, any communication sent through which is accessible to all terminals and to potential eavesdroppers as well.
An \emph{interactive communication} is a communication $\textbf{f}=(f_1,f_2,\cdots,f_r)$ with finitely many transmissions $f_j$, in which any transmission sent by the $i$th terminal is a deterministic function of $X_i^n$ and all the previous communication, i.e., if terminal $i$ transmits $f_j$, then $f_j$ is a function only of $X_i^n$ and $f_1,\ldots,f_{j-1}$.   We denote the random variable associated with $\textbf{f}$ by $\textbf{F}$; the support of $\textbf{F}$ is a finite set $\cF$. The rate of the communication $\textbf{F}$ is defined as $\frac{1}{n}\log|\cF|$. Note that $\textbf{f}$, $\textbf{F}$ and $\cF$ implicitly depend on $n$.

\begin{definition}
\label{def:CR}
A \emph{common randomness (CR)} obtained from an interactive communication $\textbf{F}$ is a sequence of random variables $\textbf{J}^{(n)}$, $n\in\N$, which are functions of $X_{\mathcal{M}}^n$, such that for any $0<\epsilon<1$ and for all sufficiently large $n$, there exist $J_i=J_i(X_i^n,\textbf{F})$, $i = 1,2,\ldots,m$, satisfying $\text{Pr}[J_1=J_2=\cdots=J_m=\textbf{J}^{(n)}] \geq 1-\epsilon$.
\end{definition}

When $\textbf{J}^{(n)}=X_{\cM}^n$ we say that the terminals in $\cM$ have attained \emph{omniscience}. The communication $\textbf{F}$ which achieves this is called a \emph{communication for omniscience}. It was shown in Proposition~1 of \cite{CN04} that the minimum rate achievable by a communication for omniscience, denoted by $R_{\CO}$, is equal to $\displaystyle\min_{(R_1,R_2,\ldots,R_m)\in\mathcal{R}_{\CO}}\sum_{i=1}^mR_i$, where the region $\mathcal{R}_{\CO}$ is given by
\begin{equation}
\mathcal{R}_{\CO}=\biggl\{(R_1,R_2,\ldots,R_m): \sum_{i\in B}R_i\geq H(X_B|X_{B^c}), B\subsetneq\cM\biggr\}.
\label{def:RCO}
\end{equation}
Henceforth, we will refer to $R_{\CO}$ as the ``minimum rate of communication for omniscience''. Further, it can be seen from the description of $\mathcal{R}_{\CO}$ that $R_{\CO}<\infty$. More precisely, note that the point $(R_1,R_2,\ldots,R_m)$ defined by $R_i=H(X_i)$ for all $i$ lies in $\mathcal{R}_{\CO}$, and hence $R_{\CO}\leq\sum_{i=1}^mH(X_i)<\infty$.

\begin{definition}
\label{def:SK}
A real number $R\geq 0$ is an \emph{achievable SK rate} if there exists a CR $\textbf{K}^{(n)}$, $n \in \N$, obtained from an interactive communication $\textbf{F}$ satisfying, for any $\epsilon > 0$ and for all sufficiently large $n$, $I(\textbf{K}^{(n)};\textbf{F})\leq \epsilon$ and $\frac{1}{n}H(\textbf{K}^{(n)}) \geq R-\epsilon$. The \emph{SK capacity} is defined to be the supremum among all achievable rates.  The CR $\textbf{K}^{(n)}$ is called a \emph{secret key (SK)}. 
\end{definition}

From now on, we will drop the superscript $(n)$ from both $\textbf{J}^{(n)}$ and $\textbf{K}^{(n)}$ to keep the notation simple. 

The SK capacity can be expressed as \cite[Theorem~1]{CN04}
\begin{equation}
\cC(\cM)=H(X_{\cM})-R_{\CO}.\label{omni}
\end{equation}
Other equivalent characterizations of $\cC(\cM)$ exist in the literature. Csisz{\'a}r and Narayan observed in \cite[Section~V]{CN04} that since the linear program in \eqref{def:RCO} has an optimal solution, by strong duality, the dual linear program also has the same optimal value. Using this fact, the expression for SK capacity can be rewritten as
\begin{equation}
\cC(\cM) \triangleq H(X_{\mathcal{M}})-\max_{\lambda \in \Lambda} \sum_{B \in \mathcal{B}} \lambda_B H(X_B| X_{B^c}) \label{skcapacity}
\end{equation}
where $\mathcal{B}$ is the set of all non-empty, proper subsets of $\mathcal{M}$, and $\Lambda$ is the set of all \emph{fractional partitions} defined on $\mathcal{B}$. To be precise, any $\lambda=(\lambda_B: B\in \mathcal{B})\in \Lambda$ satisfies $\lambda_B\geq 0$, for all $B\in \mathcal{B}$, and $\sum_{B:i\in B}\lambda_B=1$, for all $i\in \mathcal{M}$. It is a fact that $H(X_{\mathcal{M}})-\max_{\lambda \in \Lambda} \sum_{B \in \mathcal{B}} \lambda_B H(X_B| X_{B^c}) \ge 0$ \cite[Proposition~II]{MT10}.
 
Another characterization of SK capacity can be given via the notion of \emph{multipartite information} defined as follows:
\begin{equation}
\Ixm\triangleq\min_{\cP}\Delta(\cP)
\label{eq:I}
\end{equation}
with $\Delta(\cP)\triangleq \frac{1}{|\cP|-1} \left[\sum_{A \in \cP} H(X_A) - H(X_{\cM}) \right]$ and the minimum being taken over all partitions $\cP=\{A_1,A_2,\cdots,A_{\ell}\}$ of $\cM$, of size $\ell \ge 2$. Note that $\mathbf{I}(X_{\cM}^n)=n\Ixm$. The quantity $\Ixm$ is a generalization of the mutual information to a multiterminal setting; indeed, for $m=2$, we have $\textbf{I}(X_1,X_2)=I(X_1;X_2)$. Note that $\mathbf{I}(X_{\cM}) = 0$ iff there exists a partition $\cP = \{A_1,A_2,\ldots,A_\ell\}$ of $\cM$, with $\ell \ge 2$, such that the random variables $X_{A_1}, X_{A_2},\ldots,X_{A_\ell}$ are mutually independent. It was shown in Theorem~1.1 of \cite{CZ10} and Theorem~4.1 of \cite{Chan14} that
\begin{equation}
\cC(\cM)=\Ixm.  \label{capeqmi}
\end{equation}
For the rest of this paper we shall use $\cC(\cM)$ and $\Ixm$ interchangeably. 

The partition $\bigl\{\{1\},\{2\},\ldots,\{m\}\bigr\}$ consisting of $m$ singleton cells will play a special role in the later sections of this paper; we call this the \emph{singleton partition} and denote it by $\cS$. The sources where \emph{$\cS$ is a minimizer for \eqref{eq:I}} will henceforth be referred to as \emph{Type $\cS$ sources}. If \emph{$\cS$ is the unique minimizer for \eqref{eq:I}} then we call such a source \emph{strict Type $\cS$}. A connection between the optimal fractional partition in \eqref{skcapacity} and the optimal partition in \eqref{eq:I} was pointed out in \cite{CZ10}. For any partition $\cP$ of $\cM$ define $\lambda^{(\cP)}$ as follows: $\lambda^{(\cP)}_B\triangleq\frac{\mathbb{I}\{B^c\in\cP\}}{|\cP|-1}$, for all $B\in\mathcal{B}$ and $\mathbb{I}\{.\}$ is the indicator function. It is easy to check that $\lambda^{(\cP)}$ is a fractional partition on $\mathcal{B}$, and $\Delta(\cP)=H(X_{\cM})-\sum_{B\in\mathcal{B}}\lambda^{(\cP)}_BH(X_B|X_{B^c})$. Hence, for any partition $\cP^*$ which is a minimizer in \eqref{eq:I}, the corresponding $\lambda^{(\cP^*)}$ is an optimal fractional partition for \eqref{skcapacity}.

We are now in a position to make the notion of communication complexity rigorous.

\begin{definition}
\label{def:RSKr}
A real number $R\geq 0$ is said to be an \emph{achievable rate of interactive communication for maximal-rate SK} if for all $\epsilon > 0$ and for all sufficiently large $n$, there exist \emph{(i)}~an interactive communication $\textbf{F}$ satisfying $\frac{1}{n}\log|\cF| \; \leq R+\epsilon$, and \emph{(ii)}~an SK $\textbf{K}$ obtained from $\textbf{F}$ such that $\frac{1}{n}H(\textbf{K})\geq \textbf{I}(X_{\mathcal{M}})-\epsilon$.

The infimum among all such achievable rates is called the \emph{communication complexity of achieving SK capacity}, denoted by $R_{\SK}$.
\end{definition}

The proof of Theorem~1 in \cite{CN04} shows that there exists an interactive communication $\textbf{F}$ that enables omniscience at all terminals and from which a maximal-rate SK can be obtained. Therefore, we have $R_{\SK}\leq R_{\CO}< \infty$. Hence, in terms of communication complexity, the sources that satisfy $R_{\SK}=R_{\CO}$ are the worst-case sources. We will henceforth refer to them as $R_{\SK}$-\emph{maximal} sources. Such sources do exist, as will be shown in Sections~\ref{sec:RSKmax} and \ref{sec:singleton}.

Tyagi gave a characterization of $R_{\SK}$ in the case of a two-terminal model \cite[Theorem~3]{Tyagi13}.\footnote{It should be clarified that Tyagi's characterization works only for ``weak'' SKs, which are defined as in our Definition~\ref{def:SK}, except that the condition $I(\textbf{K};\textbf{F}) \le \epsilon$ is weakened to $\frac{1}{n} I(\textbf{K};\textbf{F}) \le \epsilon$.  Using our definitions, Tyagi's arguments would only yield a two-terminal analogue of our Theorem~\ref{th:commcomp}.\label{fn:weakSK}} The key to his characterization was the observation that conditioned on a maximal-rate SK $\textbf{K}$ and the communication $\textbf{F}$ from which $\textbf{K}$ is extracted, the observations of the two terminals are ``almost'' independent: $\frac{1}{n}I(X_1^n;X_2^n | \textbf{K},\textbf{F}) \to 0$ as $n \to \infty$. Thus, the pair $(\textbf{K},\textbf{F})$ is a Wyner common information \cite{Wyner75} for the randomness at the terminals. Tyagi used the term ``interactive common information'' to denote any Wyner common information that consisted of a CR along with the interactive communication achieving it. 

We extend Tyagi's ideas to the setting of $m\geq 2$ terminals. We first extend the definition of conditional mutual information to the multiterminal setting. We will refer to the multiterminal analogue of the conditional mutual information as the \emph{conditional multipartite information}, and denote the conditional multipartite information of $X_{\cM}$ given a random variable $\BL$ by $\textbf{I}(X_{\cM}|\BL)$. As a natural extension of \eqref{eq:I}, we could define $\textbf{I}(X_{\cM}|\BL)$ as $\min_{\cP}\Delta(\cP|\BL)$, where $\Delta(\cP|\BL)\triangleq\frac{1}{|\cP|-1}\biggl[\sum_{A\in\cP}H(X_A|\BL)-H(X_{\cM}|\BL)\biggr]$. Note that 
\begin{equation}
\Delta(\cP|\BL)=H(X_{\cM}|\BL)-\sum_{B\in\mathcal{B}}\lambda^{(\cP)}_BH(X_B|X_{B^c},\BL). \label{lambdap}
\end{equation}
Using this definition, we can indeed generalize Tyagi's arguments to the case of $m\geq 2$ terminals and obtain a lower bound on $R_{\SK}$. It turns out, however, that a stronger lower bound can be obtained by defining $\mathbf{I}(X_{\cM}|\mathbf{L})$ to be equal to $\Delta(\cP^*|\BL)$, where $\cP^*$ is any partition that achieves the minimum in \eqref{eq:I}. One complication now is that there could be more than one choice of $\cP^*$ that achieves the minimum in \eqref{eq:I}, and two distinct choices of $\cP^*$ could yield different values for $\Delta(\cP^*|\BL)$. We simply choose the one that results in the largest value for $\mathbf{I}(X_{\cM}|\mathbf{L})$. Thus, we define 
\begin{equation} 
\textbf{I}(X_{\cM}|\BL)\triangleq\displaystyle\max_{\cP^*\in\text{argmin}_{\cP}\Delta(\cP)}\Delta(\cP^*|\BL). \label{cmi}
\end{equation}
The definition of $\textbf{I}(X_{\cM}|\BL)$ applies to any collection of jointly distributed random variables $X_{\cM}$; in particular it applies to the collection $X_{\cM}^n$. To be clear, 
$$
\Ixml\triangleq\displaystyle\max_{\cP^*\in\text{argmin}\Delta(\cP)}\frac{1}{|\cP^*|-1}\biggl[\sum_{A\in\cP^*}H(X_A^n|\BL)-H(X_{\cM}^n|\BL)\biggr].
$$

We point out an important consequence of our definition of $\textbf{I}(X_{\cM}|\BL)$. We have $\textbf{I}(X_{\cM}|\BL)=0$ iff for any partition $\cP^*$ which achieves the minimum in \eqref{eq:I}, the random variables $X_{\cM}$ are conditionally independent across the cells of $\cP^*$ given $\BL$. We are now in a position to extend the notion of the Wyner common information to a multipartite setting.\footnote{In fact, there exist other ways of generalizing Wyner common information to the multiterminal setting (see \cite{XLC13} and \cite{TSP11}).} 

\begin{definition}
\label{def:CI}
A \emph{(multiterminal) Wyner common information ($\CI_W$)} for $X_{\mathcal{M}}$ is a sequence of finite-valued functions $\textbf{L}^{(n)}=\textbf{L}^{(n)}(X_{\mathcal{M}}^n)$ such that $\frac{1}{n}\textbf{I}(X_{\mathcal{M}}^n|\textbf{L}^{(n)}) \to 0$ as $n \to \infty$. An \emph{interactive common information ($\CI$)} for $X_{\mathcal{M}}$ is a Wyner common information of the form $\textbf{L}^{(n)} = (\textbf{J},\textbf{F})$, where $\textbf{F}$ is an interactive communication and $\textbf{J}$ is a CR obtained from $\textbf{F}$. 
\end{definition}

Again, we shall drop the superscript $(n)$ from $\textbf{L}^{(n)}$ for notational simplicity. Wyner common informations $\textbf{L}$ do exist: for example, the identity map $\textbf{L}=X_{\mathcal{M}}^n$ is a $\CI_W$. To see that $\CI$s $(\textbf{J},\textbf{F})$ also exist, observe that $\textbf{J}=X_{\mathcal{M}}^n$ and a communication $\textbf{F}$ enabling omniscience constitute a $\CI_W$, and hence, a $\CI$.

\begin{definition}
\label{def:CIrate}
A real number $R\geq 0$ is an \emph{achievable $\CI_W$ (resp.\ $\CI$) rate} if there exists a $\CI_W$ $\textbf{L}$ (resp.\ a $\CI$ $\textbf{L} = (\textbf{J},\textbf{F})$) such that for all $\epsilon > 0$, we have
$\frac{1}{n}H(\textbf{L})\leq R+\epsilon$ for all sufficiently large $n$. We denote the infimum among all achievable $\CI_W$ (resp.\ $\CI$) rates by $\CI_W(X_{\mathcal{M}})$ (resp. \ $\CI(X_{\mathcal{M}})$).
\end{definition}

The proposition below records the relationships between some of the information-theoretic quantities defined so far. 

\begin{proposition} For a multiterminal source $X_{\cM}^n$, we have 
$H(X_{\mathcal{M}}) \ge \CI(X_{\mathcal{M}})\geq \CI_W(X_{\mathcal{M}})\geq \textbf{I}(X_{\mathcal{M}})$.
\label{prop:ineqs}
\end{proposition}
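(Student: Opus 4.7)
The plan is to establish the three inequalities in turn; the first two are essentially immediate from the definitions, while the third requires a short computation.

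For $H(X_{\cM}) \ge \CI(X_{\cM})$, I would exhibit an explicit interactive common information of rate $H(X_{\cM})$. The construction is already hinted at in the paragraph after Definition~\ref{def:CI}: take $\textbf{J} = X_{\cM}^n$ together with any interactive communication $\textbf{F}$ achieving omniscience. Since $\textbf{F}$ is a deterministic function of $X_{\cM}^n$, the resulting $\CI$ has rate $\frac{1}{n}H(\textbf{J},\textbf{F}) = \frac{1}{n}H(X_{\cM}^n) = H(X_{\cM})$, so $H(X_{\cM})$ is an achievable $\CI$ rate. The inequality $\CI(X_{\cM}) \ge \CI_W(X_{\cM})$ then reduces to the observation that every $\CI$ is by definition a $\CI_W$, so $\CI_W(X_{\cM})$ is an infimum over a strictly larger class of objects and can only be smaller.

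The heart of the proposition is $\CI_W(X_{\cM}) \ge \textbf{I}(X_{\cM})$. Let $\textbf{L}$ be any Wyner common information and let $\cP^* = \{A_1,\ldots,A_\ell\}$ achieve the minimum in \eqref{eq:I}; because $\Delta(\cdot)$ scales linearly under passage to the i.i.d.\ product $X_{\cM}^n$, the same $\cP^*$ is a minimizer when evaluated on $X_{\cM}^n$ and satisfies $\Delta(\cP^*) = n\,\textbf{I}(X_{\cM})$. Subtracting the definitions of $\Delta(\cP^*|\textbf{L})$ and $\Delta(\cP^*)$ yields the identity
$$\Delta(\cP^*|\textbf{L}) - \Delta(\cP^*) = \frac{1}{\ell-1}\left[I(X_{\cM}^n;\textbf{L}) - \sum_{A \in \cP^*} I(X_A^n;\textbf{L})\right].$$
Since $\textbf{L}$ is a function of $X_{\cM}^n$, we have $I(X_{\cM}^n;\textbf{L}) = H(\textbf{L})$, and the crude bound $\sum_A I(X_A^n;\textbf{L}) \le \ell\,H(\textbf{L})$ then gives $\Delta(\cP^*|\textbf{L}) \ge n\,\textbf{I}(X_{\cM}) - H(\textbf{L})$. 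The ``max'' in the definition \eqref{cmi} of $\textbf{I}(X_{\cM}^n|\textbf{L})$ lets us bound $\textbf{I}(X_{\cM}^n|\textbf{L}) \ge \Delta(\cP^*|\textbf{L})$ for our particular choice of $\cP^*$, so rearranging and dividing by $n$ produces
$$\frac{1}{n}H(\textbf{L}) \ge \textbf{I}(X_{\cM}) - \frac{1}{n}\textbf{I}(X_{\cM}^n|\textbf{L}).$$
The $\CI_W$ hypothesis drives the last term to zero, and letting $n \to \infty$ shows that every achievable $\CI_W$ rate is at least $\textbf{I}(X_{\cM})$.

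No real obstacle arises; the one subtle point worth flagging is that the ``max over minimizing partitions'' convention in \eqref{cmi} is exactly what makes the step $\textbf{I}(X_{\cM}^n|\textbf{L}) \ge \Delta(\cP^*|\textbf{L})$ legal for an individually chosen minimizer $\cP^*$. Had \eqref{cmi} instead been phrased with a ``min'', the same calculation would only deliver the weaker bound $\min_{\cP^*} \Delta(\cP^*|\textbf{L}) \ge n\,\textbf{I}(X_{\cM}) - H(\textbf{L})$, and the inequality $\CI_W(X_{\cM}) \ge \textbf{I}(X_{\cM})$ would not follow. This is precisely the reason the authors built their definition with a maximum rather than a minimum.
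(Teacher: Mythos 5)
Your proof is correct and follows the same route as the paper's: exhibit the omniscience-based $\CI$ of rate $H(X_{\cM})$ for the first inequality, note that every $\CI$ is a $\CI_W$ for the second, and for the third derive the key inequality $\frac{1}{n}H(\textbf{L}) \ge \textbf{I}(X_{\cM}) - \frac{1}{n}\textbf{I}(X_{\cM}^n|\textbf{L})$ by expanding $\Delta(\cP^*|\textbf{L}) - \Delta(\cP^*)$ and discarding a non-negative term. Your computation in the partition representation (summing over cells $A \in \cP^*$) is the same as the paper's in the fractional-partition representation (summing $\lambda^{(\cP^*)}_B$ over $B \in \cB$): the term you drop via $\sum_A I(X_A^n;\textbf{L}) \le \ell H(\textbf{L})$ is exactly the paper's $\sum_{A\in\cP^*} H(\textbf{L}|X_A^n) \ge 0$.

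One small caution about your closing remark. The bound $\Delta(\cP'|\textbf{L}) \ge n\,\textbf{I}(X_{\cM}) - H(\textbf{L})$ is obtained for \emph{every} minimizer $\cP'$ of \eqref{eq:I}, since nothing in the calculation singles out a particular one. Consequently, even if \eqref{cmi} had been phrased with a ``min'' over minimizers, the resulting quantity would still satisfy the same lower bound, and the conclusion $\CI_W(X_{\cM}) \ge \textbf{I}(X_{\cM})$ would go through unchanged. As the paper itself explains just before \eqref{cmi}, the ``max'' is chosen to strengthen the lower bound on $R_{\SK}$ in Theorem~\ref{th:commcomp}, not because this Proposition would otherwise fail.
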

\begin{IEEEproof}
The first inequality is due to the fact that there exists a $\CI$ of rate $H(X_{\mathcal{M}})$. The second follows from the fact that a $\CI$ is a special type of $\CI_W$, so that $\CI(X_{\mathcal{M}}) \ge \CI_W(X_{\mathcal{M}})$.

For the last inequality, we start by observing that for any function $\textbf{L}$ of $X_{\mathcal{M}}^n$ and any partition $\cP^*$ of $\cM$ which is a minimizer in \eqref{eq:I}, we have 
\begin{align}
\textbf{I}(X_{\mathcal{M}}^n)-\textbf{I}(X_{\mathcal{M}}^n|\textbf{L}) 
    &\leq I(X_{\mathcal{M}}^n;\textbf{L})-\sum_{B\in\mathcal{B}}\lambda_B^{(\cP^*)}I(X_B^n;\textbf{L}|X_{B^c}^n) \label{prop:ineqs:1}\\
   &=H(\textbf{L})-\sum_{B\in\mathcal{B}}\lambda_B^{(\cP^*)}H(\textbf{L}|X_{B^c}^n) \notag
\end{align}
where \eqref{prop:ineqs:1} follows from \eqref{lambdap} and \eqref{cmi} and hence, 
\begin{equation}
\frac{1}{n}H(\textbf{L})\geq\textbf{I}(X_{\mathcal{M}})-\frac{1}{n}\textbf{I}(X_{\mathcal{M}}^n|\textbf{L}).
\label{eq:a}
\end{equation} 
Now, if $\textbf{L}$ is any $\CI_W$ of rate $R$, then by Definitions~\ref{def:CI} and \ref{def:CIrate}, for every $\epsilon>0$, we have $\frac{1}{n}H(\textbf{L}) \le R + \epsilon$ and $\frac{1}{n}\textbf{I}(X_{\mathcal{M}}^n|\textbf{L})\leq \epsilon$ for all sufficiently large $n$. Thus, in conjunction with \eqref{eq:a}, we have $R + \epsilon \ge \frac{1}{n}H(\textbf{L}) \ge \textbf{I}(X_{\mathcal{M}})-\epsilon$ for all sufficiently large $n$. In particular, $R+\epsilon \ge \textbf{I}(X_{\mathcal{M}})-\epsilon$ holds for any $\epsilon > 0$, from which we infer that $R \ge \textbf{I}(X_{\mathcal{M}})$. The inequality $\CI_W(X_{\mathcal{M}}) \ge \textbf{I}(X_{\mathcal{M}})$ now follows.
\end{IEEEproof}

Finally, analogous to Definition~\ref{def:RSKr}, we have a definition of achievable rate of interactive communication required to get a $\CI$. 

\begin{definition}
\label{def:RCI}
A real number $R\geq 0$ is said to be an \emph{achievable rate of interactive communication for $\CI$} if for all $\epsilon > 0$ and for all sufficiently large $n$, there exist \emph{(i)}~an interactive communication $\textbf{F}$ satisfying $\frac{1}{n}\log|\cF| \; \leq R+\epsilon$, and \emph{(ii)}~a CR $\textbf{J}$ such that $\textbf{L}=(\textbf{J},\textbf{F})$ is a $\CI$. We denote the infimum among all such achievable rates by $R_{\CI}$.
\end{definition}

\section{Lower Bound on $R_{\SK}$} \label{sec:commcomp}

The goal of this section is to state and prove a lower bound on $R_{\SK}$, which partially extends Tyagi's two-terminal result \cite[Theorem~3]{Tyagi13} to the multiterminal setting.

\begin{theorem}
For any multiterminal source $X_{\cM}$, we have
$$
R_{\SK}\geq R_{\CI}\geq \CI(X_{\mathcal{M}})-\textbf{I}(X_{\mathcal{M}}).
$$
\label{th:commcomp}
\end{theorem}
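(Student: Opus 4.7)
The plan is to prove the two inequalities separately, and the central tool for both is a multiterminal analogue of Tyagi's two-terminal interactive-communication inequality: for any interactive communication $\textbf{F}$ among the terminals in $\cM$ and any partition $\cP=\{A_1,\ldots,A_\ell\}$ of $\cM$,
\begin{equation*}
\sum_{j=1}^\ell H(\textbf{F}\,|\,X_{A_j}^n) \;\le\; (\ell-1)\,H(\textbf{F}),
\end{equation*}
equivalently $\sum_{B\in\mathcal{B}}\lambda_B^{(\cP)} H(\textbf{F}|X_{B^c}^n) \le H(\textbf{F})$. I would prove this via the chain rule $H(\textbf{F})=\sum_t H(F_t|\textbf{F}^{t-1})$, observing that the $t$-th message $F_t$, being a deterministic function of $X_{i(t)}^n$ and past transmissions, contributes $H(F_t|\textbf{F}^{t-1},X_{A_j}^n)=0$ to the sum whenever $A_j$ contains $i(t)$; thus at most $\ell-1$ of the $\ell$ summands are nonzero per message, each bounded by $H(F_t|\textbf{F}^{t-1})$. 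An immediate corollary, obtained from the identity $\Delta(\cP^*)-\Delta(\cP^*|\textbf{F}) = H(\textbf{F})-\sum_B\lambda_B^{(\cP^*)}H(\textbf{F}|X_{B^c}^n)$, is $\Delta(\cP^*|\textbf{F})\le n\textbf{I}(X_{\cM})$ for every partition $\cP^*$ minimizing \eqref{eq:I}.

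For the first inequality $R_{\SK}\ge R_{\CI}$, I take an interactive $\textbf{F}$ with $\frac{1}{n}\log|\cF|\le R_{\SK}+\epsilon$ and a maximal-rate SK $\textbf{K}$ extracted from $\textbf{F}$, and show that $\textbf{L}=(\textbf{K},\textbf{F})$ is a $\CI$. Since $\textbf{K}$ is automatically a CR obtained from $\textbf{F}$, it suffices to verify $\frac{1}{n}\textbf{I}(X_{\cM}^n|\textbf{L})\to 0$. Fixing any minimizer $\cP^*$ of \eqref{eq:I} and using \eqref{lambdap}, I write
\begin{equation*}
\Delta(\cP^*|\textbf{L}) \;=\; n\textbf{I}(X_{\cM}) - \Bigl[H(\textbf{L}) - \sum_B\lambda_B^{(\cP^*)}H(\textbf{L}|X_{B^c}^n)\Bigr].
\end{equation*}
Splitting $H(\textbf{L})=H(\textbf{F})+H(\textbf{K}|\textbf{F})$ and $H(\textbf{L}|X_{B^c}^n)=H(\textbf{F}|X_{B^c}^n)+H(\textbf{K}|X_{B^c}^n,\textbf{F})$, I then invoke (i)~the interactive inequality above on the $\textbf{F}$ terms, which contributes nonnegatively; (ii)~Fano's inequality with the CR property of $\textbf{K}$, yielding $\frac{1}{n}H(\textbf{K}|X_i^n,\textbf{F})\le\delta(\epsilon)$ for every $i$; and (iii)~the near-secrecy $I(\textbf{K};\textbf{F})\le\epsilon$ combined with $\frac{1}{n}H(\textbf{K})\ge\textbf{I}(X_{\cM})-\epsilon$ to get $\frac{1}{n}H(\textbf{K}|\textbf{F})\ge\textbf{I}(X_{\cM})-2\epsilon$. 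These combine to force $\frac{1}{n}\Delta(\cP^*|\textbf{L})$ to vanish as $\epsilon\to 0$, for \emph{every} minimizing $\cP^*$, so $\frac{1}{n}\textbf{I}(X_{\cM}^n|\textbf{L})\to 0$ by the definition \eqref{cmi}.

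For the second inequality $R_{\CI}\ge\CI(X_{\cM})-\textbf{I}(X_{\cM})$, I take any $\CI$ $\textbf{L}=(\textbf{J},\textbf{F})$ with $\frac{1}{n}\log|\cF|\le R_{\CI}+\epsilon$ and bound $\frac{1}{n}H(\textbf{L}) \le R_{\CI}+\epsilon+\frac{1}{n}H(\textbf{J}|\textbf{F})$; since $\CI(X_{\cM})\le\frac{1}{n}H(\textbf{L})$, the claim follows once I establish $\frac{1}{n}H(\textbf{J}|\textbf{F})\le\textbf{I}(X_{\cM})+o(1)$. Fixing a minimizer $\cP^*$ of \eqref{eq:I} with cells $A_1,\ldots,A_\ell$ and plugging the identities $H(X_{A_j}^n|\textbf{J},\textbf{F}) = H(X_{A_j}^n|\textbf{F})+H(\textbf{J}|X_{A_j}^n,\textbf{F})-H(\textbf{J}|\textbf{F})$ and $H(X_{\cM}^n|\textbf{J},\textbf{F}) = H(X_{\cM}^n|\textbf{F})-H(\textbf{J}|\textbf{F})$ (the latter because $\textbf{J}$ is a function of $X_{\cM}^n$) into the definition of $\Delta(\cP^*|\textbf{J},\textbf{F})$ yields
\begin{equation*}
H(\textbf{J}|\textbf{F}) \;=\; \Delta(\cP^*|\textbf{F}) + \frac{1}{\ell-1}\sum_{j=1}^\ell H(\textbf{J}|X_{A_j}^n,\textbf{F}) - \Delta(\cP^*|\textbf{J},\textbf{F}).
\end{equation*}
Applying the corollary $\Delta(\cP^*|\textbf{F})\le n\textbf{I}(X_{\cM})$, Fano (each $H(\textbf{J}|X_{A_j}^n,\textbf{F})\le n\delta(\epsilon)$ since any terminal in $A_j$ recovers $\textbf{J}$), and $\Delta(\cP^*|\textbf{J},\textbf{F})\ge 0$, I obtain $\frac{1}{n}H(\textbf{J}|\textbf{F})\le\textbf{I}(X_{\cM})+\tfrac{\ell}{\ell-1}\delta(\epsilon)$, which tends to $\textbf{I}(X_{\cM})$ as $\epsilon\to 0$.

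The chief technical hurdle is establishing and then correctly deploying the multiterminal interactive inequality (and its conditional counterpart $\Delta(\cP^*|\textbf{F})\le n\textbf{I}(X_{\cM})$); a secondary subtlety is that the \emph{max}-based definition \eqref{cmi} of $\textbf{I}(X_{\cM}|\textbf{L})$ must be handled carefully, since the first inequality needs the small-$\Delta$ property for \emph{every} minimizer $\cP^*$ of \eqref{eq:I}, whereas the second only requires it for \emph{some}. Beyond this, the proof amounts to routine bookkeeping of the various $\epsilon$-terms coming from Fano, near-secrecy, and the $\CI$ property.
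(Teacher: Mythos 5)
Your proof is correct and follows essentially the same route as the paper: the three ingredients you deploy --- the entropy identity relating $n\textbf{I}(X_{\cM})$, $\Delta(\cP^*|\textbf{L})$, and $H(\textbf{L})$; Fano's inequality for the CR; and the interactive-communication inequality $\sum_{B}\lambda_B^{(\cP^*)}H(\textbf{F}|X_{B^c}^n)\le H(\textbf{F})$ --- are exactly the paper's Lemmas~\ref{lemma:eq}, \ref{lem:JF}, and \ref{lem:comm}, combined in the same order, and your two parts correspond to the paper's Parts I and II (swapped). The only differences are cosmetic: you supply a self-contained chain-rule proof of Lemma~\ref{lem:comm}, which the paper just cites from \cite{CN08}, and you make explicit that the max in \eqref{cmi} forces one to verify smallness of $\Delta(\cP^*|\textbf{K},\textbf{F})$ for \emph{every} minimizer $\cP^*$, whereas the paper handles this implicitly by fixing $\lambda^*=\lambda^{(\cP^*)}$ for the max-achieving $\cP^*$ from the start.
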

\vspace*{-10pt} 
By Proposition~\ref{prop:ineqs}, the lower bounds above are non-negative.

The ideas in our proof of Theorem~\ref{th:commcomp} may be viewed as a natural extension of those in the proof of \cite[Theorem 3]{Tyagi13}. We start with three preliminary lemmas. In all that follows, $\lambda^* = \lambda^{(\cP^*)}$ for any $\cP^*$ that achieves the maximum in the right-hand side of \eqref{cmi}; moreover, $\lambda^* = (\lambda^*_B: B \in \mathcal{B})$.

\begin{lemma}
For any function $\textbf{L}$ of $X_{\mathcal{M}}$, we have 
$$
n\textbf{I}(X_{\mathcal{M}}) = \textbf{I}(X_{\mathcal{M}}^n|\textbf{L})+H(\textbf{L})-\sum_{B\in \mathcal{B}} \lambda_B^*H(\textbf{L}|X_{B^c}^n). 
$$
\label{lemma:eq}
\end{lemma}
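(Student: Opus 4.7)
The plan is to expand both sides using the representation \eqref{lambdap} and then reduce the identity to a routine chain-rule bookkeeping calculation. First, I would note that $\mathbf{I}(X_\cM^n) = n\Ixm$ (as observed just after \eqref{eq:I}), and that the $\operatorname{argmin}$ of $\Delta(\cdot)$ is the same for $X_\cM^n$ as for $X_\cM$, by additivity of $\Delta$ across i.i.d.\ copies. Consequently, the partition $\cP^*$ picked out by the maximum in \eqref{cmi} (when computing $\Ixml$) may also be used to represent $n\Ixm$ via the same fractional partition $\lambda^*$. Applying \eqref{lambdap} once with the trivial (deterministic) conditioning variable and once with $\BL$ then gives the pair of identities
\begin{align*}
n\Ixm &= H(X_\cM^n) - \sum_{B\in\mathcal{B}} \lambda_B^*\, H(X_B^n \mid X_{B^c}^n), \\
\Ixml &= H(X_\cM^n \mid \BL) - \sum_{B\in\mathcal{B}} \lambda_B^*\, H(X_B^n \mid X_{B^c}^n, \BL).
\end{align*}

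Next, I subtract the second identity from the first and group the resulting differences of conditional entropies as mutual informations, obtaining
$$
n\Ixm - \Ixml \;=\; I(X_\cM^n;\BL) \;-\; \sum_{B\in\mathcal{B}} \lambda_B^*\, I(X_B^n;\BL \mid X_{B^c}^n).
$$
I would then expand each mutual information as a difference of unconditional and conditional entropies of $\BL$, using in particular $I(X_B^n;\BL \mid X_{B^c}^n) = H(\BL \mid X_{B^c}^n) - H(\BL \mid X_\cM^n)$. The terms involving $H(\BL \mid X_\cM^n)$ then collect with coefficient $\sum_{B}\lambda_B^* - 1$. Since $\BL$ is a function of $X_\cM^n$, we have $H(\BL \mid X_\cM^n)=0$, so this coefficient is immaterial and what remains is exactly $H(\BL) - \sum_{B\in\mathcal{B}} \lambda_B^*\, H(\BL \mid X_{B^c}^n)$, which rearranges to the claimed identity.

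I do not anticipate a real obstacle here: the argument is mostly entropy algebra. The one point to handle with care is the consistency of $\cP^*$, and hence of $\lambda^*$, between the two expansions; this is precisely what the additivity observation (giving equal $\operatorname{argmin}$ sets for $X_\cM$ and $X_\cM^n$) supplies, and it is what forces the use of the specific $\lambda^*$ appearing in \eqref{cmi} rather than an arbitrary fractional partition.
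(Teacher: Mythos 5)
Your proof is correct and takes essentially the same route as the paper: both expand $n\Ixm$ and $\Ixml$ via \eqref{skcapacity} and \eqref{lambdap}/\eqref{cmi} using the same fractional partition $\lambda^*$, and both exploit $H(\textbf{L}\mid X_{\cM}^n)=0$. The paper simply organizes the identical chain-rule bookkeeping as a single forward chain of equalities (absorbing $\textbf{L}$ into the entropies of $X_{\cM}^n$ and then splitting by the chain rule), whereas you subtract the two expansions and regroup as mutual informations.
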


\begin{IEEEproof}
Consider $\textbf{L}=\textbf{L}(X_{\mathcal{M}}^n)$. From \eqref{skcapacity}, we have
\begin{align*}
n\textbf{I}(X_{\mathcal{M}}) & = H(X_{\mathcal{M}}^n)-\sum_{B\in \mathcal{B}} \lambda_B^* H(X_B^n|X_{B^c}^n) \nonumber \\
                                                & = H(X_{\mathcal{M}}^n,\textbf{L})-\sum_{B\in \mathcal{B}} \lambda_B^* H(X_B^n,\textbf{L}|X_{B^c}^n) \\
                                                & = H(X_{\mathcal{M}}^n|\textbf{L}) + H(\textbf{L})-\sum_{B\in \mathcal{B}} \lambda_B^*H(X_B^n|\textbf{L},X_{B^c}^n) -\sum_{B\in \mathcal{B}}\lambda_B^*H(\textbf{L}|X_{B^c}^n) \\
                                                &= \textbf{I}(X_{\mathcal{M}}^n|\textbf{L})+H(\textbf{L})-\sum_{B\in \mathcal{B}} \lambda_B^*H(\textbf{L}|X_{B^c}^n) 
\end{align*}
the last equality above being due to \eqref{lambdap} and \eqref{cmi}.
\end{IEEEproof}

\begin{lemma}
For any CR $\textbf{J}$ obtained from an interactive communication $\textbf{F}$,
$$
\lim_{n \to \infty} \frac{1}{n}\sum_{B\in \mathcal{B}} \lambda_B^* H(\textbf{J}|X_{B^c}^n,\textbf{F}) = 0.
$$
\label{lem:JF}
\end{lemma}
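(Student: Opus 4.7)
The plan is to exploit the fact that, by Definition~\ref{def:CR}, the CR $\textbf{J}$ can be recovered with high probability from any single $X_i^n$ together with $\textbf{F}$, via the estimator $J_i = J_i(X_i^n,\textbf{F})$; I would then apply Fano's inequality termwise and sum.

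For each $B \in \mathcal{B}$, since $B \subsetneq \cM$ the complement $B^c$ is non-empty, so I can pick some $i_B \in B^c$. Two applications of ``conditioning reduces entropy'', together with the fact that $J_{i_B}$ is a deterministic function of $(X_{i_B}^n,\textbf{F})$, give
$$
H(\textbf{J} \mid X_{B^c}^n, \textbf{F}) \leq H(\textbf{J} \mid X_{i_B}^n, \textbf{F}) \leq H(\textbf{J} \mid J_{i_B}).
$$
Next, fix $\epsilon > 0$. By Definition~\ref{def:CR}, for all $n$ sufficiently large we have $\Pr[J_{i_B} \neq \textbf{J}] \leq \epsilon$, and $\textbf{J}$ takes at most $c^n$ values, where $c \triangleq \prod_{i\in\cM} |\mathcal{X}_i|$. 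Fano's inequality therefore yields $H(\textbf{J} \mid J_{i_B}) \leq h(\epsilon) + \epsilon \, n \log c$, where $h$ denotes the binary entropy function.

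Summing the resulting bound against the weights $\lambda_B^*$, using that $\lambda^*$ is a fractional partition on $\mathcal{B}$ (so $\sum_{B\in\mathcal{B}} \lambda_B^* \leq m$ — indeed $\sum_{B\in\mathcal{B}} |B|\lambda_B^* = m$), dividing by $n$, and then taking $n \to \infty$ followed by $\epsilon \downarrow 0$, delivers the claim. The argument is essentially routine Fano plus bookkeeping, with no deep obstacle; the one place I would be careful is the order of quantifiers, since the ``for all sufficiently large $n$'' clause of Definition~\ref{def:CR} requires that $\epsilon$ be fixed first and $n \to \infty$ taken before $\epsilon$ is collapsed — so formally one first bounds $\limsup_{n\to\infty}$ by $m \epsilon \log c$ and then sends $\epsilon \downarrow 0$, concluding by non-negativity of the original sum.
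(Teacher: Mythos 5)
Your proof is correct and uses essentially the same idea as the paper's: apply Fano's inequality to each term $H(\textbf{J}\mid X_{B^c}^n,\textbf{F})$ via the estimator $J_i$ for some $i \in B^c$, then bound $\sum_B \lambda_B^*$ by a constant and let $n\to\infty$ followed by $\epsilon\downarrow 0$. Your version is a bit more careful in spelling out which estimator achieves the small error probability and in using $\log|\cX_{\cM}|^n$ as the alphabet bound in Fano; the paper reaches the same conclusion via the cruder bound $\sum_B\lambda_B^* \le 2^m-2$ in place of your $\le m$.
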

\begin{IEEEproof}
Fix an $\epsilon > 0$. We have for all sufficiently large $n$, by Fano's inequality,
\begin{align}
\frac{1}{n}\sum_{B\in \mathcal{B}} \lambda_B^* H(\textbf{J}|X_{B^c}^n,\textbf{F})
           & \leq \frac{1}{n}\sum_{B\in \mathcal{B}}\lambda_B^* \left( h(\epsilon)+\epsilon H(X_{B^c}^n,\textbf{F})\right) \nonumber \\
           &  \leq \frac{1}{n}\sum_{B\in \mathcal{B}}\lambda_B^* \left( h(\epsilon)+\epsilon H(X_{\mathcal{M}}^n,\textbf{F})\right) \nonumber \\
           &  = \frac{1}{n}\sum_{B\in \mathcal{B}}\lambda_B^* \left( h(\epsilon)+\epsilon H(X_{\mathcal{M}}^n)\right) \nonumber \\
           &  = \frac{1}{n}\sum_{B\in \mathcal{B}}\lambda_B^* \left( h(\epsilon)+n \epsilon H(X_{\mathcal{M}})\right) \nonumber \\
           &  \leq (2^m-2)\left[h(\epsilon) + \epsilon  H(X_{\mathcal{M}}) \right] \label{th:commcomp:a} 
\end{align}
where $h(.)$ is the binary entropy function, and (\ref{th:commcomp:a}) follows from the fact that, by definition, $\lambda_B^*\leq 1$ and $|\mathcal{B}|=2^m-2$. Note that the expression in \eqref{th:commcomp:a} goes to $0$ with $\epsilon$, since $h(\epsilon)\to 0$ as $\epsilon \to 0$, and $H(X_{\mathcal{M}}) \le \log(\prod_{j=1}^m |\mathcal{X}_j|)$.
\end{IEEEproof}

The last lemma we need, stated without proof, is a special case of \cite[Lemma B.1]{CN08}.

\begin{lemma}[\cite{CN08}, Lemma B.1]
For an interactive communication $\textbf{F}$ we have
$$
H(\textbf{F})\geq \sum_{B\in \mathcal{B}} \lambda_B^*H(\textbf{F}|X_{B^c}^n).
$$
\label{lem:comm}
\end{lemma}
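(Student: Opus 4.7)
The plan is to prove the inequality by exploiting two ingredients in tandem: the interactive structure of $\mathbf{F}$ (each transmission is a deterministic function of the transmitter's observations and the prior transcript) and the defining property of the fractional partition $\lambda^*$, namely $\sum_{B : i \in B} \lambda^*_B = 1$ for every $i \in \mathcal{M}$. I would first expand $H(\mathbf{F})$ by the chain rule along the transmission order. Write $\mathbf{F} = (f_1,\ldots,f_r)$ and let $i_j \in \mathcal{M}$ denote the terminal responsible for transmission $f_j$, so that $f_j$ is a function of $X_{i_j}^n$ and $f_1,\ldots,f_{j-1}$. Then
\[
H(\mathbf{F}) \;=\; \sum_{j=1}^{r} H\bigl(f_j \,\big|\, f_1,\ldots,f_{j-1}\bigr).
\]

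Next I would lower-bound each term using the fractional partition. Since $\sum_{B : i_j \in B} \lambda^*_B = 1$, we can rewrite
\[
H(f_j \mid f_1,\ldots,f_{j-1}) \;=\; \sum_{B \,:\, i_j \in B} \lambda^*_B \, H(f_j \mid f_1,\ldots,f_{j-1}),
\]
and then apply ``conditioning reduces entropy'' inside each summand to insert $X_{B^c}^n$:
\[
H(f_j \mid f_1,\ldots,f_{j-1}) \;\geq\; \sum_{B \,:\, i_j \in B} \lambda^*_B \, H\bigl(f_j \,\big|\, X_{B^c}^n, f_1,\ldots,f_{j-1}\bigr).
\]
The crucial observation is that the sum on the right can be extended to run over all $B \in \mathcal{B}$ without changing its value: for any $B$ with $i_j \notin B$, we have $i_j \in B^c$, so $f_j$ is a deterministic function of $(X_{B^c}^n, f_1,\ldots,f_{j-1})$ and the corresponding conditional entropy vanishes. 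Therefore,
\[
H(f_j \mid f_1,\ldots,f_{j-1}) \;\geq\; \sum_{B \in \mathcal{B}} \lambda^*_B \, H\bigl(f_j \,\big|\, X_{B^c}^n, f_1,\ldots,f_{j-1}\bigr).
\]

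Finally, I would sum over $j$, exchange the two sums, and reassemble the chain rule conditioned on $X_{B^c}^n$:
\[
H(\mathbf{F}) \;\geq\; \sum_{B \in \mathcal{B}} \lambda^*_B \sum_{j=1}^{r} H\bigl(f_j \,\big|\, X_{B^c}^n, f_1,\ldots,f_{j-1}\bigr) \;=\; \sum_{B \in \mathcal{B}} \lambda^*_B \, H(\mathbf{F} \mid X_{B^c}^n),
\]
which is the desired inequality.

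There is no real obstacle here; the proof is essentially a bookkeeping argument. The one subtle point worth highlighting is that extending the inner sum from $\{B : i_j \in B\}$ to all of $\mathcal{B}$ is precisely the step that uses interactivity of $\mathbf{F}$ — a non-interactive analogue (where, say, all terminals transmit simultaneously based on their raw observations) would still go through, but the vanishing conditional entropies $H(f_j \mid X_{B^c}^n, f_1,\ldots,f_{j-1}) = 0$ for $i_j \in B^c$ are what make the argument work without requiring any dependence assumptions between the transmissions and the sources beyond the defining functional relationship of an interactive protocol.
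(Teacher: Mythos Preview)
Your proof is correct. The paper does not actually prove this lemma; it cites it as a special case of Lemma~B.1 in \cite{CN08} and states it without proof, so there is no ``paper's own proof'' to compare against. That said, your argument --- chain rule along the transmission order, insert the fractional-partition identity $\sum_{B:i_j\in B}\lambda_B^*=1$, condition further on $X_{B^c}^n$, then observe that the terms with $i_j\in B^c$ vanish because $f_j$ is determined by $(X_{B^c}^n,f_1,\ldots,f_{j-1})$ --- is exactly the standard proof of this inequality (and is how Lemma~B.1 of \cite{CN08} is proved). There are no gaps.
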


\vspace*{-15pt}
With these lemmas in hand, we can proceed to the proof of Theorem~\ref{th:commcomp}.

\begin{IEEEproof}[Proof of Theorem \ref{th:commcomp}]
The proof is done in two parts. In the first part, we prove that $R_{\CI}\geq \CI(X_{\mathcal{M}})-\textbf{I}(X_{\mathcal{M}})$. In the second part, we show that $R_{\SK}\geq R_{\CI}$.

\textit{Part~I: $R_{\CI} \geq \CI(X_{\mathcal{M}})-\textbf{I}(X_{\mathcal{M}})$}

The idea is to show that $\textbf{I}(X_{\mathcal{M}})+R_{\CI}$ is an achievable $\CI$ rate, so that $\CI(X_{\mathcal{M}}) \le \textbf{I}(X_{\mathcal{M}})+R_{\CI}$.

Fix an $\epsilon > 0$. By the definition of $R_{\CI}$, for all sufficiently large $n$, there exists an interactive communication $\textbf{F}$ satisfying $\frac{1}{n}\log |\cF| \; \leq R_{\CI}+\epsilon/2$ and a CR $\textbf{J}$ such that $\textbf{L}=(\textbf{J},\textbf{F})$ is a $\CI$. We will show that $\frac{1}{n} H(\textbf{J},\textbf{F}) \le \textbf{I}(X_{\mathcal{M}})+R_{\CI}+\epsilon$ for all sufficiently large $n$. This, by Definition~\ref{def:CIrate}, shows that $\textbf{I}(X_{\mathcal{M}})+R_{\CI}$ is an achievable $\CI$ rate.

Setting $\textbf{L} = (\textbf{J},\textbf{F})$ in Lemma~\ref{lemma:eq}, we obtain
\begin{align}
\frac{1}{n}\left[H(\textbf{J},\textbf{F})  -\sum_{B\in \mathcal{B}} \lambda_B^* H(\textbf{F}|X_{B^c}^n)\right] -\textbf{I}(X_{\mathcal{M}}) 
& 
= \ \frac{1}{n}\left[\sum_{B\in \mathcal{B}} \lambda_B^* H(\textbf{J}|X_{B^c}^n,\textbf{F})-\textbf{I}(X_{\mathcal{M}}^n|\textbf{J},\textbf{F})\right] \nonumber \\
& 
\leq \ \epsilon/2, \label{th:commcomp:c}
\end{align}
where (\ref{th:commcomp:c}) follows from Lemma~\ref{lem:JF}. Re-arranging, we get
\begin{align*}
\frac{1}{n} H(\textbf{J},\textbf{F}) & \le \textbf{I}(X_{\mathcal{M}})+\frac{1}{n} \sum_{B\in \mathcal{B}} \lambda_B^* H(\textbf{F}|X_{B^c}^n) +\epsilon/2 \\
& \le \textbf{I}(X_{\mathcal{M}})+\frac{1}{n} H(\textbf{F})+\epsilon/2
\end{align*}
the second inequality coming from Lemma~\ref{lem:comm}. Finally, using the fact that $\frac{1}{n}H(\textbf{F})\leq \frac{1}{n}\log|\cF| \; \leq R_{\CI}+\epsilon/2$, we see that
$$
\frac{1}{n} H(\textbf{J},\textbf{F}) \leq \textbf{I}(X_{\mathcal{M}})+R_{\CI}+\epsilon 
$$
which is what we set out to prove.

\textit{Part II: $R_{\SK}\geq R_{\CI}$}

Fix $\epsilon > 0$. From the definition of $R_{SK}$, there exist an interactive communication $\textbf{F}$ and an SK $\textbf{K}$ obtained from $\textbf{F}$ such that, for all sufficiently large $n$, $\frac{1}{n}\log|\cF| \; \leq R_{SK}+\epsilon$ and $\frac{1}{n}H(\textbf{K})\geq \textbf{I}(X_{\mathcal{M}})-\epsilon$. We wish to show that $(\textbf{K},\textbf{F})$ is a $\CI$, so that by Definition~\ref{def:RCI}, we would have  $R_{\SK}\geq R_{\CI}$.

Setting $\textbf{L}=(\textbf{K},\textbf{F})$ in Lemma~\ref{lemma:eq}, we have for all sufficiently large $n$,
\begin{align}
\frac{1}{n}\textbf{I}(X_{\mathcal{M}}^n|\textbf{K},\textbf{F}) 
            & = \textbf{I}(X_{\mathcal{M}})-\frac{1}{n}H(\textbf{K},\textbf{F})+\frac{1}{n}\sum_{B\in \mathcal{B}} \lambda_B^*H(\textbf{F}|X_{B^c}^n)+\frac{1}{n}\sum_{B\in \mathcal{B}}\lambda_B^*H(\textbf{K}|X_{B^c}^n,\textbf{F}) \nonumber \\
            & \leq  \textbf{I}(X_{\mathcal{M}})-\frac{1}{n}H(\textbf{K}|\textbf{F}) + \epsilon \label{th:commcomp:e} \\
            & \leq   \textbf{I}(X_{\mathcal{M}})-\frac{1}{n}H(\textbf{K})+\epsilon+\epsilon \label{th:commcomp:f} \\ 
            &  \leq 3\epsilon, \label{th:commcomp:g}
\end{align}
where (\ref{th:commcomp:e}) follows from Lemmas~\ref{lem:JF} and \ref{lem:comm}, (\ref{th:commcomp:f}) follows from the fact that $I(\textbf{K};\textbf{F})\leq \epsilon$, while (\ref{th:commcomp:g}) is due to the fact that $\frac{1}{n}H(\textbf{K})\geq \textbf{I}(X_{\mathcal{M}})-\epsilon$. Thus, by Definition~\ref{def:CI}, $(\textbf{K},\textbf{F})$ is a $\CI$.
\end{IEEEproof}

An issue with our Theorem~\ref{th:commcomp} is that the bounds are difficult to evaluate explicitly, as we do not have a computable characterization of $\CI(X_{\mathcal{M}})$. In addition to that, we do not know if the lower bounds of Theorem~\ref{th:commcomp} are in general tight, in the sense of there being matching upper bounds. For the special case of the two-terminal model, Theorem~3 of \cite{Tyagi13} shows that the bound on $R_{\SK}$ is tight (albeit under a weaker notion of SK, as explained in Footnote~\ref{fn:weakSK}). In the general multiterminal model, with $m\geq 3$, the best known upper bound on $R_{\SK}$ is the minimum rate of communication for omniscience, $R_{\CO}$. In the following section, we identify a large class of sources where our lower bound equals $R_{\CO}$, i.e., the sources are $R_{\SK}$-maximal.

\section{$R_{\SK}$-maximality in uniform hypergraph PIN models} \label{sec:RSKmax}

This section focuses on a special class of sources called the PIN model, introduced in \cite{NYBNR10} and \cite{NN10}. A broad class of PIN models defined on uniform hypergraphs (which is a generalization of the PIN models of \cite{NYBNR10} and \cite{NN10}) is identified to be $R_{\SK}$-maximal in this section. Briefly, a \emph{hypergraph PIN model} is defined on an underlying hypergraph $\cH = (\cV,\cE)$ with $\cV = \cM$, the set of $m$ terminals of the model, and $\cE$ being a \emph{multiset} of hyperedges, i.e., subsets of $\cV$.\footnote{Note that we allow $\cE$ to contain multiple copies of a hyperedge. In the graph theory literature, such a hypergraph is sometimes referred to as a ``multi-hypergraph''.} For a hyperedge $e$ having $\ell$ copies in the multiset $\cE$, we represent the different copies as $e_1,e_2,\ldots,e_{\ell}$. To keep the notation simple, if a hyperedge $e$ has only one copy in $\cE$, we simply represent it as $e$ instead of $e_1$. Unless otherwise stated, we will assume that each hyperedge in $\cE$ has only one copy. For $n \in \N$, we define $\cE^{(n)}$ to be the multiset of hyperedges formed by taking $n$ copies of each element of the multiset $\cE$. Associated with each hyperedge $e \in \cE^{(n)}$ is a Bernoulli$(1/2)$ random variable $\xi_e$; the $\xi_e$s are all mutually independent. With this, the random variables $X_i^n$, $i\in\mathcal{M}$, are defined as $X_i^n=(\xi_e$ : $e\in\mathcal{E}^{(n)}$ and $i\in e$). When every $e\in\cE$ satisfies ${|e|}=t$, we call $\cH$ a \emph{$t$-uniform hypergraph}. We will show that any Type $\cS$ uniform hypergraph PIN model is $R_{\SK}$-maximal.

\begin{theorem}
For a Type $\cS$ PIN model defined on an underlying $t$-uniform hypergraph $\cH = (\cV,\cE)$, we have $\CI(X_{\cM})=\CI_W(X_{\cM})=H(X_{\cM})$, and hence, $R_{\SK}=R_{\CO} = \frac{m-t}{m-1} |\cE|$.
\label{th:uh}
\end{theorem}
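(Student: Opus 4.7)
The plan is to establish the key equality $\CI_W(X_\cM) = H(X_\cM)$, from which the theorem follows at once via Proposition~\ref{prop:ineqs} and Theorem~\ref{th:commcomp}. First I would compute $R_{\CO}$ in closed form. Since the $\xi_e$ are mutually independent $\text{Bernoulli}(1/2)$, we have $H(X_\cM) = |\cE|$ and $H(X_i) = d_i$, the degree of vertex $i$; $t$-uniformity forces $\sum_{i\in\cM} d_i = t|\cE|$, so $\Delta(\cS) = \frac{(t-1)|\cE|}{m-1}$. The Type~$\cS$ hypothesis then gives $\cC(\cM) = \textbf{I}(X_\cM) = \Delta(\cS)$, whence \eqref{omni} yields $R_{\CO} = H(X_\cM) - \cC(\cM) = \frac{(m-t)|\cE|}{m-1}$.

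The heart of the argument is showing $\CI_W(X_\cM) \geq H(X_\cM)$. Let $\textbf{L} = \textbf{L}(X_\cM^n)$ be any $\CI_W$. Since $\textbf{L}$ is a function of $X_\cM^n$, $H(\textbf{L}) = n|\cE| - H(X_\cM^n|\textbf{L})$, so it suffices to prove $\frac{1}{n} H(X_\cM^n|\textbf{L}) \to 0$. The key observation is that for each hyperedge $e = \{i_1, \ldots, i_t\}$ with $t \geq 2$, $\xi_e^n$ is a deterministic coordinate of both $X_{i_1}^n$ and $X_{i_2}^n$; by data processing,
$$
I(X_{i_1}^n; X_{i_2}^n | \textbf{L}) \;\geq\; I(\xi_e^n; X_{i_2}^n | \textbf{L}) \;=\; H(\xi_e^n | \textbf{L}).
$$
A standard total-correlation bound gives $I(X_{i_1}^n; X_{i_2}^n | \textbf{L}) \leq \sum_{i\in\cM} H(X_i^n|\textbf{L}) - H(X_\cM^n|\textbf{L}) = (m-1)\Delta(\cS|\textbf{L})$, and because $\cS$ is a minimizer in \eqref{eq:I} (Type~$\cS$), definition \eqref{cmi} forces $\Delta(\cS|\textbf{L}) \leq \textbf{I}(X_\cM^n|\textbf{L})$. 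Summing over $e \in \cE$ and using sub-additivity of entropy,
$$
H(X_\cM^n | \textbf{L}) \;\leq\; \sum_{e\in\cE} H(\xi_e^n | \textbf{L}) \;\leq\; (m-1)\,|\cE|\,\textbf{I}(X_\cM^n|\textbf{L}),
$$
which is $o(n)$ since $\textbf{L}$ is a $\CI_W$.

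It then follows that $\CI_W(X_\cM) \geq H(X_\cM)$, and with the reverse inequality from Proposition~\ref{prop:ineqs}, $\CI(X_\cM) = \CI_W(X_\cM) = H(X_\cM)$. Applying Theorem~\ref{th:commcomp} gives $R_{\SK} \geq \CI(X_\cM) - \textbf{I}(X_\cM) = H(X_\cM) - \cC(\cM) = R_{\CO}$; together with the known upper bound $R_{\SK} \leq R_{\CO}$ this forces equality. The main conceptual hurdle lies in the second paragraph: leveraging the Type~$\cS$ hypothesis to convert the $\CI_W$ condition (approximate conditional independence across the singleton partition) into near-determination of every edge variable by $\textbf{L}$, which is what pins $\frac{1}{n}H(\textbf{L})$ to $H(X_\cM)$.
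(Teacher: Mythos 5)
Your proof is correct, and it takes a genuinely different and substantially more elementary route to the crucial lower bound $\CI_W(X_\cM) \ge H(X_\cM)$. The paper arrives at this via Lemma~\ref{lem:mi}, which asserts $\sum_i I(X_i^n;\textbf{L}) \le tH(\textbf{L})$ for any function $\textbf{L}$ of a $t$-uniform PIN source, and whose proof is a lengthy combinatorial allocation argument (Algorithm~\ref{alg:ta}, Claims~\ref{cl:1}--\ref{cl:2} in Appendix~\ref{app:proof}). You sidestep that lemma entirely by exploiting the structure of the PIN model directly at the level of edge variables: since $\xi_e^n$ is a deterministic function of \emph{two different} terminal observations $X_{i_1}^n$ and $X_{i_2}^n$, the chain $H(\xi_e^n|\textbf{L}) = I(\xi_e^n;X_{i_2}^n|\textbf{L}) \le I(X_{i_1}^n;X_{i_2}^n|\textbf{L}) \le (m-1)\Delta(\cS|\textbf{L}) \le (m-1)\textbf{I}(X_\cM^n|\textbf{L})$ (the last step being exactly where the Type~$\cS$ hypothesis enters) is a few lines, and summing over $\cE$ and using sub-additivity delivers $H(X_\cM^n|\textbf{L}) \le (m-1)|\cE|\,\textbf{I}(X_\cM^n|\textbf{L}) = o(n)$. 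Your resulting bound $\frac{1}{n}H(\textbf{L}) \ge |\cE| - (m-1)|\cE|\cdot\frac{1}{n}\textbf{I}(X_\cM^n|\textbf{L})$ is quantitatively looser than the paper's \eqref{cmi:uh3}, which gives $\frac{1}{n}H(\textbf{L}) \ge |\cE| - \frac{m-1}{t-1}\cdot\frac{1}{n}\textbf{I}(X_\cM^n|\textbf{L})$, but this does not affect the limiting argument, so the conclusion $\CI_W(X_\cM) = H(X_\cM)$ holds either way. What the paper's Lemma~\ref{lem:mi} buys in exchange for its complexity is a sharper per-symbol inequality; what your route buys is transparency and brevity, and it makes the role of the pairwise-determinism structure of the PIN model explicit. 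The remaining steps (the computation of $R_{\CO}$ via $H(X_\cM)=|\cE|$, $\sum_i H(X_i)=t|\cE|$, and the deduction $R_{\SK}\ge\CI(X_\cM)-\textbf{I}(X_\cM)=R_{\CO}$ via Theorem~\ref{th:commcomp}) coincide with the paper's.
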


Type $\cS$ PIN models defined on $t$-uniform hypergraphs do indeed exist, as we will see in Section~\ref{sec:singleton}. Also, it is possible to efficiently determine if a given source $\cX_{\cM}$ (not necessarily a PIN model) is Type $\cS$; a strongly polynomial-time algorithm for this has been given by Chan et al.\ \cite{Chan14}. In Section~\ref{sec:singleton}, we present another useful, but inefficient, test for deciding the Type $\cS$ property.

The proof of Theorem~\ref{th:uh} will require a technical lemma which we state below.

\begin{lemma}
\label{lem:mi}
For any \emph{$t$-uniform hypergraph} PIN model and any function $\textbf{L}$ of $X_{\mathcal{M}}^n$ we have
\begin{equation}
\sum_{i=1}^m I(X_i^n;\textbf{L}) \leq tH(\textbf{L}). \label{eq:mi}
\end{equation}
\end{lemma}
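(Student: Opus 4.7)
The plan is to exploit the $t$-uniform structure of $\cH$ via a Shearer-type (i.e., sub-additivity) argument applied not to the $X_i^n$'s directly, but to the underlying independent Bernoulli$(1/2)$ edge-variables $(\xi_e)_{e \in \cE^{(n)}}$ that generate them. The advantage of working at the edge level is that these variables are mutually independent, so unconditional entropies of sub-collections are additive; this will turn the Shearer inequality into an exact identity on the unconditional side, which is the feature that makes the bound go through.

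First, I would record a simple counting identity. Since $X_i^n = (\xi_e : e \in \cE^{(n)},\, i \in e)$ and the $\xi_e$'s are i.i.d.\ Bernoulli$(1/2)$, $H(X_i^n)$ equals the degree of vertex $i$ in $\cE^{(n)}$. Summing over $i$ and swapping the order of summation, each $e \in \cE^{(n)}$ contributes exactly $|e| = t$ to the total by $t$-uniformity, so
\begin{equation*}
\sum_{i=1}^m H(X_i^n) \;=\; t\,|\cE^{(n)}| \;=\; t\,H(X_{\cM}^n),
\end{equation*}
where the last equality uses independence of the $\xi_e$'s.

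Next, I would invoke the conditional form of Shearer's lemma, applied to the cover $\{E_i\}_{i \in \cM}$ of the edge-index set $\cE^{(n)}$, where $E_i := \{e \in \cE^{(n)} : i \in e\}$. By $t$-uniformity, every $e \in \cE^{(n)}$ lies in exactly $t$ of the sets $E_i$, so Shearer's lemma (conditioned on $\textbf{L}$) yields
\begin{equation*}
\sum_{i=1}^m H(X_i^n \mid \textbf{L}) \;\geq\; t\,H(X_{\cM}^n \mid \textbf{L}).
\end{equation*}

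Finally, subtracting this inequality from the unconditional identity and using $I(X_{\cM}^n;\textbf{L}) = H(\textbf{L})$ (valid because $\textbf{L}$ is a deterministic function of $X_{\cM}^n$) gives
\begin{equation*}
\sum_{i=1}^m I(X_i^n;\textbf{L}) \;\leq\; t\bigl[H(X_{\cM}^n) - H(X_{\cM}^n \mid \textbf{L})\bigr] \;=\; t\,H(\textbf{L}),
\end{equation*}
which is the claim. I do not expect a genuine obstacle: the $t$-uniformity is exactly what makes the multiplicities in the Shearer cover equal to $t$, and the independence of the $\xi_e$'s collapses the unconditional Shearer inequality to equality, which is precisely what enables the clean subtraction in the last step.
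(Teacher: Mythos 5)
Your proof is correct, and it takes a genuinely different and considerably shorter route than the paper's. The paper first reduces the problem (via a replication trick) to the complete $t$-uniform hypergraph $K_{m,t}$, and then proves the inequality by a lengthy chain-rule expansion of $\sum_i I(X_i^n;\textbf{L})$ together with an explicit combinatorial allocation of conditional mutual information terms, formalized as Algorithm~\ref{alg:ta}; the only auxiliary information inequality it invokes is the fact that conditioning on an independent variable cannot decrease mutual information (Lemmas~\ref{lem:rv:2} and~\ref{lem:rv:1}). You instead observe that the cover $\{E_i\}_{i\in\cM}$ of the edge-index multiset $\cE^{(n)}$ given by $E_i=\{e: i\in e\}$ covers every edge exactly $t$ times by $t$-uniformity, so the conditional form of Shearer's lemma gives $\sum_i H(X_i^n\mid\textbf{L})\ge t\,H(X_{\cM}^n\mid\textbf{L})$, while independence of the $\xi_e$'s makes the unconditional version an exact identity $\sum_i H(X_i^n)=t\,H(X_{\cM}^n)$; subtracting and using $I(X_{\cM}^n;\textbf{L})=H(\textbf{L})$ finishes it. In effect, the paper's allocation algorithm re-derives a Shearer-type inequality from scratch (using the same ``conditioning on an independent variable helps'' lemma that underlies the standard proof of Shearer), so your argument can be viewed as recognizing the structure and packaging it into one invocation of a known result. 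What the paper's approach buys is self-containedness and an explicit, constructive allocation; what yours buys is brevity, no reduction to the complete hypergraph, and an argument whose only hypergraph input is the cover multiplicity $t$ and whose only distributional input is the independence of the edge variables.
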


The lengthy proof of this lemma is deferred to Appendix~\ref{app:proof}.

\begin{IEEEproof}[Proof of Theorem \ref{th:uh}]
Observe that $\lambda^{(\cS)}_B=\frac{1}{m-1}$, whenever ${|B|}=m-1$ and $\lambda^{(\cS)}_B=0$, otherwise. Hence, for any Type $\cS$ source $X_{\cM}^n$, we have
\begin{equation}
\textbf{I}(X_{\mathcal{M}}^n | \textbf{L}) \geq H(X_{\mathcal{M}}^n | \textbf{L})- \frac{1}{m-1} \sum_{i=1}^m H(X_{\cM \setminus \{i\}}^n|X_{i}^n,\textbf{L})  \label{cmi_uh}
\end{equation}
using \eqref{lambdap} and \eqref{cmi}.
Now assume that $X_{\cM}$ arises from a PIN model defined on a $t$-uniform hypergraph $\cH = (\cV,\cE)$, and consider any function $\textbf{L}$ of $X_{\mathcal{M}}^n$. This allows us to further simplify \eqref{cmi_uh}:
\begin{align}
\textbf{I}(X_{\mathcal{M}}^n | \textbf{L}) 
   & \geq H(X_{\mathcal{M}}^n) - H(\textbf{L}) - \frac{1}{m-1} \sum_{i=1}^m \left[H(X_{\cM}^n) - H(X_i^n) - H(\textbf{L}|X_i^n)\right] \notag \\
   & = \frac{n(t-1)|\cE|}{m-1} - H(\textbf{L}) + \frac{1}{m-1} \sum_{i=1}^m H(\textbf{L}|X_i^n) \label{cmi_uh2}\\
   & = \frac{n(t-1)|\cE|}{m-1} - \frac{1}{m-1} \left[ \sum_{i=1}^m I(X_i^n;\BL)-H(\BL)\right] \notag \\
   & = \frac{n(t-1)}{m-1}\left(|\cE|-\frac{1}{n}H(\BL)\right)-\frac{1}{m-1} \left[ \sum_{i=1}^m I(X_i^n;\BL)-tH(\BL)\right] \notag \\
   & \geq \frac{n(t-1)}{m-1}\left(|\cE|-\frac{1}{n}H(\BL)\right), \label{cmi:uh3}
\end{align}
the equality \eqref{cmi_uh2} using the facts that $H(X_{\cM}^n) = n|\cE|$ and $\sum_{i=1}^m H(X_i^n) = nt|\cE|$, and \eqref{cmi:uh3} following from Lemma~\ref{lem:mi}.

We will now compute $\CI(X_{\cM})$ using Proposition \ref{prop:ineqs}. The upper bound gives us $\CI(X_{\cM}) \le |\cE|$, as $H(X_{\cM})=|\cE|$. For the lower bound, let $\textbf{L}$ be any $\CI_W$ so that for any $\epsilon > 0$, we have $\frac{1}{n} \textbf{I}(X_{\mathcal{M}}^n | \textbf{L}) < \frac{(t-1)\epsilon}{(m-1)}$ for all sufficiently large $n$. The bound in \eqref{cmi:uh3} thus yields $\frac{1}{n}H(\BL)> |\cE|-\epsilon$ for all sufficiently large $n$. Hence, it follows that $\CI_W(X_{\cM}) \ge |\cE|$. From the upper and lower bounds in Proposition \ref{prop:ineqs}, we then obtain $CI_W(X_{\cM})=CI(X_{\cM})=H(X_{\cM})$.

Now from Theorem \ref{th:commcomp} we have $R_{\SK}\geq CI(X_{\cM})-\textbf{I}(X_{\cM})$. Hence, we have
\begin{gather}
R_{\SK} \ge |\cE|- \textbf{I}(X_{\cM})= H(X_{\cM})-\textbf{I}(X_{\cM})= R_{\CO}, \label{cmi:uh4}
\end{gather}
where the last equality is from \eqref{omni}. But we also have $R_{\SK}\leq R_{\CO}$, as pointed out in Section \ref{sec:prelim}, which proves that $R_{\SK} = R_{\CO}$. 

To obtain the exact expression for $R_{\CO}$, we note that by \eqref{omni} and \eqref{eq:I}, $R_{\CO} = H(X_{\cM}) - \Delta(\cS) = \frac{m}{m-1} H(X_{\cM}) - \frac{1}{m-1} \sum_{i=1}^m H(X_i)$. This simplifies to the expression stated in the theorem using the facts (already mentioned above) that $H(X_{\cM}) = |\cE|$ and $\sum_{i=1}^m H(X_i) = t |\cE|$.
\end{IEEEproof} 

It turns out that for PIN models on graphs (i.e., $t=2$), the Type $\cS$ condition is also necessary for $R_{\SK}$-maximality. It is possible that this holds for PIN models on $t$-uniform hypergaphs (with $t \ge 3$) as well, but we do not have a proof for this yet. 

\begin{theorem}
A PIN model defined on a graph is $R_{\SK}$-maximal iff it is Type $\cS$.
\label{th:graph}
\end{theorem}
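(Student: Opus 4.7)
My plan for Theorem~\ref{th:graph} is as follows. The ``if'' direction is immediate from Theorem~\ref{th:uh} specialised to $t=2$, so the content of the theorem lies in the ``only if'' direction: I would prove that when a graph PIN model on $\cH=(\cM,\cE)$ is not Type~$\cS$, the standard spanning-tree packing SK protocol for graph PIN models already uses strictly less public communication than $R_{\CO}$, so the model cannot be $R_{\SK}$-maximal.

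First, I would rewrite the partition functional $\Delta(\cdot)$ combinatorially. In a graph PIN, $H(X_A)$ equals the number of $\cH$-edges incident to $A$, so a short calculation yields $\Delta(\cP)=\frac{c(\cP)}{|\cP|-1}$, where $c(\cP)$ counts the edges crossing $\cP$; in particular $\Delta(\cS)=\frac{|\cE|}{m-1}$. By the Nash-Williams/Tutte theorem together with \eqref{capeqmi}, $\cC(\cM)=\min_\cP\Delta(\cP)=\tau$, where $\tau$ is the maximum fractional edge-disjoint spanning-tree packing number of $\cH$. The hypothesis ``not Type~$\cS$'' therefore collapses to the single inequality $\tau<\frac{|\cE|}{m-1}$.

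Next I would invoke the tree-packing SK protocol. A spanning tree of $\cH$, regarded as its own $(m-1)$-edge sub-PIN, is itself Type~$\cS$ (any nontrivial partition of a tree has at least $|\cP|-1$ crossing edges), so Theorem~\ref{th:uh} tells us that a single tree yields one SK bit from $m-2$ bits of public communication. Passing to the $n$-fold multigraph $\cH^{(n)}$, whose fractional tree-packing number is $n\tau$, the integer version of Nash-Williams/Tutte furnishes $\lfloor n\tau\rfloor$ edge-disjoint spanning trees; running the single-tree protocol independently on each yields a combined SK of length $\lfloor n\tau\rfloor$ at the cost of a public transcript of length $\lfloor n\tau\rfloor(m-2)$. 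Edge-disjointness makes the per-tree keys and transcripts mutually independent, so secrecy of the concatenation follows from secrecy of each factor; dividing by $n$ and letting $n\to\infty$ gives $R_{\SK}\le\tau(m-2)$.

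Finally, the hypothesis $\tau<\frac{|\cE|}{m-1}$ rearranges to $\tau(m-2)<|\cE|-\tau=H(X_\cM)-\cC(\cM)=R_{\CO}$, the last equality coming from \eqref{omni}. Hence $R_{\SK}\le\tau(m-2)<R_{\CO}$, contradicting $R_{\SK}$-maximality and completing the ``only if'' direction. The step I expect to need the most care is the secrecy accounting for the concatenated single-tree protocols: while intuitively obvious from edge-disjointness, I would have to verify that the aggregate pair $(\BK,\BF)$ meets the strong secrecy condition $I(\BK;\BF)\le\epsilon$ of Definition~\ref{def:SK} rather than only the weaker normalized variant, ideally by exhibiting single-shot single-tree protocols with exact secrecy (e.g.\ a root-to-leaf XOR scheme) so that the concatenated protocol has $I(\BK;\BF)=0$ for all~$n$.
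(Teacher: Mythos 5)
Your proposal matches the paper's argument essentially step for step: the ``if'' direction comes from Theorem~\ref{th:uh}, and the ``only if'' direction combines the bound $R_{\SK}\le (m-2)\overline{\sigma}(\cG)$ (the paper's Lemma~\ref{lem:tree}, proved via exactly the per-tree XOR protocol with $I(\BK;\BF)=0$ that you sketch) with $R_{\CO}=|\cE|-\overline{\sigma}(\cG)$ and the violation $\overline{\sigma}(\cG)<|\cE|/(m-1)$ of the Type~$\cS$ condition. Your closing concern about strong secrecy is correctly resolved by the exact-secrecy XOR scheme, which is precisely what the paper uses.
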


We will prove the necessity of the Type $\cS$ condition by showing that any graph PIN model that is not Type $\cS$ has an SK-capacity-achieving protocol of communication rate strictly less than $R_{\CO}$. To do this, we need a few preliminaries. Consider a graph $\cG=(\cV,\cE)$ and define $\cG^{(n)}=(\cV,\cE^{(n)})$ for any positive integer $n$. The \emph{spanning tree packing number} of $\cG^{(n)}$, denoted by $\sigma(\cG^{(n)})$, is the maximum number of edge-disjoint spanning trees of $\cG^{(n)}$. It is a fact that $\displaystyle\lim_{n\to\infty}\frac{1}{n}\sigma(G^{(n)})$ exists (see \cite[Proposition~4]{NN10}); we denote this limit by $\overline{\sigma}(\cG)$ and call it the \emph{spanning tree packing rate} of the graph $\cG$. It was shown in \cite[Theorem 5]{NN10} that for a PIN model on $\cG$, we have $\cC(\cM)=\overline{\sigma}(\cG)$. Therefore, by \eqref{omni} we have $R_{\CO}=H(X_{\cM})-\overline{\sigma}(\cG)={|\cE|}-\overline{\sigma}(\cG)$. We also have the following lemma, the proof of which is given in Appendix~\ref{app:tree}.

\begin{lemma}
For a PIN model defined on a graph $\cG$, we have $R_{\SK}\leq (m-2)\overline{\sigma}(\cG)$.
\label{lem:tree}
\end{lemma}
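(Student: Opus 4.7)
The plan is to construct an explicit SK-generating protocol that achieves the SK capacity $\cC(\cM) = \overline{\sigma}(\cG)$ (by \cite[Theorem 5]{NN10}) using public communication of rate at most $(m-2)\overline{\sigma}(\cG)$. The construction is based on a spanning-tree packing of $\cG^{(n)}$, together with an XOR-based propagation of a single edge variable per tree that keeps the key perfectly independent of the public transcript, in the spirit of the achievability scheme of Nitinawarat and Narayan.

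Concretely, I would fix a large $n$ and pick a maximum packing of $k := \sigma(\cG^{(n)})$ edge-disjoint spanning trees $T_1,\ldots,T_k$ in $\cG^{(n)}$. In each tree $T_j$ I would distinguish one edge $e_j^*$ and set $K_j := \xi_{e_j^*}$, so the candidate SK is $\textbf{K} := (K_1,\ldots,K_k)$. The two endpoints of $e_j^*$ already observe $K_j$; I would propagate $K_j$ to the remaining $m-2$ terminals as follows. Traversing $T_j \setminus \{e_j^*\}$ in BFS order from the endpoints of $e_j^*$, whenever an already-informed terminal $w$ has a yet-uninformed tree-neighbor $w'$ across an edge $e$, terminal $w$ transmits $K_j \oplus \xi_e$; then $w'$ recovers $K_j$ by XORing with its own copy of $\xi_e$. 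This uses exactly one transmission per newly informed terminal, i.e., $m-2$ bits per tree, for a total of $(m-2)k$ bits of communication. Each such transmission is a deterministic function of the transmitter's own observations and the previous communication, so this is a valid interactive communication.

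The security check is the cleanest step. Because the trees are edge-disjoint, the tuples $\bigl(\xi_e : e \in T_j\bigr)$ are mutually independent across $j$, so it suffices to argue for a single tree. For $T_j$ with non-distinguished edges $e_1,\ldots,e_{m-2}$, the map
$$
(K_j, \xi_{e_1}, \ldots, \xi_{e_{m-2}}) \ \mapsto \ \bigl(K_j,\, K_j \oplus \xi_{e_1},\, \ldots,\, K_j \oplus \xi_{e_{m-2}}\bigr)
$$
is a bijection of $\{0,1\}^{m-1}$ onto itself, and since the input is uniform the output is uniform as well. Hence $K_j$ is independent of the transmissions coming from $T_j$; concatenating across edge-disjoint trees gives $I(\textbf{K};\textbf{F}) = 0$ exactly. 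Every terminal is a vertex of every $T_j$ and decodes $\textbf{K}$ with zero error, so $\textbf{K}$ is a CR. Moreover $\frac{1}{n} H(\textbf{K}) = \frac{\sigma(\cG^{(n)})}{n} \to \overline{\sigma}(\cG) = \cC(\cM)$, so $\textbf{K}$ is a maximal-rate SK.

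Finally, the communication rate satisfies $\frac{1}{n} \log |\cF| \le \frac{(m-2)\sigma(\cG^{(n)})}{n} \to (m-2)\overline{\sigma}(\cG)$, so by Definition~\ref{def:RSKr} we obtain $R_{\SK} \le (m-2)\overline{\sigma}(\cG)$. There is no deep obstacle here; the only mild subtlety is ensuring that the transmission count per tree is $m-2$ rather than $m-1$, which exploits the fact that both endpoints of $e_j^*$ start the propagation already knowing $K_j$, and that the XOR-transmissions constitute a linear bijection which yields exact (rather than approximate) independence of key and transcript.
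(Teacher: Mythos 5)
Your proposal is correct and follows essentially the same strategy as the paper's proof in Appendix~B: pack $\sigma(\cG^{(n)})$ edge-disjoint spanning trees, extract one uniformly random bit per tree from a distinguished edge, and use $m-2$ XOR transmissions per tree to deliver that bit to all terminals with exact independence of key and transcript. Your BFS-style propagation is a minor implementation variant of Protocol~1 of \cite{TKSV12} (which the paper invokes, where each internal vertex broadcasts pairwise XORs of its incident tree edges), and both protocols yield the same rate $(m-2)\sigma(\cG^{(n)})/n \to (m-2)\overline{\sigma}(\cG)$.
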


\begin{IEEEproof}[Proof of Theorem~\ref{th:graph}]
The ``if'' part follows from Theorem~\ref{th:uh}. For the ``only if'' part, consider a PIN model on $\cG$ that is not of Type $\cS$. Using the fact that SK capacity equals the spanning tree packing rate, we then have via \eqref{eq:I}
\begin{gather*}
\overline{\sigma}(\cG) = \cC(\cM) < \Delta(\cS)=\frac{|\cE|}{m-1}.
\end{gather*}
Therefore, ${|\cE|} > (m-1)\overline{\sigma}(\cG)$, or equivalently, $|\cE| - \overline{\sigma}(\cG) > (m-2) \overline{\sigma}(\cG)$. Since $R_{\CO} = |\cE| - \overline{\sigma}(\cG)$, we obtain $R_{\CO} > R_{\SK}$ via Lemma~\ref{lem:tree}.
\end{IEEEproof} 

It is natural to ask at this point whether all Type $\cS$ sources (not necessarily PIN models) are $R_{\SK}$-maximal. The answer turns out to be ``No'', as shown by the following example. 

\begin{example}
Let $W$ be a Ber($p$) rv, for some $p \in [0,1]$: $\Pr[W = 1] = 1 - \Pr[W=0] = p$. Let $X_1,\ldots,X_m$ be random variables that are conditionally independent given $W$, with 
$$\Pr[X_i = 01 |  W = 0] = 1 -  \Pr[X_i = 00 |  W = 0] = 0.5$$ 
and 
$$\Pr[X_i = 11 |  W = 1] = 1 - \Pr[X_i = 10 |  W = 1] = 0.5$$
for $i = 1,2,\ldots,m$. Denote by $h(p)$ the binary entropy of $p$.

It is easy to check that $H(X_A)={|A|}+h(p)$  for all $A\subseteq\cM$, and $H(X_i | X_j) = 1$ for all distinct $i,j \in\cM$. Therefore, all partitions $\cP$ of $\cM$ satisfy $\Delta(\cP)=h(p)$, and hence, $\textbf{I}(X_{\cM})=h(p)$. In particular, $X_{\cM}$ defines a Type $\cS$ source. Furthermore, using \eqref{omni}, we have $R_{\CO}=m$. 

We now show that $R_{\SK}<R_{\CO}$. Consider a Slepian-Wolf code (see \cite[Section 10.3.2]{ElK11}) of rate $H(X_1| X_2)=1$ for terminal 1. All terminals can recover $X_1^n$ since $H(X_1 | X_i)=1$ for all $i\in \{2,3,\cdots,m\}$. Then, using the balanced coloring lemma \cite[Lemma B3]{CN04} on $X_1^n$, an SK of rate $H(X_1)-H(X_1| X_2)=h(p)$ can be obtained. Hence, $R_{\SK}\leq 1<m=R_{\CO}$.
\label{ex:omni}
\end{example}

In fact, there exist non $R_{\SK}$-maximal sources with $\cS$ being a \emph{unique} minimizer for \eqref{eq:I}. We provide one such example in Appendix~\ref{app:exunique}.

\section{Omnivocality: When is it necessary?}\label{sec:omnivocal}

It is a well-established fact (see \cite{Maurer93}, \cite{AC93}) that to generate a maximal-rate SK within a two-terminal source model, it is enough for only one terminal to communicate.\footnote{To be precise, the results in \cite{Maurer93} and \cite{AC93} are based on a weaker notion of secrecy, where in Definition~\ref{def:SK}, the condition $I(\BK;\BF)\leq\epsilon$ is replaced by $\frac{1}{n}I(\BK;\BF)\leq\epsilon$. However, it can be shown that one terminal communicating suffices to achieve SK capacity for $m=2$, in the stronger sense as in Definition~\ref{def:SK}. Terminal 1 uses a Slepian-Wolf code of rate $H(X_1|X_2)$ to communicate $X_1^n$ to terminal 2. Both terminals now use a balanced coloring (see \cite[Lemma B.3]{CN04}) on $X_1^n$ to get a strong SK of rate $I(X_1;X_2)=\textbf{I}(X_1,X_2)$.\label{foot:weak}} So it is natural to ask whether this fact extends to the general multiterminal setting. In other words, for $m\geq 3$, is there always an SK generation protocol involving $m-1$ or fewer terminals communicating that achieves SK capacity? If not, can we identify a class of sources where omnivocality, i.e., all terminals communicating, is required to achieve SK capacity? This section addresses these questions. The main result of this section says that if a source is strict Type $\cS$, then omnivocality is required for achieving SK capacity.

\begin{theorem}
For a strict Type $\cS$ source on $m \ge 3$ terminals, omnivocal communication is necessary for achieving SK capacity. 
\label{th:mge3}
\end{theorem}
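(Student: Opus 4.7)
The strategy is a reductio ad absurdum via reduction to a smaller source. Suppose, towards a contradiction, that for some terminal $j \in \cM$ there is a sequence of SK-capacity-achieving protocols in which terminal $j$ never transmits. My first step would be to argue that the same SK then qualifies as a capacity-achieving SK for the $(m-1)$-terminal subsource $X_{\cM \setminus \{j\}}$, yielding the bound $\cC(\cM) \le \cC(\cM \setminus \{j\})$. The second step completes the proof by showing that strict Type $\cS$ forces the opposite strict inequality $\cC(\cM) > \cC(\cM \setminus \{j\})$.

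For the reduction step, observe that when terminal $j$ is silent the public communication $\textbf{F}$ depends only on $X_{\cM \setminus \{j\}}^n$, every remaining terminal $i \ne j$ still recovers $\textbf{K}$ from $(X_i^n,\textbf{F})$, and the secrecy condition $I(\textbf{K};\textbf{F}) \le \epsilon$ is unchanged. After possibly replacing $\textbf{K}$ by the high-probability-equivalent $K_1(X_1^n,\textbf{F})$ so that the key is literally a function of $X_{\cM \setminus \{j\}}^n$, the pair $(\textbf{K},\textbf{F})$ constitutes a valid SK protocol for the source on $\cM \setminus \{j\}$. Its asymptotic rate, which equals $\cC(\cM)$ by assumption, is therefore upper bounded by $\cC(\cM \setminus \{j\})$.

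For the algebraic step, let $\cS'$ denote the singleton partition of $\cM \setminus \{j\}$, let $\cP_j = \{\{j\},\cM \setminus \{j\}\}$, and write $\Delta'(\cS') := \frac{1}{m-2}\bigl[\sum_{i \ne j} H(X_i) - H(X_{\cM \setminus \{j\}})\bigr]$, the analogue of $\Delta(\cS)$ for the subsource. Isolating the $H(X_j)$ term in $\sum_{i \in \cM} H(X_i)$ yields the elementary identity
\[
\Delta(\cS) \;=\; \frac{m-2}{m-1}\, \Delta'(\cS') \;+\; \frac{1}{m-1}\, \Delta(\cP_j).
\]
Strict Type $\cS$ gives $\Delta(\cS) < \Delta(\cP_j)$, valid since $\cP_j \ne \cS$ when $m \ge 3$, and substituting this strict inequality into the identity yields $\Delta(\cS) > \Delta'(\cS')$ after one line of algebra. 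Combining with $\cC(\cM) = \Delta(\cS)$ and the trivial bound $\cC(\cM \setminus \{j\}) = \textbf{I}(X_{\cM \setminus \{j\}}) \le \Delta'(\cS')$ gives $\cC(\cM) > \cC(\cM \setminus \{j\})$, contradicting the bound obtained in the reduction step.

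The only nontrivial point is the reduction step, where I must be careful that the SK $\textbf{K}$ constructed for $\cM$ with terminal $j$ silent can indeed be re-expressed as a function of $X_{\cM \setminus \{j\}}^n$ at essentially no cost in rate or secrecy; once this is settled, the algebraic identity combined with strict Type $\cS$ delivers the contradiction in a single line.
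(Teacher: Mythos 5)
Your argument shares the paper's algebraic core but takes a genuinely different route to bound the silent-terminal capacity. The identity $\Delta(\cS) = \frac{m-2}{m-1}\Delta'(\cS') + \frac{1}{m-1}\Delta(\cP_j)$, together with strict Type $\cS$ to yield $\Delta'(\cS') < \Delta(\cS)$, is precisely the content of the paper's Lemma~\ref{lem:Delta}: your $\Delta'(\cS')$ is their $\Delta_T(\cS)$ with $T=\cM\setminus\{j\}$, and they derive the same inequality via the equivalent rearrangement $\Delta_T(\cS)-\Delta(\cS)=\frac{1}{m-2}[\Delta(\cS)-I(X_j;X_T)]$. The divergence is upstream. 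The paper invokes the Gohari--Anantharam linear program $\Itxm = H(X_T)-\RTmin$ (Theorem~\ref{th:silent}) and lower-bounds $\RTmin$ in Lemma~\ref{lem:RTmin} to conclude $\Itxm \le \Delta_T(\cS)$ directly; you interpose the $(m-1)$-terminal subsource, arguing $\Itxm \le \cC(\cM\setminus\{j\}) \le \Delta'(\cS')$, where the second step is simply the SK-capacity formula applied to the subsource with $\cS'$ as one feasible partition. Your chain is arguably more transparent, replacing an LP bound with the intuitive statement that deleting a terminal (and its recoverability constraint) cannot decrease SK capacity.

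The step you flagged is indeed the one needing care, and the precise issue is the paper's \emph{strong} secrecy notion in Definition~\ref{def:SK}, which requires $I(\textbf{K};\textbf{F}) \le \epsilon$, not merely $\frac{1}{n}I(\textbf{K};\textbf{F}) \le \epsilon$. Replacing $\textbf{K}$ by the terminal-$1$ estimate $J_1(X_1^n,\textbf{F})$ gives $I(J_1;\textbf{F}) \le I(\textbf{K};\textbf{F}) + H(J_1\mid\textbf{K}) + H(\textbf{K}\mid J_1)$, and Fano controls the last two terms only by something of order $n\cdot\Pr[J_1\ne\textbf{K}]$, which the definitions do not force to vanish. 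So the replacement preserves weak secrecy cleanly but does not obviously preserve strong secrecy. Two ways to close the gap: (i)~establish $\Itxm\le\cC(\cM\setminus\{j\})$ at the LP level by observing that taking $B=B'\cup\{j\}$ in the constraints defining $\cR_T$ in Theorem~\ref{th:silent} reproduces every omniscience constraint $\sum_{i\in B'}R_i\ge H(X_{B'}\mid X_{T\setminus B'})$ for the subsource, so $\RTmin\ge R_{\CO}(\cM\setminus\{j\})$ and hence $\Itxm\le\cC(\cM\setminus\{j\})$; or (ii)~argue that the CR error probability may be assumed to decay faster than $1/n$ so that the Fano contribution vanishes. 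Route~(i) uses Theorem~\ref{th:silent} as a black box, which is in effect what the paper does anyway (via Lemmas~\ref{lem:Rm-1} and~\ref{lem:RTmin}); route~(ii) keeps your reduction self-contained but requires a separate justification for the error-probability speedup.
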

There indeed exist sources which are strict Type $\cS$. We give a few examples of such sources in Section~\ref{sec:singleton}. 

Theorem~\ref{th:mge3} gives a sufficient condition for identifying sources where omnivocality is necessary to generate a maximal-rate SK. The next result shows that the condition is also necessary when $m=3$, i.e., for any source on 3 terminals which is not strict Type $\cS$, there always exists a non-omnivocal key generation protocol that leads to SK capacity. 

\begin{theorem}
In the three-terminal source model, omnivocal communication is necessary for achieving SK capacity iff the singleton partition $\cS$ is the unique minimizer for $\textbf{I}(X_{\cM})$ in \eqref{eq:I}.
\label{th:meq3}
\end{theorem}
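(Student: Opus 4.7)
The statement is a biconditional. The ``$\Leftarrow$'' direction---that strict Type $\cS$ (i.e., $\cS$ being the unique minimizer in \eqref{eq:I}) implies the necessity of omnivocal communication---is exactly Theorem~\ref{th:mge3} specialized to $m = 3$, so it requires no further argument.

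For the ``$\Rightarrow$'' direction, I would argue the contrapositive: if $\cS$ is not the unique minimizer of \eqref{eq:I}, then there exists an SK-capacity-achieving protocol in which some terminal remains silent. The crucial structural fact for $m = 3$ is that the only partitions of $\cM = \{j, k, i\}$ of size at least two, besides $\cS$, are the three two-cell partitions $\cP_l \triangleq \{\cM \setminus \{l\}, \{l\}\}$, $l \in \cM$. Consequently, if $\cS$ is not the unique minimizer of \eqref{eq:I}, some $\cP_i$ must attain the minimum value $\Ixm$. Fixing such an $i$ and writing $A \triangleq \cM \setminus \{i\} = \{j, k\}$, we have $\Ixm = \Delta(\cP_i) = I(X_A; X_i)$, and the inequality $\Delta(\cP_i) \le \Delta(\cS)$ rearranges (using $H(X_\cM) - H(X_i) = H(X_A \mid X_i)$) to
\begin{equation}
H(X_j \mid X_k) + H(X_k \mid X_j) \;\le\; H(X_A \mid X_i).
\label{eq:keyineq}
\end{equation}

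I would now construct a protocol in which terminal $i$ stays silent. Terminals $j$ and $k$ transmit Slepian-Wolf codewords $\BF_j$ and $\BF_k$ at rates $R_j$ and $R_k$ chosen so that each of $j$, $k$, and $i$ can decode $X_A^n$ from its own observation together with the public transcript $\BF \triangleq (\BF_j, \BF_k)$. Standard Slepian-Wolf analysis makes the feasible region $\{R_j \ge H(X_j \mid X_k),\ R_k \ge H(X_k \mid X_j),\ R_j + R_k \ge H(X_A \mid X_i)\}$, and by \eqref{eq:keyineq} the minimum value of $R_j + R_k$ over this region equals $H(X_A \mid X_i)$. Applying the balanced coloring lemma \cite[Lemma~B.3]{CN04} to $X_A^n$ with $\BF$ as the eavesdropper's side information then extracts an SK $\BK$ of rate arbitrarily close to $H(X_A) - H(X_A \mid X_i) = I(X_A; X_i) = \Ixm$, with $I(\BK; \BF) \to 0$, since $\BF$ is a function of $X_A^n$ of rate $H(X_A \mid X_i)$. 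Thus SK capacity is achieved with terminal $i$ silent, showing that omnivocality is not necessary. The main obstacle is the LP minimization step enabled by \eqref{eq:keyineq}; this is precisely where $m = 3$ is essential, since for $m \ge 4$ an optimal partition in \eqref{eq:I} need not have a singleton cell identifying a silent terminal---which accommodates the counterexamples of \cite{Chan14} and \cite{ZLL15} noted in Section~\ref{sec:contrib}.
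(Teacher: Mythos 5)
Your proposal is correct, and it is somewhat cleaner than the paper's proof while following the same overall strategy (Theorem~\ref{th:mge3} for one direction; Slepian--Wolf plus balanced coloring for the contrapositive of the other). The paper splits the non-strict-Type-$\cS$ case into two subcases: Case~I, where $\Delta(\cS)$ dominates at least two of the three two-cell quantities, and Case~II, where it dominates exactly one. In Case~I the paper first derives the rigid structure $I(X_1;X_2)=I(X_1;X_3)$ and $I(X_2;X_3\mid X_1)=0$ and then exhibits a protocol in which a \emph{single} terminal talks; the two-terminal Slepian--Wolf argument is used only in Case~II. Your observation is that this case split is unnecessary for the theorem as stated: whenever $\cS$ is not the unique minimizer, some two-cell partition $\cP_i$ achieves $\Ixm$, the inequality $\Delta(\cP_i)\le\Delta(\cS)$ rearranges exactly to $H(X_j\mid X_k)+H(X_k\mid X_j)\le H(X_A\mid X_i)$ (which is the paper's \eqref{case2:eqb}), and the two-terminal Slepian--Wolf protocol with terminal $i$ silent achieves SK capacity. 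This handles both of the paper's cases in one stroke, at the modest cost of not extracting the stronger ``one terminal suffices'' conclusion in Case~I --- a conclusion the theorem does not need. Your closing remark about why this breaks down for $m\ge 4$ (a minimizing partition need not have a singleton cell to designate a silent terminal) is also the right diagnosis.
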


A conjecture was made in \cite{MKS14} that the necessity of omnivocality implies a strict Type $\cS$ source for any $m\geq 3$. It turns out that the conjecture is incorrect. Chan et al.\ have found an explicit example \cite[Example~C.1]{Chan14} of a non-strict Type $\cS$ PIN model on $m>3$ terminals that requires omnivocality to achieve SK capacity. For ease of reference, we reproduce this example in Appendix~\ref{app:chan}. Theorem~5 in \cite{ZLL15} indicates the existence of other such examples.

We now turn to the proofs of Theorems~\ref{th:mge3} and \ref{th:meq3}. We prove the former theorem first. The main technical result used in the proof is the SK capacity with silent terminals by Gohari and Anantharam in \cite[Theorem 6]{GA10}. More precisely, suppose we restrict ourselves to SK generation protocols where, only an arbitrary subset of terminals $T\subset \cM$ is allowed to communicate. We denote the maximum rate of SK that can be generated by such protocols by $\Itxm$. Then we have\footnote{Theorem 6 of \cite{GA10} was based on the weaker notion of secrecy pointed out in Footnote~\ref{foot:weak}. However, it can be easily verified that the result is still valid for the stronger notion of secrecy as in Definition~\ref{def:SK}.}

\begin{theorem}[Theorem~6, \cite{GA10}]
For any $T\subset\cM$, $\Itxm = H(X_{T})-R_{T}^{(\min)}$, where $\RTmin=\displaystyle\min_{\sfR\in\cR_{T}} \sum_{i \in T} R_i$, with 
\begin{equation}
\cR_{T}=\biggl\{\sfR = (R_i, i \in T) : \sum_{i\in B\cap T} R_i\geq H(X_{B\cap T}|X_{B^c}),\ \forall \, B\subsetneq\cM, \ B\cap T\neq\emptyset\biggr\}.
\end{equation}
\label{th:silent}
\end{theorem}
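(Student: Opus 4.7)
The statement is the silent-terminal generalization of the Csisz\'ar--Narayan SK-capacity theorem: setting $T = \cM$ collapses the LP defining $\RTmin$ to the omniscience LP \eqref{def:RCO} and recovers $\cC(\cM) = H(X_{\cM}) - R_{\CO}$. My plan is to prove achievability and converse in parallel with CN's original arguments, the key new idea being a reinterpretation of $\RTmin$ as the minimum rate of communication (from terminals in $T$) needed for every terminal in $\cM$ to reconstruct $X_T^n$; call this condition \emph{$T$-omniscience}.

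\textbf{Achievability.} First I would show that $\cR_T$ is exactly the rate region for $T$-omniscience. For each $B \subsetneq \cM$ with $B \cap T \neq \emptyset$, the standard cut-set argument across $(B, B^c)$ forces $\sum_{i \in B \cap T} R_i \geq H(X_{B \cap T} \mid X_{B^c})$: terminals in $B^c$ already observe $X_{B^c}$, so the missing portion $X_{B \cap T}$ must arrive through communication, and only terminals in $B \cap T$ can transmit (those in $B \setminus T$ are silent). A random-binning / Slepian--Wolf construction, in the style of the proof of \cite[Proposition~1]{CN04}, then shows that every rate vector in the interior of $\cR_T$ is achievable for $T$-omniscience, so a total rate of $\RTmin$ suffices. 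Once all terminals share $X_T^n$, applying the balanced coloring lemma \cite[Lemma~B.3]{CN04} to $X_T^n$ extracts a nearly uniform SK of rate $H(X_T) - \RTmin$ that is approximately independent of the communication, establishing $\Itxm \geq H(X_T) - \RTmin$.

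\textbf{Converse.} Let $\textbf{K}$ be an SK of rate $R$ obtained from an interactive communication $\textbf{F}$ restricted to terminals in $T$. Three observations drive the argument. First, since $\textbf{F}$ is a function of $X_T^n$ and $\textbf{K}$ is recoverable (via Fano) from $(X_i^n, \textbf{F})$ for any $i \in T$, we have $H(\textbf{K} \mid X_T^n) = o(n)$. Second, strong secrecy gives $I(\textbf{K}; \textbf{F}) \leq \epsilon$. Third, silent-terminal recoverability of $\textbf{K}$ from $(X_j^n, \textbf{F})$ at each $j \in T^c$ implies $H(\textbf{K} \mid X_{B^c}^n, \textbf{F}) = o(n)$ for every $B \subsetneq \cM$ with $B \cap T \neq \emptyset$. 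Now fix any feasible dual vector $\lambda = (\lambda_B)$ for the LP defining $\RTmin$ --- that is, $\lambda_B \geq 0$ with $\sum_{B \ni i} \lambda_B \leq 1$ for every $i \in T$. Weighting a Lemma~\ref{lemma:eq}-style entropy expansion by $\lambda$ and applying the three observations above should yield
\begin{equation*}
\tfrac{1}{n} H(\textbf{K}) \leq H(X_T) - \sum_B \lambda_B H(X_{B \cap T} \mid X_{B^c}) + o(1).
\end{equation*}
Maximizing over $\lambda$ and invoking LP strong duality converts the right-hand side into $H(X_T) - \RTmin + o(1)$, completing the converse.

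\textbf{Main obstacle.} The technical crux is the interactive nature of $\textbf{F}$: the messages from different terminals in $T$ are interleaved and each depends on the past, so the aggregate rate $H(\textbf{F})$ does not split cleanly as $\sum_{i \in T} H(\textbf{F}_i)$. Translating the per-cut constraints $H(X_{B \cap T} \mid X_{B^c})$ into a weighted lower bound on communication rates will require a silent-terminal analogue of Lemma~\ref{lem:comm} adapted to the $\cR_T$-dual weights; this is the main new ingredient beyond CN's proof. A secondary subtlety is that the dual LP of $\cR_T$ does not coincide with the fractional-partition LP in \eqref{skcapacity}, so the feasibility set for $\lambda$ (and the existence of a maximizer) must be characterized separately before the duality step can be invoked.
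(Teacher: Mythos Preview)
The paper does not prove this theorem at all: it is imported verbatim as \cite[Theorem~6]{GA10} and used as a black box in the proofs of Theorems~\ref{th:mge3} and~\ref{th:meq3}. The only remark the paper adds is the footnote noting that, although Gohari and Anantharam work with weak secrecy, the result carries over to the strong-secrecy setting of Definition~\ref{def:SK}. So there is no ``paper's own proof'' to compare against; your proposal is an attempt to reprove a result that the present paper simply cites.

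That said, your sketch is broadly in the right spirit and tracks the Csisz\'ar--Narayan template that Gohari--Anantharam themselves generalize. The achievability half is essentially correct: $T$-omniscience via Slepian--Wolf binning from the talking terminals, followed by balanced-coloring extraction on $X_T^n$, is exactly the mechanism. For the converse, your identification of the main obstacle is accurate --- you need a silent-terminal version of the interactive-communication inequality (the analogue of Lemma~\ref{lem:comm}) compatible with the $\cR_T$-dual variables, and this is indeed the nontrivial ingredient in \cite{GA10}. One caution: your displayed target inequality $\frac{1}{n}H(\textbf{K}) \le H(X_T) - \sum_B \lambda_B H(X_{B\cap T}\mid X_{B^c}) + o(1)$ does not follow from the three observations you list by a direct ``Lemma~\ref{lemma:eq}-style expansion'' alone; getting there requires first bounding $H(\textbf{K})$ by $H(X_T) - H(\textbf{F})$ plus error terms (via secrecy and recoverability), and then lower-bounding $H(\textbf{F})$ by the weighted sum of cut entropies using the interactive-communication lemma. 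You have the pieces, but the order of assembly in your sketch is slightly off.
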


Note that if $\Ixm > \Itxm$ for all $T \subset \cM$ of size ${|T|}=m-1$, then omnivocality is necessary for achieving SK capacity. Thus, our approach for showing that omnivocal communication is needed in certain cases is to use Theorem~\ref{th:silent} to prove that $\Ixm > \Itxm$ for all $(m-1)$-subsets $T \subset \cM$. For this, we will need a lower bound on $\RTmin$ when ${|T|} = m-1$. To prove this bound, we make use of a simpler characterization (than that given in Theorem~\ref{th:silent}) of the rate region $\cR_T$ when ${|T|}=m-1$.

\begin{lemma} Let $T = \cM \setminus \{u\}$ for some $u \in \cM$. 
The rate region $\cR_{T}$ is the set of all points $(R_i, \, i \in T)$ such that
\begin{align}
\displaystyle\sum_{i \in B} R_i & \ge H(X_{B} | X_{T \setminus B}) \quad \forall \, B \subsetneq T, B \ne \emptyset, \text{ and} \label{eq:lem}\\
\displaystyle \sum_{i\in T} R_i & \ge H(X_T| X_u). \notag
\end{align}
\label{lem:Rm-1}
\end{lemma}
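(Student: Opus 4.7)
My plan is to start from the definition of $\cR_T$ given in Theorem~\ref{th:silent} and partition its defining inequalities according to whether the silent terminal $u$ lies in the indexing subset $B$. For $T = \cM \setminus \{u\}$ and any $B \subsetneq \cM$ with $B \cap T \neq \emptyset$, two cases arise. If $u \notin B$, then $B \subseteq T$ and $B^c = (T \setminus B) \cup \{u\}$, producing the constraint $\sum_{i \in B} R_i \geq H(X_B | X_{T \setminus B}, X_u)$, which specializes to $\sum_{i \in T} R_i \geq H(X_T | X_u)$ in the boundary case $B = T$. If instead $u \in B$, writing $A = B \setminus \{u\}$ gives $B \cap T = A$ with $\emptyset \neq A \subsetneq T$ (the nonemptiness from $B \cap T \neq \emptyset$, and the strict inclusion from $B \subsetneq \cM$), and $B^c = T \setminus A$, yielding the constraint $\sum_{i \in A} R_i \geq H(X_A | X_{T \setminus A})$.

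Combining the two cases, $\cR_T$ is exactly the set of $(R_i, i \in T)$ satisfying three families of inequalities: (a) $\sum_{i \in B} R_i \geq H(X_B | X_{T \setminus B}, X_u)$ for each $\emptyset \neq B \subsetneq T$; (b) $\sum_{i \in T} R_i \geq H(X_T | X_u)$; and (c) $\sum_{i \in B} R_i \geq H(X_B | X_{T \setminus B})$ for each $\emptyset \neq B \subsetneq T$. The description in the lemma retains only (b) and (c), so it suffices to show that every constraint in family (a) is implied by the corresponding constraint in family (c).

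This implication is immediate from the fact that additional conditioning does not increase entropy: for every $\emptyset \neq B \subsetneq T$ we have $H(X_B | X_{T \setminus B}) \geq H(X_B | X_{T \setminus B}, X_u)$, so any rate tuple satisfying (c) automatically satisfies (a). Consequently the region carved out by (b) and (c) alone coincides with $\cR_T$, which is the claim. No serious obstacle is expected; the argument is a bookkeeping case split on whether $u \in B$, followed by one invocation of the monotonicity of conditional entropy to eliminate the redundant family (a).
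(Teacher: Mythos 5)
Your proposal is correct and follows essentially the same route as the paper: split the inequalities defining $\cR_T$ according to whether $u \in B$, observe that the $u \notin B$ case (with $B \ne T$) produces the weaker constraint $\sum_{i \in B} R_i \ge H(X_B \mid X_{T \setminus B}, X_u)$, and discard it in favor of the $u \in B$ case by monotonicity of conditional entropy under extra conditioning. The only cosmetic difference is that you index the case split by the original subset $B$, whereas the paper indexes it by the resulting subset $B' = B \cap T$; the underlying argument is identical.
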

\begin{IEEEproof}
Observe that $\cR_T$ is defined by constraints on sums of the form $\sum_{i \in B'} R_i$ for non-empty subsets $B' \subseteq T$. When $B' = T$, the constraint is simply $\sum_{i \in T} R_i \ge H(X_T| X_u)$.

Now, consider any non-empty $B' \subsetneq T$. From Theorem~\ref{th:silent}, we see that constraints on $\sum_{i \in B'} R_i$ arise as constraints on $\sum_{i \in B \cap T} R_i$ in two ways: when $B = B'$ and when $B = B' \cup \{u\}$.  Thus, we have two constraints on $\sum_{i \in B'} R_i$: 
$$
\sum_{i \in B'} R_i \ge H(X_{B'} | X_{\cM \setminus B'}),
$$
obtained when $B = B'$, and 
$$
\sum_{i \in B'} R_i \ge H(X_{B'} | X_{T \setminus B'}),
$$
obtained when $B = B' \cup \{u\}$. The latter constraint is clearly stronger, so we can safely discard the former.
\end{IEEEproof}

We can now prove the desired lower bound on $\RTmin$.

\begin{lemma}
Let $m \ge 3$ be given. For $T \subset \cM$ with ${|T|} = m-1$, we have 
$$
\RTmin \ge \frac{1}{m-2} \sum_{j \in T} H(X_{T \setminus \{j\}} | X_j).
$$
\label{lem:RTmin}
\end{lemma}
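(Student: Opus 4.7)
The plan is to obtain the bound by summing a carefully chosen family of the constraints that define $\cR_{T}$ in Lemma~\ref{lem:Rm-1}, and nothing more. Since $|T| = m-1 \ge 2$, for every $j \in T$ the set $B_j := T \setminus \{j\}$ is a nonempty proper subset of $T$, so \eqref{eq:lem} supplies the inequality
\[
\sum_{i \in B_j} R_i \;\ge\; H(X_{B_j} \mid X_{T \setminus B_j}) \;=\; H(X_{T \setminus \{j\}} \mid X_j)
\]
for each $j \in T$. The other constraints in Lemma~\ref{lem:Rm-1} will not be used.

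Next I would simply add these $m-1$ inequalities over $j \in T$. On the left-hand side, each $R_i$ with $i \in T$ is counted once for every $j \in T$ different from $i$, i.e., exactly $|T| - 1 = m-2$ times. Thus
\[
(m-2) \sum_{i \in T} R_i \;\ge\; \sum_{j \in T} H(X_{T \setminus \{j\}} \mid X_j).
\]
Dividing by $m-2$ (which is positive since $m \ge 3$) yields the claimed bound on $\sum_{i\in T} R_i$. Because this holds for every feasible point $(R_i)_{i \in T} \in \cR_{T}$, it also holds for the minimizing point, giving $\RTmin \ge \frac{1}{m-2}\sum_{j\in T} H(X_{T\setminus\{j\}}\mid X_j)$.

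There is no real obstacle here; the argument is essentially a one-line linear-programming relaxation, and the only subtlety is choosing the ``right'' subfamily of constraints. Picking the co-singleton sets $B_j = T\setminus\{j\}$ is natural because their defining entropies $H(X_{T \setminus \{j\}}\mid X_j)$ are precisely the quantities appearing in the target bound, and because each index $i \in T$ is then missed by exactly one constraint, producing the clean coefficient $m-2$. The only sanity check to perform is that the argument remains valid in the boundary case $m=3$, where $m-2 = 1$ and the bound degenerates to $\RTmin \ge \sum_{j \in T} H(X_{T\setminus\{j\}}\mid X_j)$, which is consistent with Lemma~\ref{lem:Rm-1}.
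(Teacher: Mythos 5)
Your proof is correct and takes essentially the same route as the paper: pick the $m-1$ constraints from Lemma~\ref{lem:Rm-1} indexed by the co-singleton sets $B_j = T\setminus\{j\}$, sum them, and count that each $R_i$ appears $m-2$ times. This is exactly the argument in the paper's proof of Lemma~\ref{lem:RTmin}.
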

\begin{IEEEproof}
Consider any $T \subset \cM$  with ${|T|} = m-1$. For each $j \in T$, let $B_j = T \setminus \{j\}$. Now, let $(R_i, i \in T)$ be any point in $\cR_T$. Applying \eqref{eq:lem} with $B = B_j$, we get
$$
\sum_{i \in B_j} R_i \ge H(X_{T\setminus \{j\}} | X_j),
$$
for each $j \in T$. Summing over all $j \in T$, we obtain
\beq
\sum_{j \in T} \sum_{i \in B_j} R_i \ge \sum_{j \in T} H(X_{T\setminus \{j\}} | X_j).
\label{RTmin_eqa}
\eeq
Exchanging the order of summation in the double sum on the left-hand side (LHS) above, we have
\begin{gather}
\sum_{j \in T} \sum_{i \in B_j} R_i = \sum_{i \in T} \sum_{j \in B_i} R_i = \sum_{i \in T} (m-2)R_i \ = \ (m-2) \sum_{i \in T} R_i. \notag
\end{gather}
Putting this back into \eqref{RTmin_eqa}, we get 
$$
\sum_{i \in T} R_i \ge \frac{1}{m-2} \sum_{j \in T} H(X_{T\setminus \{j\}} | X_j).
$$
Since this holds for any point $(R_i, i \in T) \in \cR_T$, the lemma follows.
\end{IEEEproof}  

For the proof of Theorem~\ref{th:mge3}, we need some convenient notation. For $T \subset \cM$, ${|T|} = m-1$, define $\Delta_T(\cS) \triangleq \frac{1}{m-2}[\sum_{i \in T} H(X_i) - H(X_T)]$.

\begin{lemma}
For $m \ge 3$ terminals, if the singleton partition $\cS$ is the unique minimizer for $\Ixm$, then $\Delta_T(\cS)  <  \Delta(\cS)$ for all $T \subset \cM$ with ${|T|} = m-1$.
\label{lem:Delta}
\end{lemma}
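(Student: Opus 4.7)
The plan is to derive the required strict inequality directly from the uniqueness hypothesis by testing it against a single cleverly chosen competing partition of $\cM$. Set $u$ to be the unique element of $\cM\setminus T$, and consider the two-cell partition $\cP_u \triangleq \{T, \{u\}\}$. Because $m \ge 3$, $\cP_u$ has only two cells whereas $\cS$ has $m \ge 3$ cells, so in particular $\cP_u \ne \cS$; the assumption that $\cS$ is the unique minimizer in \eqref{eq:I} therefore yields the strict inequality $\Delta(\cP_u) > \Delta(\cS)$.

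Next, I would expand both sides of $\Delta(\cP_u) > \Delta(\cS)$ using their definitions. Since $|\cP_u|-1 = 1$, the left-hand side is simply $H(X_T)+H(X_u)-H(X_{\cM})$, while the right-hand side is $\frac{1}{m-1}\bigl[\sum_{i \in T} H(X_i) + H(X_u) - H(X_{\cM})\bigr]$, after separating the $H(X_u)$ term out of $\sum_{i \in \cM} H(X_i)$. Clearing the denominator $m-1$, collecting terms, and then dividing through by $(m-1)(m-2)$ is a routine manipulation that should bring the inequality into exactly the form $\Delta_T(\cS) < \Delta(\cS)$.

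The main thing to get right is the choice of the competing partition $\cP_u$: it is the unique coarsening of $\cS$ that fuses the $m-1$ singletons in $T$ into a single block, which is why it picks up the entropy term $H(X_T)$ needed to produce $\Delta_T(\cS)$ after rearrangement. I do not foresee any genuine obstacle beyond this observation; once $\cP_u$ is in hand the proof is a purely algebraic unpacking of $\Delta(\cP_u) > \Delta(\cS)$, and the hypothesis $m \ge 3$ enters only to guarantee $\cP_u \ne \cS$ so that the strict comparison from uniqueness is actually available.
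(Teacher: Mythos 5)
Your proof is correct and follows essentially the same route as the paper's: both pick the two-cell partition $\{T,\{u\}\}$ as the competitor, invoke uniqueness of $\cS$ to get the strict inequality $\Delta(\{T,\{u\}\}) > \Delta(\cS)$, and then check algebraically that this is equivalent to $\Delta_T(\cS) < \Delta(\cS)$. The only cosmetic difference is that the paper packages the algebra as the identity $\frac{m-1}{m-2}\Delta(\cS) = \Delta_T(\cS) + \frac{1}{m-2} I(X_u;X_T)$ (noting $I(X_u;X_T) = \Delta(\{T,\{u\}\})$), whereas you verify the equivalence by clearing denominators directly; the underlying manipulation is the same, and your claim that $m\ge 3$ is used only to guarantee $\cP_u \ne \cS$ is accurate.
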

\begin{IEEEproof}
For any $u \in \cM$, consider $T = \cM \setminus \{u\}$. Using $\Delta(\cS) = \frac{1}{m-1} [\sum_{i=1}^m H(X_i) - H(X_{\cM})]$ and the definition of $\Delta_T(\cS)$ above, it is easy to verify the identity
$$
{\textstyle
\frac{m-1}{m-2} \Delta(\cS) = \Delta_T(\cS) + \frac{1}{m-2} I(X_u;X_T).
}
$$
Re-arranging the above, we obtain
\begin{align}
\Delta_T(\cS) - \Delta(\cS) & = {\textstyle \frac{1}{m-2}}[\Delta(\cS) - I(X_u;X_T)] \notag \\
& = {\textstyle \frac{1}{m-2}} [\Delta(\cS) - \Delta(\cP)], \label{Ddiff}
\end{align}
where $\cP$ is the 2-cell partition $\bigl\{\{u\},T \bigr\}$ of $\cM$. By assumption, the expression in \eqref{Ddiff} is strictly negative.
\end{IEEEproof}

\medskip

With this, we are ready to prove Theorem~\ref{th:mge3}.

\begin{IEEEproof}[Proof of Theorem~\ref{th:mge3}]
We will show that $\Ixm > \Itxm$ for any $T \subset \cM$ with ${|T|}=m-1$. First, note that since $\cS$ is, by assumption, a minimizer for \eqref{eq:I}, we have $\Ixm = \Delta(\cS)$. Next, by Theorem~\ref{th:silent} and Lemma~\ref{lem:RTmin}, we have
\begin{align*}
\Itxm  & \leq H(X_T) - {\textstyle \frac{1}{m-2}} \sum_{i\in T}H(X_{T\setminus\{i\}}|X_i) \notag \\
& = {\textstyle \frac{1}{m-2}} \biggl[(m-2) H(X_T) - \sum_{i\in T} [H(X_T) - H(X_i)] \biggr] \notag \\
& = \Delta_T(\cS).
\end{align*}
Therefore, $\Itxm \leq \Delta_T(\cS) < \Delta(\cS) = \Ixm$, the second inequality coming from Lemma~\ref{lem:Delta}. 
\end{IEEEproof}

\medskip

We conclude this section with the proof of Theorem~\ref{th:meq3}. Note that when $m=3$, \eqref{eq:I} reduces to 
\begin{equation}
\Ixm = \min\bigl\{  I(X_{\{1,2\}};X_3), I(X_{\{1,3\}};X_2), I(X_{\{2,3\}};X_1), \; \Delta(\cS)\bigr\}, \label{eq:I3}
\end{equation}
and so, the unique minimizer condition is equivalent to 
$$\Delta(\cS) < \min\{I(X_{\{1,2\}};X_3), I(X_{\{1,3\}};X_2), I(X_{\{2,3\}};X_1)\}.$$
Note also that $\Delta(\cS) = \frac12[H(X_1) + H(X_2) + H(X_3) - H(X_{\{1,2,3\}})]$.

\medskip

\begin{IEEEproof}[Proof of Theorem~\ref{th:meq3}]
The ``if'' part is by Theorem~\ref{th:mge3}. For the ``only if'' part, suppose that $\Delta(\cS) \ge\min\{I(X_{\{1,2\}};X_3),$ $ I(X_{\{1,3\}};X_2), I(X_{\{2,3\}};X_1)\}$. Then, $\Delta(\cS)$ is either (a)~greater than or equal to at least two of the three terms in the minimum, or (b)~greater than or equal to exactly one term. Up to symmetry, it suffices to distinguish between two cases:

 Case I: $\Delta(\cS) \ge \max\{I(X_{\{1,2\}};X_3), I(X_{\{1,3\}};X_2)\}$

 Case II: $\min\{I(X_{\{1,3\}};X_2), I(X_{\{2,3\}};X_1)\} > \Delta(\cS) \ge I(X_{\{1,2\}};X_3)$ 

\noindent In each case, we demonstrate a capacity-achieving communication in which at least one terminal remains silent.

\medskip

We deal with Case~I first. Observe that $\Delta(\cS)$ can be written as $\frac12[I(X_1;X_2) + I(X_{\{1,2\}};X_3)]$. Thus, the assumption $\Delta(\cS) \ge I(X_{\{1,2\}};X_3)$, upon some re-organization, yields $I(X_1;X_2) \ge I(X_{\{1,2\}};X_3)$, i.e.,
\begin{equation}
I(X_1;X_2) \ge I(X_1;X_3) + I(X_2;X_3 | X_1).
\label{case1:eqa}
\end{equation}
 Similarly, using the identity $\Delta(\cS)=\frac12[I(X_1;X_3) + I(X_{\{1,3\}};X_2)]$ in the assumption $\Delta(\cS) \ge I(X_{\{1,3\}};X_2)$, we obtain $I(X_1;X_3) \ge I(X_{\{1,3\}};X_2)$, i.e.,
\begin{equation}
I(X_1;X_3) \ge I(X_1;X_2) + I(X_2;X_3 | X_1).
\label{case1:eqb}
\end{equation}
The equalities in \eqref{case1:eqa} and \eqref{case1:eqb} can simultaneously hold iff 
\begin{equation}
\begin{gathered}
I(X_1;X_2) = I(X_1;X_3) \ \ \text{ and } \\
I(X_2;X_3|X_1) = 0.
\end{gathered}
\label{case1:eqc}
\end{equation}
From \eqref{case1:eqc}, it is not hard to deduce that the quantities $I(X_{\{1,2\}};X_3)$, $I(X_{\{1,3\}};X_2)$ and $\Delta(\cS)$ are all equal to $I(X_1;X_2)$, and $I(X_{\{2,3\}};X_1)=I(X_1;X_2)+I(X_1;X_3|X_2)\geq I(X_1;X_2)$. In particular, $\textbf{I}(X_{\{1,2,3\}}) = I(X_1;X_2)$.

From the first equality in \eqref{case1:eqc}, we also have $H(X_1|X_2) = H(X_1|X_3)$. Now, it can be shown by a standard random binning argument that there exists a communication from terminal $1$ of rate $H(X_1|X_2) = H(X_1|X_3)$ such that $X_1^n$ is a CR. It then follows from the ``balanced coloring lemma'' \cite[Lemma~B.3]{CN04} that an SK rate of $H(X_1) - H(X_1|X_2) = I(X_1;X_2)$ is achievable. Thus, the SK capacity, $\textbf{I}(X_{\{1,2,3\}}) = I(X_1;X_2)$, is achievable by a communication in which terminals $2$ and $3$ are both silent.

\medskip

Now, consider Case~II, in which we obviously have $\textbf{I}(X_{\{1,2,3\}}) = I(X_{\{1,2\}};X_3)$. The idea here is to show that a valid communication of rate $H(X_{\{1,2\}}|X_3)$ exists in which terminal $3$ is silent and $(X_1^n,X_2^n)$ is a CR. Given this, an application of \cite[Lemma~B.3]{CN04} shows that an SK rate of $H(X_{\{1,2\}}) - H(X_{\{1,2\}}|X_3) = I(X_{\{1,2\}};X_3)$ is achievable. Thus, there is a $\textbf{I}(X_{\{1,2,3\}})$-achieving communication in which terminal $3$ is silent. 

To show that the desired communication exists, we argue as follows. For $i=1,2$, let $R_i$ be the rate at which terminal $i$ communicates. A standard random binning argument shows that an achievable $(R_1,R_2)$ region, with terminal $3$ silent, for a communication intended to allow recoverability of $(X_1^n,X_2^n)$ as CR at all terminals is given by 
\begin{equation}
\begin{gathered}
R_1 \ge H(X_1 | X_2), \ \ R_2 \ge H(X_2|X_1), \\
R_1+R_2 \ge H(X_{\{1,2\}}|X_3).
\end{gathered}
\label{case2:eqa}
\end{equation}

Now, using the assumption in Case~II that $\Delta(\cS) \ge I(X_{\{1,2\}};X_3)$, we will prove that the inequality 
\beq
H(X_1 | X_2) + H(X_2|X_1) \le H(X_{\{1,2\}}|X_3)
\label{case2:eqb}
\eeq
holds. It would then follow from \eqref{case2:eqa} that there exist achievable rate pairs $(R_1,R_2)$ with $R_1 + R_2 = H(X_{\{1,2\}}|X_3)$, thus completing the proof for Case~II. 

So, let us prove \eqref{case2:eqb}. We have $\Delta(\cS) =  \frac12[H(X_1) + H(X_2) + H(X_3) - H(X_{\{1,2,3\}})]$ and $I(X_{\{1,2\}};X_3) = H(X_{\{1,2\}})+H(X_3)-H(X_{\{1,2,3\}})$. Using these expressions in the inequality $\Delta(\cS) \ge I(X_{\{1,2\}};X_3)$, and re-arranging terms, we obtain
$$
\frac12[H(X_1)+H(X_2) - 2H(X_{\{1,2\}})] \ge \frac12[H(X_3) - H(X_{\{1,2,3\}})],
$$
which is equivalent to \eqref{case2:eqb}. This completes the proof of the theorem.
\end{IEEEproof}

\section{Finding the minimizing partition}\label{sec:singleton}

The condition that the singleton partition be a unique minimizer for $\mathbf{I}(X_{\cM})$ plays a key role in the results of Section~\ref{sec:RSKmax} and \ref{sec:omnivocal}. Thus, it would be very useful to have a way of checking whether this condition holds for a given source $X_{\cM}$, $m \ge 3$. The brute force method of comparing $\Delta(\cS)$ with $\Delta(\cP)$ for all partitions $\cP$ with at least two parts requires an enormous amount of computation. Indeed, the number of partitions of an $m$-element set is the $m$th Bell number, $B_m$, an asymptotic estimate for which is $(\log w)^{1/2} w^{m-w}e^w$, where $w = \frac{m}{\log m} \, [1+o(1)]$ is the solution to the equation $m = w \log(w+1)$ \cite[Example~5.4]{Odl95}. The proposition below brings down the number of comparisons required for verifying the unique minimizer condition to a ``mere'' $2^m-m-2$. 

For any non-empty subset $B=\{b_1,b_2,\ldots, b_{|B|}\}$ of $\cM$ with $|B| \ <m$, define $\cP_B\triangleq \bigl\{B^c,\{b_1\},\{b_2\},\ldots,\{b_{|B|}\}\bigr\}$ to be the partition of $\cM$ containing $|B|+1$ cells, of which $|B|$ cells are singletons comprising the elements of $B$. Note that if $|B|\ = m-1$, then $\cP_B = \cS$.
\begin{proposition}
For $m \ge 3$, let $\Omega = \{B \subset \cM: 1 \le |B| \, \le m-2\}$. The singleton partition $\cS$ is \\
\emph{(a)} a minimizer for $\mathbf{I}(X_{\cM})$ iff $\Delta(\cS) \le \Delta(\cP_B)$ $\forall\,B \in \Omega$; \\
\emph{(b)} the unique minimizer for $\mathbf{I}(X_{\cM})$ iff $\Delta(\cS) < \Delta(\cP_B)$ $\forall\,B \in \Omega$.
\label{prop:min}
\end{proposition}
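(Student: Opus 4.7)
The ``only if'' direction in both (a) and (b) is immediate: for any $B \in \Omega$, the partition $\cP_B$ has at least two cells and differs from $\cS$ (since $|B^c| \ge 2$), so if $\cS$ is a (resp., unique) minimizer for $\textbf{I}(X_{\cM})$, then $\Delta(\cS) \le \Delta(\cP_B)$ (resp., strict inequality) is forced by the definition of $\textbf{I}(X_{\cM})$.

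The heart of the proof of the ``if'' direction is a linear identity expressing $\Delta(\cP)$, for an arbitrary partition $\cP = \{A_1, A_2, \ldots, A_\ell\}$ with $\ell \ge 2$, as a weighted combination of the $\Delta(\cP_{\cM \setminus A_j})$'s and $\Delta(\cS)$. Specifically, I would establish
\[
\sum_{j=1}^{\ell} (m - |A_j|) \, \Delta(\cP_{\cM \setminus A_j}) \;=\; (\ell-1)\bigl[\Delta(\cP) + (m-1)\Delta(\cS)\bigr]
\]
by expanding each summand using its definition: $(m - |A_j|)\Delta(\cP_{\cM \setminus A_j}) = H(X_{A_j}) + \sum_{i \notin A_j} H(X_i) - H(X_{\cM})$. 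Summing over $j$ and using the interchange $\sum_{j} \sum_{i \notin A_j} H(X_i) = (\ell - 1) \sum_{i \in \cM} H(X_i)$ together with $(\ell-1)\Delta(\cP) = \sum_j H(X_{A_j}) - H(X_{\cM})$ and $(m-1)\Delta(\cS) = \sum_i H(X_i) - H(X_{\cM})$ then collapses everything to the stated identity. This step is routine bookkeeping, with $\sum_j (m-|A_j|) = m(\ell-1)$ being the only combinatorial fact needed along the way.

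Given this identity, the ``if'' direction of (a) follows by a case split on $|A_j|$. If $|A_j| = 1$ then $\cM \setminus A_j$ has size $m-1$, so $\cP_{\cM \setminus A_j} = \cS$ and the corresponding summand equals $(m-1)\Delta(\cS)$. If $|A_j| \ge 2$ then $\cM \setminus A_j \in \Omega$, and the hypothesis yields $\Delta(\cP_{\cM \setminus A_j}) \ge \Delta(\cS)$. Substituting these lower bounds into the identity and simplifying via $\sum_j (m-|A_j|) = m(\ell-1)$ (cancelling $(\ell-1)(m-1)\Delta(\cS)$ from both sides) gives $\Delta(\cP) \ge \Delta(\cS)$, which is what is needed.

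For part~(b), any $\cP \ne \cS$ with $\ell \ge 2$ cells must contain at least one cell $A_j$ with $|A_j| \ge 2$, and for such $j$ the coefficient $m - |A_j|$ is strictly positive (since $|A_j| \le m-1$). Hence the strict hypothesis $\Delta(\cP_{\cM \setminus A_j}) > \Delta(\cS)$ contributes a strict increase on the left-hand side of the identity, forcing $\Delta(\cP) > \Delta(\cS)$. I expect no substantive obstacle beyond verifying the identity carefully; once it is in hand, both parts follow from elementary linearity.
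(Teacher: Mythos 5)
Your proof is correct and follows essentially the same route as the paper's. The key identity you write down is exactly the one in the paper's proof: since $m - |A_j| = |A_j^c|$ and $\cP_{\cM\setminus A_j} = \cP_{A_j^c}$, your
\[
\sum_{j=1}^{\ell} (m-|A_j|)\,\Delta(\cP_{\cM\setminus A_j}) = (\ell-1)\bigl[\Delta(\cP) + (m-1)\Delta(\cS)\bigr]
\]
is a literal restatement of the paper's $\sum_{A\in\cP}|A^c|\,\Delta(\cP_{A^c}) = (|\cP|-1)[\Delta(\cP)+(m-1)\Delta(\cS)]$, and the downstream bookkeeping (the fact $\sum_j (m-|A_j|)=m(\ell-1)$, the observation that singleton cells $A_j$ give $\cP_{\cM\setminus A_j}=\cS$ so that term contributes with equality, and the observation that a non-$\cS$ partition must have a non-singleton cell to force strict inequality) matches the paper's argument. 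The only cosmetic difference is that the paper proves part (b) and obtains (a) by replacing strict with non-strict inequality, whereas you argue (a) first and then (b); and you supply a derivation of the identity where the paper merely asserts it.
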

There is in fact a strongly polynomial-time algorithm (see \cite{Chan14}) for determining the minimizing partition of \eqref{eq:I}. However, Proposition~\ref{prop:min} is better suited to the purposes of our work. 
\begin{IEEEproof}[Proof of Proposition~\ref{prop:min}]
We prove~(b); for~(a), we simply have to replace the `$>$' in \eqref{ifeqa} below with a `$\ge$'. 
 
The ``only if'' part is obvious. For the ``if'' part, suppose that $\Delta(\cS) < \Delta(\cP_B)$ for all $B \subset \cM$ with $1 \le |B| \ \le m-2$. Consider any partition $\cP$ of $\cM$, $\cP \ne \cS$, with $|\cP|\ \ge 2$. We wish to show that $\Delta(\cP) > \Delta(\cS)$. 

The following identity can be obtained from the definition of $\Delta(\cP)$ by some re-grouping of terms:
$$
\sum_{A \in \cP} |A^c| \, \Delta(\cP_{A^c}) = (|\cP|-1)[\Delta(\cP) + (m-1)\Delta(\cS)].
$$
Thus, we have
\begin{align}
\Delta(\cP) &= \frac{1}{|\cP|-1} \sum_{A \in \cP} |A^c| \, \Delta(\cP_{A^c}) - (m-1)\Delta(\cS) \notag \\
& > \frac{1}{|\cP|-1} \sum_{A \in \cP} |A^c| \, \Delta(\cS) - (m-1)\Delta(\cS) \label{ifeqa} \\
& = m\Delta(\cS) - (m-1) \Delta(\cS) \ = \ \Delta(\cS). \label{ifeqb} 
\end{align}
The inequality in \eqref{ifeqa} is due to the fact that at least one $A \in \cP$ is not a singleton cell, so that $\cP_{A^c} \ne \cS$, and hence, $\Delta(\cP_{A^c}) > \Delta(\cS)$ by assumption. To verify the first equality in \eqref{ifeqb}, observe that $\sum_{A\in\cP}|A^c| = \sum_{A\in\cP}\sum_{i \notin A} 1 = \sum_{i=1}^m \sum_{A \in \cP: i \notin A} 1 = m(|\cP|-1)$.
\end{IEEEproof}

\medskip

Next, we apply Proposition~\ref{prop:min} to some interesting special cases. Random variables $X_1,X_2,\ldots,X_m$, $m \ge 2$, are called \emph{isentropic} if $H(X_A) = H(X_B)$ for any pair of non-empty subsets $A, B \subseteq \cM$ having the same cardinality. Equivalently, $X_1,\ldots,X_m$ are isentropic if, for all non-empty $A \subseteq \cM$, the entropy $H(X_A)$ depends only on $|A|$. As a result, for disjoint subsets $A,B\subseteq\cM$, conditional entropies of the form $H(X_A|X_B)$ depend only on $|A|$ and $|B|$.

\begin{corollary}
\label{isentropic}
Isentropic random variables form a Type $\cS$ source.
\end{corollary}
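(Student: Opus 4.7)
The plan is to verify the hypothesis of Proposition~\ref{prop:min}(a), namely that $\Delta(\cS) \le \Delta(\cP_B)$ for every $B \subset \cM$ with $1 \le |B| \le m-2$. For an isentropic source, write $h_k := H(X_A)$ for any $A \subseteq \cM$ with $|A|=k$, so that each $\Delta$-quantity depends only on the sizes of the cells. Concretely,
\[
\Delta(\cS) = \frac{m h_1 - h_m}{m-1}, \qquad \Delta(\cP_B) = \frac{h_{m-k} + k h_1 - h_m}{k} \ \text{with } k=|B|.
\]

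Cross-multiplying and simplifying, the desired inequality $\Delta(\cS) \le \Delta(\cP_B)$ reduces, after a short algebraic manipulation, to
\[
k\, h_1 + (m-1-k)\, h_m \le (m-1)\, h_{m-k}, \qquad 1 \le k \le m-2.
\]
So the whole statement boils down to establishing this one inequality about the sequence $(h_0,h_1,\ldots,h_m)$, with $h_0=0$.

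Next, I would verify that $k \mapsto h_k$ is concave on $\{0,1,\ldots,m\}$. For any $k \in \{1,\ldots,m-1\}$, pick disjoint singletons $\{i\},\{j\} \subseteq \cM$ and any $(k-1)$-subset $C \subseteq \cM \setminus \{i,j\}$; applying submodularity of entropy to $A = C \cup \{i\}$, $B = C \cup \{j\}$ and using isentropy gives $2h_k \ge h_{k+1} + h_{k-1}$, i.e., the discrete second differences are nonpositive. Equivalently, $h_j$ lies on or above the chord through $(1,h_1)$ and $(m,h_m)$:
\[
h_j \ge \frac{m-j}{m-1}\, h_1 + \frac{j-1}{m-1}\, h_m \quad \text{for } 1 \le j \le m.
\]
Setting $j = m-k$ yields precisely $(m-1) h_{m-k} \ge k h_1 + (m-1-k) h_m$, which is what we needed.

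The main obstacle (which is mild) is simply recognizing that the Proposition~\ref{prop:min}(a) test reduces, in the isentropic case, to a one-parameter inequality about $h_k$, and then invoking the standard submodularity-implies-concavity argument. The result follows immediately once both pieces are in place.
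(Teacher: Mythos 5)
Your proof is correct, and although both you and the paper start from Proposition~\ref{prop:min}(a), the intermediate lemma is genuinely different. The paper works with the conditional-entropy sequence $g(k) := H(X_{\{1,\ldots,k\}}\,|\,X_{\cM\setminus\{1,\ldots,k\}})$, which in your notation equals $h_m - h_{m-k}$, and proves $g(k)/k$ non-decreasing via the subadditivity bound $g(k+1)\le(k+1)\,H(X_{k+1}\,|\,X_{\{k+2,\ldots,m\}})$. You instead prove discrete concavity of the unconditional entropy profile $h_k$ by a single invocation of submodularity, and then read off the required inequality from the chord-interpolation property of concave sequences. The two routes are equivalent in strength --- $g$ being convex with $g(0)=0$ (hence $g(k)/k$ non-decreasing) is exactly a reparametrization of $h$ being concave --- but your version isolates a cleaner and more reusable structural fact, namely that for isentropic variables $k\mapsto H(X_{\{1,\ldots,k\}})$ is concave. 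One small remark: the interpolation step between $j=1$ and $j=m$ only requires concavity on $\{1,\ldots,m\}$, so the $h_0=0$ convention and the $k=1$ case of your submodularity argument are harmless but not actually needed.
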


The proof involves checking that $\Delta(\cS) \le \Delta(\cP_B)$ holds for all $B \in \Omega$, so that the result follows from Proposition~\ref{prop:min}(a). We defer the details to Appendix~\ref{app:typeS}.

There are many examples of isentropic random variables. For example, exchangeable random variables (cf.\ \cite{TKSV12}) are isentropic. (Random variables $X_1,X_2,\ldots,X_m$ are \emph{exchangeable} if for every permutation $\Pi:\cM\to\cM$, the distribution of $X_{\Pi(1)},X_{\Pi(2)},\ldots,X_{\Pi(m)}$ remains unchanged.) A more relevant example for us is the PIN model defined on the \emph{complete $t$-uniform hypergraph on $m$ vertices}, $K_{m,t}$. More precisely, the complete $t$-uniform hypergraph $K_{m,t}=(\cV,\cE)$ has $\cV=\cM$, and exactly one copy of every $t$-subset (i.e., subset of cardinality $t$) of $\cM$ belongs to $\cE$. It is straightforward to check that the random variables $X_1,X_2,\ldots,X_m$ in the PIN model on $K_{m,t}$ are isentropic, and hence the source is Type $\cS$. In fact, we will show below that this PIN model is strict Type $\cS$, and therefore it satisfies the hypothesis of Theorem~\ref{th:mge3}. For this and other results proved in the rest of this section, it will be useful to state a specialization of Proposition~\ref{prop:min} to hypergraph PIN models.

In the case of hypergraph PIN models, for any $B\in\Omega$, $\Delta(\cP_B)$ can be written as $\Delta(\cP_B)=\frac{\sum_{e\in\cE}[P_B(e)-1]}{|\cP_B|-1}$, where $P_B(e)$ is the number of parts of the partition $\cP_B$ intersecting with $e$. On the other hand, $\Delta(\cS)=\frac{(t-1)|\cE|}{m-1}$. Hence, Proposition~\ref{prop:min} can be rewritten for the PIN model as

\begin{corollary}
\label{min:PIN}
For a PIN model described on a $t$-uniform hypergraph, the singleton partition $\cS$ is \\
\emph{(a)} a minimizer for $\mathbf{I}(X_{\cM})$ iff $\frac{(t-1)|\cE|}{m-1} \le \frac{\sum_{e\in\cE}[P_B(e)-1]}{|\cP_B|-1}$ $\forall\,B \in \Omega$; \\
\emph{(b)} the unique minimizer for $\mathbf{I}(X_{\cM})$ iff $\frac{(t-1)|\cE|}{m-1} < \frac{\sum_{e\in\cE}[P_B(e)-1]}{|\cP_B|-1}$ $\forall\,B \in \Omega$.
\end{corollary}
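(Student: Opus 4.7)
The strategy is simply to evaluate the quantities $\Delta(\cS)$ and $\Delta(\cP_B)$ explicitly for a $t$-uniform hypergraph PIN model, and then invoke Proposition~\ref{prop:min} verbatim. There is no genuinely hard step here; the whole thing is a bookkeeping exercise about counting how many parts of a partition each hyperedge touches.

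First, I would recall the basic entropy formula for the PIN model. Since $X_A = (\xi_e : e \in \cE, \, e \cap A \ne \emptyset)$ and the $\xi_e$ are independent Bernoulli$(1/2)$ bits, $H(X_A)$ equals the number of hyperedges in $\cE$ that intersect $A$. In particular $H(X_\cM) = |\cE|$, and for a $t$-uniform hypergraph each singleton $\{i\}$ is touched by $\deg(i)$ hyperedges and each $e$ is counted $t$ times in $\sum_i H(X_i)$, giving $\sum_{i=1}^m H(X_i) = t|\cE|$. Substituting into the definition of $\Delta(\cS) = \frac{1}{m-1}[\sum_i H(X_i) - H(X_\cM)]$ yields
\[
\Delta(\cS) \;=\; \frac{(t-1)|\cE|}{m-1}.
\]

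Next, for an arbitrary partition $\cP$ of $\cM$, I would swap the order of summation in $\sum_{A \in \cP} H(X_A)$: rather than summing over parts $A$ and counting hyperedges hitting $A$, sum over hyperedges $e$ and count parts hit by $e$. Since $P(e)$ is defined as the number of cells of $\cP$ that intersect $e$,
\[
\sum_{A \in \cP} H(X_A) \;=\; \sum_{e \in \cE} P(e),
\]
and therefore
\[
\Delta(\cP) \;=\; \frac{1}{|\cP|-1}\Bigl[\sum_{e \in \cE} P(e) - |\cE|\Bigr] \;=\; \frac{\sum_{e\in\cE}[P(e)-1]}{|\cP|-1}.
\]
Applying this to $\cP_B$ for $B \in \Omega$ gives the right-hand side expression in the statement.

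Finally, I would plug the two formulas above directly into Proposition~\ref{prop:min}. Part (a) of that proposition states that $\cS$ is a minimizer for $\mathbf{I}(X_\cM)$ iff $\Delta(\cS) \le \Delta(\cP_B)$ for every $B \in \Omega$, which translates word-for-word into the claimed inequality $\frac{(t-1)|\cE|}{m-1} \le \frac{\sum_{e\in\cE}[P_B(e)-1]}{|\cP_B|-1}$; part (b), the uniqueness statement, follows in exactly the same way with the strict inequality. The only thing to double-check is that $|\cP_B|$ is the partition size appearing in the denominator of $\Delta(\cP_B)$, which is immediate from the definition $\cP_B = \{B^c, \{b_1\}, \ldots, \{b_{|B|}\}\}$. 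The main (mild) obstacle is just making sure the sum-swap in the $H(X_A)$ computation is presented cleanly, so that the reduction to Proposition~\ref{prop:min} is transparent.
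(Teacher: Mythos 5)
Your proposal is correct and follows the same route as the paper: the paper itself simply states (in the paragraph preceding the corollary) the two formulas $\Delta(\cS)=\frac{(t-1)|\cE|}{m-1}$ and $\Delta(\cP_B)=\frac{\sum_{e\in\cE}[P_B(e)-1]}{|\cP_B|-1}$ and then invokes Proposition~\ref{prop:min}, treating the derivation of those formulas as self-evident. You are supplying exactly the missing bookkeeping: the observation that $H(X_A)$ counts hyperedges meeting $A$, the identity $\sum_{i}H(X_i)=t|\cE|$ from $t$-uniformity, and the sum-swap $\sum_{A\in\cP}H(X_A)=\sum_{e\in\cE}P(e)$, all of which are correct and make the reduction to Proposition~\ref{prop:min} transparent.
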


\begin{corollary}
\label{ex:CPIN}
The PIN model on $K_{m,t}$ is strict Type $\cS$.
\end{corollary}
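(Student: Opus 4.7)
The plan is to apply Corollary~\ref{min:PIN}(b), which for a $t$-uniform PIN model reduces the strict Type~$\cS$ condition to verifying that, for every $B \subset \cM$ with $b := |B| \in \{1, \ldots, m-2\}$,
\[
\frac{(t-1)\,|\cE|}{m-1} \;<\; \frac{\sum_{e\in\cE}[P_B(e)-1]}{|\cP_B|-1}.
\]
For $K_{m,t}$ we have $|\cE| = \binom{m}{t}$ and $|\cP_B|-1 = b$, so once the numerator on the right is put in closed form the task reduces to a clean algebraic inequality.

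The first substantive step is to classify the hyperedges of $K_{m,t}$ by $k := |e \cap B|$. If $k = t$, then $e \subseteq B$ and $P_B(e) = t$, so $e$ contributes $t-1$; if $k < t$, then $e$ meets $B^c$ and contributes $k$. Since exactly $\binom{b}{k}\binom{m-b}{t-k}$ hyperedges of $K_{m,t}$ satisfy $|e \cap B| = k$, summing and collapsing via $k\binom{b}{k}=b\binom{b-1}{k-1}$ followed by the Vandermonde convolution yields
\[
\sum_{e\in\cE}[P_B(e)-1] \;=\; b\binom{m-1}{t-1} - \binom{b}{t}.
\]

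Plugging this into the target inequality and clearing denominators, I would use $\binom{m-1}{t-1} = \tfrac{t}{m}\binom{m}{t}$ and, when $b \ge t$, the identity $\binom{b}{t}/b = \binom{b-1}{t-1}/t$ to reduce the whole inequality to the crisp comparison
\[
\binom{b-1}{t-1} \;<\; \binom{m-2}{t-1}.
\]
When $b < t$, the left side is zero, so the inequality holds trivially (the right side is positive under the standing non-degeneracy assumption $m > t$). When $b \ge t$, the bound $b \le m-2$ gives $b-1 \le m-3$, and Pascal's identity $\binom{m-2}{t-1} = \binom{m-3}{t-1} + \binom{m-3}{t-2}$ combined with $\binom{m-3}{t-2} \ge 1$ (valid since $b \ge t$ together with $b \le m-2$ forces $m \ge t+2$) delivers the strict inequality.

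The main obstacle is not conceptual but arithmetic: correctly amalgamating the two hyperedge classes into a single formula and performing enough binomial manipulation to reveal the clean comparison above. Once that is in place, strict Type~$\cS$ follows for every non-degenerate $K_{m,t}$ (i.e., $2 \le t \le m-1$), completing the proof via Proposition~\ref{prop:min}(b).
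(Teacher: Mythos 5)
Your proposal is correct and takes essentially the same route as the paper: classify hyperedges of $K_{m,t}$ by $|e\cap B|$, use $k\binom{b}{k}=b\binom{b-1}{k-1}$ plus Vandermonde to obtain $\sum_{e}[P_B(e)-1]=b\binom{m-1}{t-1}-\binom{b}{t}$, and reduce the test in Corollary~\ref{min:PIN}(b) to $\binom{b-1}{t-1}<\binom{m-2}{t-1}$, which holds for $1\le b\le m-2$. The only cosmetic difference is that you fold the $b<t$ and $b\ge t$ cases into a single formula (via $\binom{b}{t}=0$ for $b<t$) whereas the paper handles them separately.
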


The proof is a relatively straightforward matter of checking that the condition in Corollary~\ref{min:PIN}(b) holds --- see Appendix~\ref{app:typeS} for the details.

We next give an example of a non-isentropic source which is strict Type $\cS$. Consider the PIN model defined on a $k$-regular $k$-edge-connected graph ($t=2$).  Formally, a graph $\cG=(\cV,\cE)$ is called \emph{$k$-regular} if every vertex in $v\in\cV$ has degree $k$, i.e., there are exactly $k$ edges in $\cE$ which are incident with the vertex $v$. A graph is called \emph{$k$-edge-connected} if deletion of any $k$-subset of $\cE$ does not disconnect the graph, but there exists at least one $(k+1)$-subset of $\cE$ the removal of which disconnects the graph.

\begin{corollary}
\label{kkgraph}
A PIN model on any $k$-regular, $k$-edge-connected graph is strict Type $\cS$.
\end{corollary}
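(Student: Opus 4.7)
The plan is to verify the strict-inequality form of Corollary~\ref{min:PIN}(b) with $t=2$. Fix $B \subset \cM$ with $1 \le |B| \le m-2$, and let $e(B)$ denote the number of edges having both endpoints in $B$, and $e(B,B^c)$ the number of edges having exactly one endpoint in $B$. The partition $\cP_B = \{B^c,\{b_1\},\ldots,\{b_{|B|}\}\}$ isolates each vertex of $B$ into its own singleton cell, so an edge $e$ contributes $P_B(e)-1 = 1$ precisely when it either lies inside $B$ (connecting two distinct singletons) or crosses between $B$ and $B^c$, and contributes $0$ when it sits entirely inside $B^c$. Hence $\sum_{e\in\cE}[P_B(e)-1] = e(B) + e(B,B^c)$, and $|\cP_B|-1 = |B|$.

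Next I would exploit $k$-regularity twice. Summing all vertex degrees yields $|\cE| = km/2$. Counting edge-endpoints incident to $B$ gives $k|B| = 2e(B) + e(B,B^c)$, so $e(B) + e(B,B^c) = \tfrac{1}{2}\bigl(k|B| + e(B,B^c)\bigr)$. Substituting these expressions into the criterion $\frac{(t-1)|\cE|}{m-1} < \frac{\sum_{e\in\cE}[P_B(e)-1]}{|\cP_B|-1}$ and clearing positive denominators reduces the required strict inequality to
$$
k|B| < (m-1)\, e(B,B^c).
$$

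Finally, I would invoke $k$-edge-connectedness, which ensures that every non-empty proper vertex subset satisfies $e(B,B^c) \ge k$. Combined with the constraint $|B| \le m-2$ built into $\Omega$, this yields
$$
k|B| \ \le \ k(m-2) \ < \ k(m-1) \ \le \ (m-1)\, e(B,B^c).
$$
Since this strict inequality holds for every $B \in \Omega$, Corollary~\ref{min:PIN}(b) certifies $\cS$ as the \emph{unique} minimizer of $\mathbf{I}(X_\cM)$, so the model is strict Type $\cS$.

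The only nontrivial step is the bookkeeping in identifying that edges lying entirely inside $B$ also contribute to $\sum_{e\in\cE}[P_B(e)-1]$ (because distinct singletons of $B$ are distinct cells of $\cP_B$), and then using $k$-regularity to rewrite $e(B) + e(B,B^c)$ purely in terms of $e(B,B^c)$. After that reduction, the strict slack that delivers the \emph{unique} minimizer conclusion comes entirely from the gap between $|B| \le m-2$ and $m-1$, which is precisely the distinction between criterion (b) and criterion (a) of Corollary~\ref{min:PIN}.
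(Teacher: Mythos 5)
Your proof is correct and follows essentially the same approach as the paper's: both verify Corollary~\ref{min:PIN}(b) by counting the edges $\cE_{\cP_B}$ that cross between cells of $\cP_B$, bounding that count via $k$-regularity (degree sum over $B$) and $k$-edge-connectivity ($e(B,B^c) \ge k$). The paper frames the counting as a graph contraction of $B^c$ and compares $\frac{1}{|B|}|\cE_{\cP_B}|$ to $\frac{km}{2(m-1)}$ directly, whereas you do the equivalent edge-endpoint counting in place and algebraically reduce the criterion to the clean form $k|B| < (m-1)\,e(B,B^c)$; this is a tidier presentation of the same argument.
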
 

The proof is again an application of Corollary~\ref{min:PIN}(b); the details are in Appendix~\ref{app:typeS}.

The $m$-cycle $\cC_m$ is a special case of a $k$-regular and $k$-edge-connected graph, with $k=2$. Formally, the $m$-cycle $\cC_m=(\cV,\cE)$, is a graph with $\cV=\cM$ and $\cE=\biggl(\bigcup_{i=1}^{m-1}\{\{i,i+1\}\}\biggr)\bigcup\{\{1,m\}\}$. The complete graph $K_{m,2}$ is another example of a $k$-regular $k$-edge-connected graph with $k=m-1$. There is in fact a broad class of $k$-regular $k$-edge-connected graphs called the Harary graphs (see \cite{Har62} and \cite{KMS13}) of which $\cC_m$ and $K_{m,2}$ are special cases. 

So far, the only example we have seen of a strict Type $\cS$ source on a $t$-uniform hypergraph, with $t>2$, has been the PIN model on the complete $t$-uniform hypergraph, $K_{m,t}$. It is natural to ask whether other classes of PIN models on $t$-uniform hypergraphs ($t>2$) exist which are strict Type $\cS$. The answer is `yes'. We will construct a class of uniform hypergraphs with $t=3$, such that the PIN models on them are strict Type $\cS$. To do this, we introduce the \emph{Steiner triple system} (STS) defined on the set $\cM$. An STS on $\cM$ is a collection of 3-subsets of $\cM$, which we will denote by $\STS(\cM)$, such that any pair of elements from $\cM$ is a subset of exactly one element of $\STS(\cM)$. A trivial example of an STS is $m=3$ and $\STS(\cM)=\{\{1,2,3\}\}$. It is a fact that such collections indeed exist as long as $\text{gcd}(m-2,6)=1$ (see \cite[Theorem~2.10]{CR99}). For example, consider $m=7$. Then, $\STS(\cM)=\{\{1,2,4\},\{2,3,5\},\{3,4,6\},\{1,5,6\},\{2,6,7\},\{1,3,7\},\{4,5,7\}\}$. Now, consider the 3-uniform hypergraphs $\cH_{\STS}=(\cM,\STS(\cM))$, for all $m$ such that $\STS(\cM)$ exists. We will show that a PIN model defined on $\cH_{\STS}$ with $m>3$ is strict Type $\cS$. 

\begin{corollary}
\label{STS}
A PIN model on $\cH_{\STS}$ with $m>3$ is strict Type $\cS$.
\end{corollary}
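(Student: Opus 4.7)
The plan is to apply Corollary~\ref{min:PIN}(b) with $t=3$. Two standard STS properties drive the argument: first, every pair of elements of $\cM$ lies in exactly one triple, so $|\STS(\cM)| = \binom{m}{2}/3 = m(m-1)/6$; second, pair-regularity forces vertex-regularity, so each vertex of $\cM$ is contained in exactly $(m-1)/2$ triples. Fix $B \in \Omega$, set $n_B := |B| \le m-2$, and for $j \in \{0,1,2,3\}$ let $a_j$ be the number of $e \in \STS(\cM)$ with $|e \cap B| = j$. One checks directly that $P_B(e)-1$ equals $0,1,2,2$ in the four cases (the cell $B^c$ together with $|e\cap B|$ singleton cells are those of $\cP_B$ that meet $e$), so $\sum_e (P_B(e)-1) = a_1 + 2 a_2 + 2 a_3$.

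Next I would use two double counts to eliminate $a_1$ and $a_2$ in favor of $a_3$ alone. Summing degrees over $B$ gives $a_1 + 2 a_2 + 3 a_3 = n_B(m-1)/2$; counting pairs of elements of $B$, each of which lies in a unique triple, and observing that a triple meeting $B$ in $j$ elements contributes $\binom{j}{2}$ such pairs, gives $a_2 + 3 a_3 = \binom{n_B}{2}$. Subtracting the first identity from $a_1 + 2a_2 + 2 a_3$ yields $\sum_e(P_B(e)-1) = n_B(m-1)/2 - a_3$. Plugging this into Corollary~\ref{min:PIN}(b), together with $(t-1)|\cE|/(m-1) = m/3$ and $|\cP_B|-1 = n_B$, the strict-Type-$\cS$ condition reduces to verifying
\[
6 a_3 \,<\, n_B(m-3) \qquad \text{for every } B \in \Omega.
\]

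When $n_B \le m-3$ this is immediate from the elementary bound $3a_3 \le \binom{n_B}{2}$, which gives $6a_3 \le n_B(n_B-1) \le n_B(m-4) < n_B(m-3)$. The hard part is the boundary case $n_B = m-2$, where the elementary bound only yields $6a_3 \le n_B(n_B-1) = n_B(m-3)$, with equality in principle possible. The plan there is to rule out equality by showing $a_2 \ge 1$: if instead $a_2 = 0$, write $B^c = \{u,v\}$ and note that no triple then contains exactly one of $u,v$, so apart from the unique triple $\{u,v,w\}$ covering the pair $\{u,v\}$, every other triple avoids both $u$ and $v$. But then $u$ belongs to only one triple, contradicting the degree $(m-1)/2 \ge 2$ that holds exactly when $m > 3$. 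Using $3 a_3 = \binom{n_B}{2} - a_2$ with $a_2 \ge 1$ would then give $6a_3 \le n_B(m-3) - 2 < n_B(m-3)$, completing the verification. The main obstacle is precisely this boundary case; the remaining cases are handled by routine counting, whereas $n_B=m-2$ is the only one where the degree regularity of the STS must be invoked.
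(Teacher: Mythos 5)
Your proof is correct and follows essentially the same strategy as the paper's: both verify the Type $\cS$ criterion (Proposition~\ref{prop:min}, here via its hypergraph form Corollary~\ref{min:PIN}), both reduce to bounding the number of triples contained inside $B$ via the pair count $3a_3 \le \binom{|B|}{2}$, and both isolate the boundary case $|B|=m-2$ where the degree regularity of the STS must be invoked. The only cosmetic difference is that you organize the computation through explicit double-counting identities in $a_1,a_2,a_3$ and rule out boundary equality by contradiction ($a_2\geq 1$), whereas the paper works directly with entropies $H(X_{B^c})$ and computes the boundary value exactly as $m-2$.
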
 

Again, the proof of the corollary is given in Appendix~\ref{app:typeS}.

Corollaries~\ref{ex:CPIN}, \ref{kkgraph} and \ref{STS} show that the PIN models on $K_{m,t}$, $k$-regular $k$-edge-connected graphs, and $\cH_{\STS}$ satisfy the hypotheses of both Theorems~\ref{th:uh} and \ref{th:mge3}. Thus, for these sources to achieve SK capacity, an omnivocal communication is required. Also, the minimum rate of communication required is $R_{\CO}$. Hence, in terms of public communication, these are the worst-case sources.

\section{Concluding remarks}\label{sec:conc}

This paper dealt with two important aspects of the public communication required to generate maximal-rate SKs in the multiterminal source model, one being the communication complexity $R_{\SK}$, and the other being omnivocality. By extending the arguments in \cite{Tyagi13} to the setting of multiple terminals, we derived a lower bound on $R_{\SK}$ in terms of an information-theoretic quantity called the (multiterminal) interactive common information. In the two-terminal case, it was shown in \cite{Tyagi13} that this bound is always tight, albeit under a weaker notion of secrecy. Proving such a result for the general multiterminal case remains an open problem. 

The minimum rate of communication for omniscience, $R_{\CO}$, is still the best known upper bound on $R_{\SK}$. We proved that uniform hypergraph PIN models satisfying a certain ``Type $\cS$'' condition are $R_{\SK}$-maximal. In other words, for these PIN models, $R_{\SK}$ is equal to $R_{\CO}$. It was also shown via counterexamples that the Type $\cS$ condition is not sufficient to guarantee $R_{\SK}$-maximality for an arbitrary multiterminal source model. A complete characterization of $R_{\SK}$-maximal sources is an interesting open problem.

It should be pointed out that neither our lower bound nor the $R_{\CO}$ upper bound takes into account the fact that the public communication is allowed to be \emph{interactive}. It is possible that incorporating this information somehow leads to better bounds on $R_{\SK}$.

The problem of characterizing communication complexity in the multiterminal source model is the stepping stone towards two bigger problems of interest. One is to characterize the communication rate region required to achieve SK capacity. The second problem is that of determining the minimum rate of communication required to generate an SK of any arbitrary rate less than or equal to SK capacity. Both these questions appear to be difficult to answer at this point. In fact, these questions are still open for the two-terminal case. It should be pointed out that these questions have been answered for a model similar to the multiterminal source model in \cite{LCV15}. However, that model has severe constraints on the eavesdroppers, which makes it somewhat less interesting.

On the issue of omnivocality, we proved that for all strict Type $\cS$ sources, omnivocality is needed to achieve SK capacity. The converse of this fact, i.e., omnivocality is required only if the source is strict Type $\cS$ turns out to be true for three terminals, but no longer holds for four or more terminals. A more general problem along these lines is, given an arbitrary multiterminal source model, what is the minimum number of terminals that must participate in a public communication to generate a maximal-rate SK for the entire set of terminals? The answer to the ``dual" of this problem, i.e., what is the maximum rate of SK that can be generated when a fixed number of terminals remain silent, is already known from the work of Gohari and Anantharam \cite{GA10}. 

\begin{appendices}

\section{Proof of Lemma \ref{lem:mi}}\label{app:proof}

First we state two lemmas which we will require for the proof. 

\begin{lemma}
\label{lem:rv:2}
For independent random variables $X$,$Y$ and $W$, and any other random variable $Z$, we have
$$
I(X;Z|W)\leq I(X;Z|W,Y).
$$
\end{lemma}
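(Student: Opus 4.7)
The plan is to use the chain rule for mutual information in two different ways and then exploit the independence hypothesis to cancel a term. Specifically, I would expand $I(X;Y,Z \mid W)$ in two orders:
\begin{equation*}
I(X;Y,Z \mid W) \;=\; I(X;Y \mid W) + I(X;Z \mid W,Y) \;=\; I(X;Z \mid W) + I(X;Y \mid Z,W).
\end{equation*}

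Next I would argue that $I(X;Y \mid W) = 0$. This follows from the hypothesis that $X$, $Y$, $W$ are mutually independent: the joint distribution factors as $p(x)p(y)p(w)$, so $p(x,y \mid w) = p(x)p(y) = p(x \mid w)\,p(y \mid w)$, showing that $X$ and $Y$ are conditionally independent given $W$.

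Combining these two facts yields
\begin{equation*}
I(X;Z \mid W,Y) \;=\; I(X;Z \mid W) + I(X;Y \mid Z,W) \;\geq\; I(X;Z \mid W),
\end{equation*}
since conditional mutual information is nonnegative. This gives the desired inequality. There is no real obstacle here — the only subtle point is noting that full mutual independence of $X,Y,W$ (not merely pairwise independence) gives $I(X;Y \mid W)=0$, which is precisely what the chain-rule cancellation requires.
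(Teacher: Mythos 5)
Your proof is correct and follows the exact same route as the paper: expand $I(X;Y,Z\mid W)$ two ways via the chain rule and cancel $I(X;Y\mid W)=0$ using the independence of $X$, $Y$, $W$. You merely spell out the details that the paper's one-line proof leaves implicit.
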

\begin{proof}
This follows by expanding $I(X;Y,Z | W)$ in two different ways using the chain rule, and noting that $I(X;Y | W) = 0$.
\end{proof}

\begin{lemma}
\label{lem:rv:1}
For independent random variables $X$ and $Y$, and any other random variable $Z$, we have
$$
I(X;Z)+I(Y;Z)\leq I(X,Y;Z).
$$
\end{lemma}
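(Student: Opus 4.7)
The plan is to use the chain rule of mutual information together with the preceding Lemma~\ref{lem:rv:2}. By the chain rule,
\[
I(X,Y;Z) = I(X;Z) + I(Y;Z \mid X),
\]
so the claim is equivalent to showing
\[
I(Y;Z) \le I(Y;Z \mid X).
\]

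To establish this, I would invoke Lemma~\ref{lem:rv:2} with a suitable relabelling. Take the three independent random variables in Lemma~\ref{lem:rv:2} to be $Y$, $X$, and a trivial (deterministic) random variable $W$; these are independent precisely because $X$ and $Y$ are. Lemma~\ref{lem:rv:2} then yields
\[
I(Y;Z \mid W) \le I(Y;Z \mid W, X),
\]
and since conditioning on a constant is vacuous, this reduces to $I(Y;Z) \le I(Y;Z \mid X)$, as required. Combining with the chain-rule identity above gives $I(X;Z) + I(Y;Z) \le I(X,Y;Z)$.

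There is no real obstacle here — the whole proof is essentially a one-line application of the chain rule plus the prior lemma. The only thing to be slightly careful about is the invocation of Lemma~\ref{lem:rv:2}: one must check that the independence hypothesis is met after relabelling, which it is because $X \perp Y$ and a constant is independent of everything. If one prefers not to introduce a dummy constant, an equivalent direct route is to write $I(Y;Z\mid X) - I(Y;Z) = I(X;Y\mid Z) - I(X;Y)$ and observe that $I(X;Y)=0$ while $I(X;Y\mid Z)\ge 0$; either derivation fits in a couple of lines.
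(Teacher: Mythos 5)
Your proof is correct and follows essentially the same route as the paper's: both invoke Lemma~\ref{lem:rv:2} with a trivial $W$ to obtain the conditioning inequality and then close via the chain rule, the only difference being that the paper conditions on $Y$ (writing $I(X;Z)\le I(X;Z|Y)$ and $I(X;Z|Y)+I(Y;Z)=I(X,Y;Z)$) while you condition on $X$ — a purely symmetric relabelling. The alternative one-liner you mention via $I(X;Y|Z)-I(X;Y)\ge 0$ is also fine, but it is not what the paper does.
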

\begin{proof}
By Lemma~\ref{lem:rv:2}, we have $I(X;Z) \le I(X;Z | Y)$, and hence, $I(X;Z)+I(Y;Z)\leq I(X;Z|Y) + I(Y;Z) = I(X,Y;Z)$.
\end{proof}

\medskip

We first show that it is enough to prove Lemma~\ref{lem:mi} for the complete $t$-uniform hypergraph PIN model $K_{m,t}$ (refer to Section~\ref{sec:singleton} for details on $K_{m,t}$) and the corresponding source $X_{\cM}^n$. Consider any $t$-uniform hypergraph $\cH=(\cV,\cE)$ with $|\cV|\: = m$ and the corresponding source $\hat{X}_{\cM}^n$, and fix a function $\textbf{L}$ of $\hat{X}_{\mathcal{M}}^n$. For any $t$-subset $e$ of $\cV$, define $r(e)$ to be the number of times it occurs in the multiset $\cE$, and call $\displaystyle r=\max_{e\subset\cV:{|e|}=t}r(e)$. Now, construct a new source as follows: To the multiset $\cE^{(n)}$ add $n(r-r(e))$ copies of each $t$-subset $e$ of $\cV$. Associate with each of these newly added subsets independent Ber(1/2) random variables, which are independent of the pre-existing Ber(1/2) random variables as well. Observe that the source thus constructed is none other than $X_{\cM}^{nr}$. Moreover, we clearly have $\sum_{i=1}^m I(X_i^{nr};\textbf{L})\geq \sum_{i=1}^m I(\hat{X}_i^n;\textbf{L})$, and hence it is enough to show  
that $tH(\BL)\geq \sum_{i=1}^m I(X_i^{nr};\textbf{L})$.

For the rest of proof we will take $X_{\cM}^n$ to be the source described on $K_{m,t}$. We also have $I(X_{\cM}^n;\textbf{L})=H(\textbf{L})$ from the fact that $\textbf{L}$ is a function of $X_{\cM}^n$. We now show that the PIN model on $K_{m,t}$ satisfies
\begin{equation}
\sum_{i=1}^m I((\xi_e^n: i\in e, e\in\cE);\textbf{L})\leq t \, I((\xi_e^n: e\in\cE);\textbf{L}), \label{eq:mi1} 
\end{equation}
where $\xi_e^n$ represents the collection of the $n$ $\xi_e$'s associated with the $n$ copies of the hyperedge $e$ in $\cE^{(n)}$.

For any $i\in\cM$, let $\cE_i$ denote the set of hyperedges containing $i$, so that the left-hand side of \eqref{eq:mi1} can be expressed as $\sum_{i=1}^m I\bigl((\xi_e^n:e\in\cE_{i});\textbf{L}\bigr)$. Now, we write $\cE_i$ as a union of two disjoint sets $\cE_{\geq i}$ and $\cE_{\ngtr i}$, i.e., $\cE_i=\cE_{\geq i}\mathop{\dot{\bigcup}}\cE_{\ngtr i}$. The set $\cE_{\geq i}$ is the subset of $\cE_i$ containing no terminals from $\{1,2,\ldots,i-1\}$. The set $\cE_{\ngtr i}$ is thus the subset of  $\cE_i$ containing at least one terminal from $\{1,2,\ldots,i-1\}$. Observe that we have $|\cE_{\geq i}|=\binom{m-i}{t-1}$ for $1\leq i\leq m-t+1$  and $|\cE_{\geq i}|=0$ for $m-t+2\leq i\leq m$. Therefore,
\begin{align}
\sum_{i=1}^m & I\bigl((\xi_e^n:e\in\cE_{i});\textbf{L}\bigr) \nonumber \\
&= I\left(\left(\xi_e^n:e\in\cE_{\geq1}\right);\textbf{L}\right)+ \sum_{i=2}^{m-t+1}\biggl[I\left(\left(\xi_e^n:e\in\cE_{\ngtr i}\right);\BL\right)+I\left(\left(\xi_e^n:e\in\cE_{\geq i}\right);\BL\Big|\left(\xi_e^n:e\in\cE_{\ngtr i}\right)\right)\biggr]\notag\\
&\hspace{1.2em}+\sum_{i=m-t+2}^{m} I\left(\left(\xi_e^n:e\in\cE_{i}\right);\textbf{L}\right) \nonumber \\
& \leq  I\left(\left(\xi_{e}^n:e\in\cE_{\geq1}\right);\textbf{L}\right)+\sum_{i=2}^{m-t+1} I\left(\left(\xi_e^n:e\in\cE_{\geq i}\right);\BL\Big|\biggl(\xi_e^n:e\in\bigcup_{j\leq i}\cE_{\ngtr j}\biggr)\right)+\sum_{i=2}^{m-t+1}I\left(\left(\xi_e^n:e\in\cE_{\ngtr i}\right);\BL\right)\notag\\
&\hspace{1.2em}+\sum_{i=m-t+2}^{m}I\left(\left(\xi_e^n:e\in\cE_{i}\right);\textbf{L}\right) \label{lem:mi:1} \\
& = \underbrace{I\left(\left(\xi_e^n:e\in\cE\right);\BL\right)}_{P}+\underbrace{\sum_{i=2}^{m-t+1}I\left(\left(\xi_e^n:e\in\cE_{\ngtr i}\right);\BL\right)}_{Q}+\underbrace{\sum_{i=m-t+2}^m I\left(\left(\xi_e^n:e\in\cE_{i}\right);\textbf{L}\right)}_{R} \label{lem:mi:2} 
\end{align}
where \eqref{lem:mi:1} follows from Lemma \ref{lem:rv:2}. Note that for $t=2$, \eqref{eq:mi1} follows directly from \eqref{lem:mi:2}:  by virtue of Lemma~\ref{lem:rv:1}, we have $Q + R \le P$, so that the right-hand side (RHS) of \eqref{lem:mi:2} is at most $2P$, as desired.  However, the case of $t>2$ is not as simple and needs further work.

To achieve the RHS of \eqref{eq:mi1}, we require $Q+R \le (t-1)P$. We proceed by defining $Q(i)=I\left(\left(\xi_e^n:e\in\cE_{\ngtr i}\right);\BL\right)$ for all $2\leq i\leq m-t+1$, and thus, $Q=\sum_{i=2}^{m-t+1}Q(i)$. Similarly, define $R(i)=I\left(\left(\xi_e^n: e\in\cE_{i}\right);\textbf{L}\right)$ for all $m-t+2\leq i\leq m$, so that $R=\sum_{i=m-t+2}^m R(i)$. The key ideas are the following: 
\begin{enumerate}
\item Expand each $Q(i)$ using the chain rule into conditional mutual information terms of the form $I(\xi_e^n;\BL|\cdots)$, and further condition them on additional $\xi_{\tilde{e}}^n$s appropriately.
\item Allocate these conditional mutual information terms to appropriate $R(i)$s.
\item Use the chain rule to sum each $R(i)$ and the terms allocated to it to obtain $P$. 
\end{enumerate}
Since the conditional mutual information term $I(\xi_e^n;\BL|\cdots)$ can only increase upon further conditioning on additional $\xi_{\tilde{e}}^n$s (by Lemma \ref{lem:rv:2}), we have $Q+R\leq (t-1)P$ as required.

To proceed, we need to define a total ordering on the set $\cE$. We represent a hyperedge $e$ as a $t$-tuple $(i_1i_2\ldots i_t)$, with the $i_j$s, $1\leq j\leq t$, being the terminals which are contained in $e$, ordered according to $i_1<i_2<\ldots<i_t$. We will use `$<$' to denote the lexicographic ordering of the $t$-tuples (hyperedges) in $\cE$. Furthermore, based on the ordering `$<$', we index the hyperedges of $\cE$ as $e_j$, $1\leq j\leq\binom{m}{t}$, satisfying $e_i<e_j$ iff $i<j$. As an example, Table \ref{tab:order} illustrates the indexing of the hyperedges in $K_{5,3}$.

\begin{table}[ht!]
\caption{Indexing of the hyperedges in $K_{5,3}$}
\label{tab:order}
\begin{center} 
\small 
\begin{tabular}{|c|c|} \hline
Hyperedge & Index\\\hline
$(123)$ & 1 \\\hline
$(124)$ & 2 \\\hline
$(125)$ & 3 \\\hline
$(134)$ & 4 \\\hline
$(135)$ & 5 \\\hline
$(145)$ & 6 \\\hline
$(234)$ & 7 \\\hline
$(235)$ & 8 \\\hline
$(245)$ & 9 \\\hline
$(345)$ & 10 \\\hline
\end{tabular}
\end{center}
\end{table}

To proceed further, using the chain rule we expand each $Q(i)$ into a sum of conditional mutual information terms of the form $Q_e\triangleq I(\xi_e^n;\BL|(\xi_{\tilde{e}}^n:\tilde{e}<e,\tilde{e}\in\cE))$ as follows:
\begin{align}
Q(i)&=I((\xi_e^n:e\in\cE_{\ngtr i});\BL) \notag \\
      &=\sum_{e\in\cE_{\ngtr i}} I(\xi_e^n;\BL|(\xi_{\tilde{e}}^n:\tilde{e}<e,\tilde{e}\in\cE_{\ngtr i})) \notag \\
      &\leq\sum_{e\in\cE_{\ngtr i}} I(\xi_e^n;\BL|(\xi_{\tilde{e}}^n:\tilde{e}<e,\tilde{e}\in\cE)) \label{lem:mi:3} \\
      &=\sum_{e\in\cE_{\ngtr i}}Q_e \label{lem:mi:4}
\end{align}
where \eqref{lem:mi:3} follows from Lemma \ref{lem:rv:2}. Hence, we have $Q\leq\sum_{i=2}^{m-t+1}\sum_{e\in\cE_{\ngtr i}}Q_e$. A total of $\sum_{i=2}^{m-t+1}\biggl[\binom{m-1}{t-1}-\binom{m-i}{t-1}\biggr]=(t-1)\binom{m-1}{t}$ $Q_e$ terms are generated. Next, each $R(i)$ is allocated $\binom{m-1}{t}$ terms $Q_{e_j}$, $1 \le j \le \binom{m}{t}$, satisfying $i \notin e_j$. This allocation procedure is explained in detail below and is also formalized in Algorithm~\ref{alg:ta}. We add a further conditioning on each $Q_{e_j}$ allocated to $R(i)$ to make it $Q_{e_{j|i}}\triangleq I(\xi_{e_j}^n;\BL|(\xi_{\tilde{e}}^n:\tilde{e}<e_j,\tilde{e}\in\cE),(\xi_{\tilde{e}}^n:\tilde{e}\in\cE_{i}))$. Lemma~\ref{lem:rv:2} and the definition of $Q_{e_{j|i}}$ ensure that $R(i)+\sum_{j:i\notin e_j}Q_{e_j}\leq R(i)+\sum_{j:i\notin e_j}Q_{e_{j|i}}=P$.

We now give a more detailed description of the allocation procedure. Construct a table $T$ with rows indexed by $i = 2,3, \ldots, m-t+1$ and the columns indexed by $j = 1,2,\ldots,\binom{m}{t}$. This table records the availability (for allocation) of a $Q_{e_j}$ from the expansion of $Q(i)$ in \eqref{lem:mi:4}. Initialize the table as follows: $T(i,j)=1$ if a $Q_{e_j}$ came from $Q(i)$ in \eqref{lem:mi:4}; else $T(i,j)=0$. We carry out the allocation procedure on each $R(i)$ in ascending order of $i$. The procedure of allocation is as follows. The idea is to allocate the necessary $Q_{e_j}$s to $R(i)$ in ascending order of $j$. Once an $i$ and $e_j$ are fixed, we test whether $i\notin e_j$ is satisfied. If not, we increment $j$ by 1. If  $i\notin e_j$ is satisfied, then the availability of $Q_{e_j}$ from $Q(k)$, for all $2\leq k\leq m-t+1$, is checked using the table $T$. The smallest $k$ which satisfies $T(k,j)=1$ is chosen, and $R(i)$ is allocated the $Q_{e_j}$ coming from that $Q(k)$. The table is then updated with $T(k,j)=0$ to record that the $Q_{e_j}$ from that $Q(k)$ is no longer available for allocation. We then increment $j$ by 1 and repeat the allocation procedure. Once all $Q_{e_j}$s with $i\notin e_j$ have been allocated to $R(i)$, we begin the allocation procedure for $R(i+1)$. We formally summarize this allocation procedure in Algorithm~\ref{alg:ta}.
\begin{algorithm}
\caption{}
\label{alg:ta}
\begin{algorithmic}
\State $i=m-t+2,j=1$.
\While{$i\leq m$}
\If{$i\notin e_j$}
\State $k=2$.
\While{$k\leq m-t+1$}
\If{$T(k,j)=1$}
\State \hspace{-0.5cm}Choose the $Q_{e_j}$ coming from $Q(k)$ in \eqref{lem:mi:4}.
\State \hspace{-0.5cm}Add the additional conditioning to make it $Q_{e_{j|i}}$.
\State \hspace{-0.5cm}Allocate this term to $R(i).$
\State \hspace{-0.5cm}$T(k,j)\gets 0$.
\State \hspace{-0.5cm}Break.
\EndIf
\If{$T(k,j)=0$ \&\& $k=m-t+1$} 
\State Declare ERROR and halt.
\EndIf
\State $k\gets k+1$.
\EndWhile
\EndIf
\State $j\gets j+1$.
\If{$j=\binom{m}{t}+1$}
\State $i\gets i+1$.
\State $j\gets 1$.
\EndIf
\EndWhile
\end{algorithmic}
\end{algorithm}

The flow of Algorithm \ref{alg:ta} for $K_{5,3}$ is illustrated in Example \ref{ex:53} further below. We now make the following claims:
\begin{claim}
\label{cl:1}
Algorithm \ref{alg:ta} never terminates in ERROR.
\end{claim}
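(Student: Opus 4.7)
The plan is a clean supply-versus-demand count, exploiting the fact that the allocations for different columns of $T$ are independent: when Algorithm~\ref{alg:ta} allocates a term $Q_{e_j}$ it only modifies entries in the $j$-th column of $T$, so it suffices to verify, for each $j$ in isolation, that the algorithm never requests a $Q_{e_j}$ when no row with $T(k,j)=1$ is left.

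Fix $e_j \in \cE$, write $e_j = \{i_1 < i_2 < \cdots < i_t\}$, and set $b(j) \triangleq |e_j \cap \{m-t+2,\ldots,m\}|$. I would first record that $i_1 \le m-t+1$: otherwise $e_j$ would be a $t$-subset of $\{m-t+2,\ldots,m\}$, contradicting $|\{m-t+2,\ldots,m\}| = t-1$. Consequently $e_j \cap \{m-t+2,\ldots,m\} \subseteq \{i_2,\ldots,i_t\}$.

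On the supply side, $T(k,j)$ is initialized to $1$ exactly when $Q_{e_j}$ appears in the expansion~\eqref{lem:mi:4} of $Q(k)$, i.e., when $e_j \in \cE_{\ngtr k}$. By definition this requires $k \in e_j$ and $\min(e_j) = i_1 < k$, with $k$ ranging over $\{2,\ldots,m-t+1\}$; since $i_\ell \ge 2$ for every $\ell \ge 2$, the lower bound $k \ge 2$ is automatic, and the valid $k$'s form the set $\{i_2,\ldots,i_t\} \setminus \{m-t+2,\ldots,m\}$, whose size is $(t-1)-b(j)$. On the demand side, the outer loop requests $Q_{e_j}$ exactly once for each $i \in \{m-t+2,\ldots,m\}$ with $i \notin e_j$, and the number of such $i$'s is likewise $(t-1) - b(j)$.

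Thus supply equals demand at the column level, and when the $r$-th request for $Q_{e_j}$ arrives there are still $(t-1) - b(j) - (r-1) \ge 1$ rows with $T(k,j) = 1$; the inner loop exits via the \textbf{Break} step before the ERROR guard is reached. The only delicate bookkeeping is the inequality $i_1 \le m-t+1$, which lets the quantity $b(j)$ be read off entirely from $\{i_2,\ldots,i_t\}$ rather than from all of $e_j$; everything else follows directly from the definitions of $\cE_{\ngtr k}$ and of the outer loop, so I would not expect any genuine obstacle beyond this indexing check.
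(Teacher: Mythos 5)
Your proof is correct and uses essentially the same supply-equals-demand counting as the paper: the paper parametrizes by $p = |\{m-t+2,\ldots,m\} \setminus e|$ (your $(t-1)-b(j)$) and argues exactly this $p = p$ balance per hyperedge. Your writeup is somewhat more careful in making explicit the column-independence of the allocation and the indexing check $i_1 \le m-t+1$, but the underlying argument is the same.
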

\begin{claim}
\label{cl:2}
Algorithm \ref{alg:ta} exhausts all the $Q_e$ terms generated in \eqref{lem:mi:4}.
\end{claim}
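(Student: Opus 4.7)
The plan is a pure counting argument that rides on Claim~\ref{cl:1}. The idea is to show that the number of allocation attempts that Algorithm~\ref{alg:ta} makes is exactly equal to the number of $Q_e$ terms generated by \eqref{lem:mi:4}. Since Claim~\ref{cl:1} guarantees that every attempt succeeds (no ERROR), and since the table $T$ is flipped from $1$ to $0$ each time an entry is consumed and never re-used, matching these two counts immediately forces every generated $Q_e$ to be allocated.

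First I would count the total number of $Q_e$ terms produced by \eqref{lem:mi:4}. Each $Q(i)$ contributes $|\cE_{\ngtr i}| = \binom{m-1}{t-1}-\binom{m-i}{t-1}$ terms, so the total is $\sum_{i=2}^{m-t+1} \bigl[\binom{m-1}{t-1}-\binom{m-i}{t-1}\bigr]$. A one-line manipulation using the hockey-stick identity $\sum_{j=t-1}^{m-2}\binom{j}{t-1}=\binom{m-1}{t}$ together with the absorption identity $(m-t)\binom{m-1}{t-1} = t\binom{m-1}{t}$ collapses this to $(t-1)\binom{m-1}{t}$.

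Next I would count the number of allocation attempts the algorithm makes. The outer \textbf{while} loop ranges over $i\in\{m-t+2,\ldots,m\}$, contributing $t-1$ values, and for each fixed $i$ the body triggers an allocation precisely for those $j\in\{1,\ldots,\binom{m}{t}\}$ with $i\notin e_j$, i.e., for each $t$-subset of $\cM\setminus\{i\}$. There are $\binom{m-1}{t}$ such $j$, so the total number of allocations is $(t-1)\binom{m-1}{t}$, exactly matching the count from the previous paragraph.

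Finally I would invoke Claim~\ref{cl:1}: each attempt terminates by locating some $k$ with $T(k,j)=1$, consuming the corresponding $Q_e$ from $Q(k)$ and flipping $T(k,j)$ to $0$. Because $T$ records usage and no entry is ever touched twice, the successful allocations are in bijection with a family of $(t-1)\binom{m-1}{t}$ distinct $Q_e$ terms from \eqref{lem:mi:4}; combined with the count match this forces all generated terms to be used, which is the assertion of Claim~\ref{cl:2}. The main obstacle in the whole chain is really Claim~\ref{cl:1} — once no attempt can fail, Claim~\ref{cl:2} is reduced to the simple double-count above, whose only delicate ingredient is the elementary binomial identity $\sum_{i=2}^{m-t+1}[\binom{m-1}{t-1}-\binom{m-i}{t-1}] = (t-1)\binom{m-1}{t}$.
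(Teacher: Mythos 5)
Your proposal is correct and follows essentially the same double-counting argument as the paper: count the $Q_e$ terms generated (with the hockey-stick and absorption identities giving $(t-1)\binom{m-1}{t}$), count the allocation attempts ($t-1$ choices of $i$, each needing $\binom{m-1}{t}$ terms with $i\notin e_j$), observe they match, and invoke Claim~\ref{cl:1} so every attempt succeeds. The paper states this more tersely; you spelled out the binomial manipulations, but the structure and key observation are identical.
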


Claim~\ref{cl:1} ensures that each $R(i)$, for all $m-t+2\leq i\leq m$, is allocated all the $Q_{e_j}$s satisfying $i\notin e_j$. Therefore, using Claim~\ref{cl:2}, we have 
\begin{align*}
Q+R \ &= \ \sum_{i=m-t+2}^m \left[R(i)+\sum_{j:i\notin e_j}Q_{e_j}\right] \notag\\
& \ \leq\sum_{i=m-t+2}^m \left[R(i)+\sum_{j:i\notin e_j}Q_{e_{j|i}}\right] \ = \ (t-1)P. \notag
\end{align*}
This completes the proof of Lemma~\ref{lem:mi}, modulo the proofs of Claims \ref{cl:1} and \ref{cl:2}, which we give below.

\begin{IEEEproof}[Proof of Claim \ref{cl:1}]
ERROR is possible only if for some $m-t+2\leq i\leq m$ and for some $e$ satisfying $i\notin e$, all the $Q_e$ terms generated in \eqref{lem:mi:4} have already been allocated. This is impossible as there are always enough $Q_e$s. To see this, suppose $e$ contains $t-1-p$ terminals from $\{m-t+2,\ldots,m\}$, i.e., there are $p$ $R(i)$s requiring an allocation of $Q_e$. Since the hypergraph is $t$-uniform, $e$ must contain $p+1$ terminals from $\{1,2,\ldots,m-t+1\}$. This implies that the total number of $Q_e$s generated in \eqref{lem:mi:4} is $p$. Therefore, we clearly have enough $Q_e$s for all $R(i)$s.
\end{IEEEproof}

\begin{IEEEproof}[Proof of Claim \ref{cl:2}]
As discussed earlier, the total number of $Q_e$ terms generated in \eqref{lem:mi:4} is $(t-1)\binom{m-1}{t}$. Also, the total number of $Q_e$ terms required by each $R(i)$ is $\binom{m-1}{t}$. Therefore, using Claim~\ref{cl:1}, the claim follows.
\end{IEEEproof}

\begin{example}
We illustrate how Algorithm~\ref{alg:ta} proceeds for $K_{5,3}$. Denote the hyperedges in $\cE$ using $3$-tuples, i.e., the hyperedge containing terminals $1$, $2$ and $3$ is $(123)$. The indexing of $\cE$ is illustrated in Table~\ref{tab:order}. So for this case we have $Q(2)=I(\xi_{(123)}^n,\xi_{(124)}^n,\xi_{(125)}^n;\BL)$ and $Q(3)=I(\xi_{(123)}^n,\xi_{(134)}^n,\xi_{(135)}^n,\xi_{(234)}^n,\xi_{(235)}^n;\BL)$. Thus, \eqref{lem:mi:4} takes the form
\begin{align}
Q(2) & \leq I(\xi_{(123)}^n;\BL)+I(\xi_{(124)}^n;\BL|(\xi_e^n:e<(124))+I(\xi_{(125)}^n;\BL|(\xi_e^n:e<(125)) \label{ex:1}\\
Q(3) & \leq I(\xi_{(123)}^n;\BL)+I(\xi_{(134)}^n;\BL|(\xi_e^n:e<(134))+I(\xi_{(135)}^n;\BL|(\xi_e^n:e<(135))\notag\\
        &\hspace{1em}+I(\xi_{(234)}^n;\BL|(\xi_e^n:e<(234))+I(\xi_{(235)}^n;\BL|(\xi_e^n:e<(235)) \label{ex:2}
\end{align}
Observe that $R(4)$ and $R(5)$ require four $Q_e$ terms each, and a total of eight $Q_e$ terms are in fact available from \eqref{ex:1} and \eqref{ex:2}. The table $T$ is initialized as follows:
\begin{center} 
\small 
\begin{tabular}{|c||c|c|c|c|c|c|c|c|c|c|} \hline
& 1 & 2 &3 & 4 & 5 & 6 & 7 & 8 & 9 & 10\\\hline\hline
2& 1&1&1&0&0&0&0&0&0&0\\\hline
3& 1&0&0&1&1&0&1&1&0&0\\\hline
\end{tabular}
\end{center}

We will now illustrate a few of the allocations carried out by Algorithm \ref{alg:ta}. The algorithm begins with $i=4$ and $j=1$ and $Q_{(123)}$ needs to be allocated to $R(4)$. With $k=2$ we see that $T(k,1)=1$, and hence we allocate $Q_{(123)}$ coming from $Q(2)$ to $R(4)$. The table $T$ is then updated as below.
\begin{center} 
\small 
\begin{tabular}{|c||c|c|c|c|c|c|c|c|c|c|} \hline
& 1 & 2 &3 & 4 & 5 & 6 & 7 & 8 & 9 & 10\\\hline \hline
2& 0&1&1&0&0&0&0&0&0&0\\\hline
3& 1&0&0&1&1&0&1&1&0&0\\\hline
\end{tabular}
\end{center}

Next we will illustrate the allocation of $Q_{(123)}$ to $R(5)$, i.e., $i=5$ and $j=1$. The state of the table $T$ just before this step is shown below.
\begin{center} 
\small 
\begin{tabular}{|c||c|c|c|c|c|c|c|c|c|c|} \hline
& 1 & 2 &3 & 4 & 5 & 6 & 7 & 8 & 9 & 10\\\hline \hline
2& 0&1&0&0&0&0&0&0&0&0\\\hline
3& 1&0&0&1&0&0&1&0&0&0\\\hline
\end{tabular}
\end{center}

Setting $k=2$, we see that $T(k,1)=0$. So, we move to $k=3$, for which $T(k,1)=1$. Hence the $Q_{(123)}$ term coming from $Q(3)$ is allocated to $R(5)$, and the table $T$ is updated as below.
\begin{center} 
\small 
\begin{tabular}{|c||c|c|c|c|c|c|c|c|c|c|} \hline
& 1 & 2 &3 & 4 & 5 & 6 & 7 & 8 & 9 & 10\\\hline \hline
2& 0&1&0&0&0&0&0&0&0&0\\\hline
3& 0&0&0&1&0&0&1&0&0&0\\\hline
\end{tabular}
\end{center}

We give one last example of an allocation. Observe that $e=(234)$ is the largest (in terms of the ordering on $\cE$) hyperedge such that $Q_e$ needs to be allocated to $R(5)$. We will now illustrate this step. This happens when $i=5$ and $j=7$. The updated table $T$ just before this step is shown below.

\begin{center} 
\small 
\begin{tabular}{|c||c|c|c|c|c|c|c|c|c|c|} \hline
& 1 & 2 &3 & 4 & 5 & 6 & 7 & 8 & 9 & 10\\\hline \hline
2& 0&0&0&0&0&0&0&0&0&0\\\hline
3& 0&0&0&0&0&0&1&0&0&0\\\hline
\end{tabular}
\end{center}

With $k=2$, we see that $T(k,7)=0$. So set $k=3$, and note that $T(k,7)=1$. So, we allocate to $R(5)$ the $Q_{(234)}$ term contributed by $Q(3)$. Upon updating, the table $T$ now has all entries to be $0$. Observe that at this point no other allocation is required, as the $Q_{e_j}$s for $j=8$, $9$ and $10$ are not required by $R(5)$ since terminal $5$ is contained in each of $e_8$, $e_9$ and $e_{10}$. Thus Algorithm \ref{alg:ta} successfully terminates. Finally, we rewrite \eqref{ex:1} and \eqref{ex:2} with underbraces showing the $R(i)$ term to which each $Q_e$ term was allocated by Algorithm \ref{alg:ta}.

\begin{align}
Q(2) & \leq \underbrace{I(\xi_{(123)}^n;\BL)}_{R(4)}+\underbrace{I(\xi_{(124)}^n;\BL|(\xi_e^n:e<(124))}_{R(5)}+\underbrace{I(\xi_{(125)}^n;\BL|(\xi_e^n:e<(125))}_{R(4)} \label{ex:53:1}\\
Q(3) & \leq \underbrace{I(\xi_{(123)}^n;\BL)}_{R(5)}+\underbrace{I(\xi_{(134)}^n;\BL|(\xi_e^n:e<(134))}_{R(5)}+\underbrace{I(\xi_{(135)}^n;\BL|(\xi_e^n:e<(135))}_{R(4)}\notag\\
        &\hspace{1em}+\underbrace{I(\xi_{(234)}^n;\BL|(\xi_e^n:e<(234))}_{R(5)}+\underbrace{I(\xi_{(235)}^n;\BL|(\xi_e^n:e<(235))}_{R(4)} \label{ex:53:2}
\end{align}

It can be clearly seen from \eqref{ex:53:1} and \eqref{ex:53:2} that $R(i), i=4,5,$ have each been allocated with all $Q_e$s with $i\notin e$, and no $Q_e$ is left unallocated.

\label{ex:53}
\end{example}

\section{The proof of Lemma~\ref{lem:tree}}\label{app:tree}

Fix an $n\in\mathbb{N}$ and let $\{T_1,T_2,\ldots,T_{\sigma^{(n)}}\}$ be a set of edge-disjoint spanning trees of $\cG^{(n)}$ of maximum cardinality $\sigma^{(n)} :=\sigma(\cG^{(n)})$. We will run Protocol~1 of \cite{TKSV12} independently on each of the trees $T_j, 1\leq j\leq\sigma^{(n)}$. For the sake of completeness, we describe the protocol below. 

Fix a spanning tree $T_j, 1\leq j\leq\sigma^{(n)}$, and fix a specific edge $e$ from the set of edges of $T_j$. Define $\xi(T_j) := \xi_e$, where, as usual, $\xi_e$ denotes the random variable associated with the edge $e$. For any vertex $i\in\cM$, denote by $d_j(i)$ the degree of the vertex $i$ in the spanning tree $T_j$. For any vertex $i$ satisfying $d_j(i)>1$, without loss of generality we label the edges of $T_j$ incident on it by $e(1),e(2),\ldots,e(d)$, where $d = d_j(i)$. The communication from terminal $i$ derived from $T_j$ is $\textbf{F}_{T_j}(i) := \bigl(\xi_{e(1)}\oplus \xi_{e(2)},\xi_{e(2)}\oplus \xi_{e(3)},\ldots,\xi_{e(d-1)}\oplus \xi_{e(d)}\bigr)$, where $\oplus$ denotes the modulo-2 sum. Let $\textbf{F}_{T_j} = \left(\textbf{F}_{T_j}(1), \textbf{F}_{T_j}(2), \ldots, \textbf{F}_{T_j}(m)\right)$, and let $\mathcal{F}_{T_j}$ denote the range of $\textbf{F}_{T_j}$. It is not hard to check the following facts: Firstly, every terminal can recover $\xi(T_j)$ from $\textbf{F}_{T_j}$. Secondly, $I(\textbf{F}_{T_j};\xi(T_j))=0$. Thirdly, 
\begin{gather}
\displaystyle\log|\mathcal{F}_{T_j}|=\sum_{i=1}^m[d_j(i)-1]= \sum_{i=1}^md_j(i)-m=2(m-1)-m=m-2, \label{protorate} 
\end{gather}
where we have used the fact that the number of edges in any spanning tree is $m-1$.

To complete the proof, we show that this protocol has communication rate $(m-2)\overline{\sigma}(\cG)$ and achieves SK capacity. Denote the entire communication $\bigl(\BF_{T_1},\BF_{T_2},\ldots,\BF_{T_{\sigma^{(n)}}}\bigr)$ by $\BF$ and denote its range by $\mathcal{F}$. Set $\BK=\bigl(\xi(T_1),\xi(T_2),\ldots,\allowbreak \xi(T_{\sigma^{(n)}})\bigr)$. Noting that the spanning trees $T_j, 1\leq j\leq\sigma^{(n)} $ are edge-disjoint, we have, using the independence of the random variables associated with the edges in $\cE^{(n)}$, $H(\BK)=\sigma^{(n)}$, $\log|\mathcal{F}|=(m-2)\sigma^{(n)}$ and $I(\BK;\BF)=0$. Therefore, $\BK$ is a secret key satisfying $\displaystyle\lim_{n\to\infty}\frac{1}{n}H(\BK)=\overline{\sigma}(\cG)$, and hence the protocol is capacity-achieving. The protocol used a communication rate of $(m-2)\overline{\sigma}(\cG)$ and thus $R_{\SK}\leq(m-2)\overline{\sigma}(\cG)$.

\section{An example of a non-$R_{\SK}$-maximal strict Type $\cS$ source} \label{app:exunique}

In this section we provide an example of a source which is strict Type $\cS$ and yet is non $R_{\SK}$-maximal. To construct such a source we need to define ``clubbing together" of independent multiterminal sources on $\cM$. Formally, for independent sources $X_{\cM}^n$ and $Y_{\cM}^n$, define the \emph{clubbed} source $Z_{\cM}^n$ as $Z_i^n=(X_i^n,Y_i^n)$, for all $i\in\cM$. $\Pi_X^*$ and $\Pi_Y^*$ are defined to be the sets of partitions of $\cM$ which are minimizers of \eqref{eq:I} for $X_{\cM}^n$ and $Y_{\cM}^n$, respectively. We will denote the communication complexity (resp. minimum rate of communication for omniscience) for the individual sources $X_{\cM}^n$ and $Y_{\cM}^n$ by $R_{\SK_X}$ and $R_{\SK_Y}$ (resp. $R_{\CO_X}$ and $R_{\CO_Y}$) respectively. The clubbed source satisfies the following result.
\begin{proposition}
Consider two independent multiterminal sources $X_{\cM}^n$ and $Y_{\cM}^n$ and the corresponding clubbed source $Z_{\cM}^n$. Then we have
\begin{equation}
\textbf{I}(Z_{\cM})\geq\textbf{I}(X_{\cM})+\textbf{I}(Y_{\cM}) \label{th:club:1}
\end{equation}
with equality iff $\Pi_X^*\bigcap\Pi_Y^*\neq\emptyset$.
\label{prop:club}
\end{proposition}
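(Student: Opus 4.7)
The plan is to show that for the clubbed source, the function $\Delta(\cdot)$ is additive over the two constituent sources, and then deduce both the inequality and the equality condition from a standard fact about minima of sums.

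First I would observe that by the independence of $X_{\cM}^n$ and $Y_{\cM}^n$, the random variables $X_A$ and $Y_A$ are independent for every subset $A \subseteq \cM$, and hence $H(Z_A) = H(X_A) + H(Y_A)$. Plugging this into the definition of $\Delta$ in \eqref{eq:I}, I get, for every partition $\cP$ of $\cM$ with $|\cP| \ge 2$,
\begin{equation*}
\Delta_Z(\cP) \;=\; \frac{1}{|\cP|-1}\!\left[\sum_{A\in\cP}\bigl(H(X_A)+H(Y_A)\bigr) - H(X_{\cM}) - H(Y_{\cM})\right] \;=\; \Delta_X(\cP) + \Delta_Y(\cP),
\end{equation*}
where $\Delta_X,\Delta_Y,\Delta_Z$ refer to the $\Delta$-functional associated with $X_{\cM}$, $Y_{\cM}$ and $Z_{\cM}$ respectively.

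Next I would use the elementary fact that for any two real-valued functions $f,g$ on a common finite domain, $\min_{\cP}[f(\cP)+g(\cP)] \geq \min_{\cP} f(\cP) + \min_{\cP} g(\cP)$, with equality iff $f$ and $g$ share at least one common minimizer. Taking the minimum over all partitions $\cP$ of $\cM$ and applying this to $f = \Delta_X$ and $g = \Delta_Y$, I immediately obtain
\begin{equation*}
\textbf{I}(Z_{\cM}) \;=\; \min_{\cP}\bigl[\Delta_X(\cP)+\Delta_Y(\cP)\bigr] \;\geq\; \min_{\cP}\Delta_X(\cP) + \min_{\cP}\Delta_Y(\cP) \;=\; \textbf{I}(X_{\cM}) + \textbf{I}(Y_{\cM}),
\end{equation*}
establishing \eqref{th:club:1}.

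For the equality condition, the forward direction is immediate: if $\cP^* \in \Pi_X^*\cap\Pi_Y^*$, then $\Delta_Z(\cP^*) = \textbf{I}(X_{\cM})+\textbf{I}(Y_{\cM})$, so the lower bound is attained. For the converse, suppose equality holds in \eqref{th:club:1}, and let $\cP^\dag$ be any minimizer of $\Delta_Z$. Then $\Delta_X(\cP^\dag)+\Delta_Y(\cP^\dag) = \textbf{I}(X_{\cM})+\textbf{I}(Y_{\cM})$; since $\Delta_X(\cP^\dag) \geq \textbf{I}(X_{\cM})$ and $\Delta_Y(\cP^\dag) \geq \textbf{I}(Y_{\cM})$, both inequalities must be equalities, so $\cP^\dag \in \Pi_X^* \cap \Pi_Y^*$. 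No step here looks to be an obstacle; the entire argument is a clean reduction to the additivity of entropy under independence combined with the trivial min-of-sum fact, so I expect the proof to be short and direct.
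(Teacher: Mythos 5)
Your proposal is correct and takes essentially the same route as the paper: both rest on the additivity $\Delta_Z(\cP) = \Delta_X(\cP) + \Delta_Y(\cP)$ from independence and then the min-of-sum inequality. You simply spell out the equality condition (both directions) a bit more explicitly than the paper, which states it in one line; the underlying argument is identical.
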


\begin{IEEEproof}
Consider any partition $\cP=\{A_1,A_2,\cdots,A_{\ell}\}$ of $\cM$. We have
\begin{align}
\Delta(\cP)&=\frac{1}{\ell-1}\left[\sum_{i=1}^{\ell}H(Z_{A_i})-H(Z_{\cM})\right] \notag\\
                 &=\underbrace{\frac{1}{\ell-1}\left[\sum_{i=1}^{\ell}H(X_{A_i})-H(X_{\cM})\right]}_{\Delta_X(\cP)}+\underbrace{\frac{1}{\ell-1}\left[\sum_{i=1}^{\ell}H(Y_{A_i})-H(Y_{\cM})\right]}_{\Delta_Y(\cP)} \label{club:1} 
\end{align}
where \eqref{club:1} follows from the independence of $X_{\cM}^n$ and $Y_{\cM}^n$. 

Thus we have from \eqref{club:1} that $\min_{\cP}\Delta(\cP)\geq\min_{\cP}\Delta_X(\cP)+\min_{\cP}\Delta_Y(\cP)$ with equality iff $\cP\in\Pi_X^*\bigcap\Pi_Y^*$. The result follows.
\end{IEEEproof}

We conclude the section by constructing a non $R_{\SK}$-maximal source with $\cS$ being the unique minimizer in \eqref{eq:I}.

\begin{example}
Consider a clubbed source $Z_{\cM}^n=(X_{\cM}^n,Y_{\cM}^n)$, where $X_{\cM}^n$ is the source described in Example \ref{ex:omni} and $Y_{\cM}^n$ corresponds to the PIN model on a $k$-regular, $k$-edge-connected graph. By Corollary~\ref{kkgraph}, we have $\Pi_Y^*=\{\cS\}$. 

Since $\Pi^*_X\bigcap\Pi^*_Y=\{\cS\}$, using Proposition~\ref{prop:club} we have SK capacity $\textbf{I}(Z_{\cM})=\textbf{I}(X_{\cM})+\textbf{I}(Y_{\cM})$. By independently running protocols achieving $R_{\SK_X}$ and $R_{\SK_Y}$, an SK of rate $\textbf{I}(X_{\cM})+\textbf{I}(Y_{\cM})$, i.e., SK capacity can be achieved. The communication rate used in independently running the two protocols is $R_{\SK_X}+R_{\SK_Y}$. Now, \eqref{omni} and the independence of $X_{\cM}^n$ and $Y_{\cM}^n$ show that $R_{\CO}=R_{\CO_X}+R_{\CO_Y}$. On the other hand, it is shown in Example~\ref{ex:omni} that $R_{\SK_X}<R_{\CO_X}$. Therefore, we have 
\begin{gather}
R_{\SK}\leq R_{\SK_X}+R_{\SK_Y} < R_{\CO_X}+R_{\CO_Y}=R_{\CO}. \notag
\end{gather}
\label{ex:unique}
\end{example}

\section{A non-strict Type $\cS$ source requiring omnivocality\protect\footnote{This example is a contribution of Chan et al. See \cite{Chan14}.}}\label{app:chan}

For $m\geq 4$, consider the multigraph $\cG=(\cV,\cE)$ with $\cV=\cM$ as usual. The multiset $\cE$ consists of $m-2$ copies of the edges $\{i,i+1\}$ for $1\leq i\leq m-1$, and $m-1$ copies of the edge $\{1,m\}$. Using techniques derived in \cite[Theorem 5]{NN10}, it can be shown that for the PIN model defined on $\cG$, we have $\Ixm=m-1$. We will show below that this PIN model is non-strict Type $\cS$, and yet it requires omnivocality to achieve SK capacity. 

We first show that the source is not strict Type $\cS$. Simple computations reveal the following facts: $H(X_i)=2(m-2)$, for all $i\in\{2,3,\ldots,m-1\}$, $H(X_1)=H(X_m)=2(m-2)+1$, $H(X_1,X_m)=3(m-2)+1$ and $H(X_{\cM})=m(m-2)+1$. Using these it is easy to check that $\Delta(\cS)=m-1$, and moreover, $\Delta(\cP^*)=m-1$, where $\cP^*=\{\{1,m\},\{2\},\{3\},\ldots,\{m-1\}\}$. Hence the source $X_{\cM}^n$ is Type $\cS$, but not strict Type $\cS$.

Now, we show that this source requires omnivocality to achieve SK capacity. As in the proof of Theorem~\ref{th:mge3}, we make use of Theorem~\ref{th:silent}, and show that for any $T\subset\cM$ with ${|T|}=m-1$, we have $\Itxm<\Ixm$. Let $T=\cM\setminus\{u\}$ with $u\in\cM$. Using symmetry it is enough to show $\Itxm<\Ixm$ for the following two cases:

Case I: $u=1$.

Case II: $u\in\{2,3,\ldots,m-2\}$.

\medskip

In both cases we will derive lower bounds on $\RTmin$ and hence obtain an upper bound on $\Itxm$. First we deal with Case~I with $T=\{2,3,\ldots,m\}$. In this case, $H(X_T)=m(m-2)+1$. Also, any point in $\cR_T$ satisfies the following constraints from \eqref{eq:lem}:

\begin{align}
\sum_{i=2}^{m-1}R_i & \geq H(X_2,X_3,\ldots,X_{m-1}|X_m)= (m-2)^2 \notag\\
R_m & \geq H(X_m|X_2,X_3,\ldots,X_{m-1})=m-1\notag
\end{align}
Using the above constraints, we have $\RTmin\geq (m-1)+(m-2)^2$. Thus,
\begin{gather}
\Itxm=H(X_T)-\RTmin\leq m(m-2)+1-(m-1)-(m-2)^2=m-2<m-1=\Ixm. \label{app:ex:1}
\end{gather}
Hence, SK capacity cannot be achieved with terminal 1 remaining silent.

Next we deal with Case~II. Assume an arbitrary $u\in\{2,3,\ldots,m-2\}$ is silent. As in Case~I, we have $H(X_T)=m(m-2)+1$. We see from \eqref{eq:lem} that the rate region $\cR_T$ is defined in part by the following constraints:

\begin{align}
\sum_{i=1}^{u-1}R_i + R_m & \geq H(X_1,X_2,\ldots,X_{u-1},X_m|X_{u+1},X_{u+2},\ldots,X_{m-1})= u(m-2)+1 \notag\\
\sum_{i=u+1}^{m-1} R_i& \geq H(X_{u+1},X_{u+2},\ldots,X_{m-1}|X_1,X_2,\ldots,X_{u-1},X_m)=(m-u-1)(m-2)\notag
\end{align}
The above constraints imply that $\RTmin\geq (m-2)(m-1)+1=(m-2)^2+(m-1)$. Hence, as in \eqref{app:ex:1}, we have $\Itxm<\Ixm$.

Therefore, the source $X_{\cM}^n$ cannot attain SK capacity without using omnivocality.

\section{Proofs of Corollaries of Proposition~\ref{prop:min}\label{app:typeS}}

In this section, we give the proofs of Corollaries~\ref{isentropic}, \ref{ex:CPIN}, \ref{kkgraph} and \ref{STS}. We start with the corollary stating that isentropic random variables form a Type $\cS$ source.

\begin{IEEEproof}[Proof of Corollary~\ref{isentropic}]
For a partition $\cP$ of $\cM$ with $|\cP| \, \ge 2$, let us define
$$
\delta(\cP) \ \triangleq \ \frac{1}{|\cP|-1}\sum_{A\in \cP} H(X_{A^c}|X_A) \ = \ H(X_{\cM})-\Delta(\cP). 
$$
By virtue of Proposition~\ref{prop:min}(a), we need to show that $\delta(\cP_B) \le \delta(\cS)$ for all $B \in \Omega$.

For isentropic random variables, the quantity $H(X_B|X_{B^c})$, for any $B \subseteq \cM$, depends only on the cardinality of $B$. Thus, for $1 \le k \le m$, define $g(k) \triangleq H(X_{\{1,2,\ldots,k\}} | X_{\cM\setminus \{1,2,\ldots,k\}})$; also, set $g(0) = 0$. With this, we can write
\begin{align*}
\delta(\cP_B) &= \frac{1}{|B|} \left[H(X_B|X_{B^c}) + \sum_{i \in B} H(X_{\cM\setminus\{i\}}|X_i)\right] \\
& = \frac{1}{|B|} g(|B|) + g(m-1)
\end{align*}
Also, note that $\delta(\cS) = \frac{m}{m-1}g(m-1)$. Thus, we have to show that $\frac{g(|B|)}{|B|} \le \frac{g(m-1)}{m-1}$ for all $B \in \Omega$. We accomplish this by proving that for isentropic random variables, the function $g(k)/k$ is non-decreasing in $k$, or equivalently, $kg(k+1) - (k+1)g(k)$ is always non-negative. Indeed, we have $g(k+1) = H(X_{\cM}) - H(X_{\{k+2,\ldots,m\}})$ and $g(k) = H(X_{\cM}) - H(X_{\{k+1,\ldots,m\}}) = g(k+1) - H(X_{k+1}|X_{\{k+2,\ldots,m\}})$. Thus, 
\begin{equation}
kg(k+1)- (k+1)g(k)= (k+1) \, H(X_{k+1}|X_{\{k+2,\ldots,m\}}) - g(k+1). \notag
\end{equation}
It is straightforward to show that the above quantity is non-negative:
\begin{align*}
g(k+1) &= H(X_{\{1,2,\ldots,k+1\}}|X_{\{k+2,\ldots,m\}}) \\
& \le \sum_{i=1}^{k+1} H(X_{i}|X_{\{k+2,\ldots,m\}}) \\
& = (k+1) H(X_{k+1}|X_{\{k+2,\ldots,m\}}),
\end{align*}
since, for $1 \le i \le k+1$, $H(X_{i}|X_{\{k+2,\ldots,m\}}) = H(X_{k+1}|X_{\{k+2,\ldots,m\}})$ by isentropy.
\end{IEEEproof}

Next, we prove Corollary~\ref{ex:CPIN}, which states that the PIN model on $K_{m,t}$ is strict Type $\cS$.

\begin{IEEEproof}[Proof of Corollary~\ref{ex:CPIN}]
Fix a set $B\subsetneq\cM$ with $|B|\:\leq m-2$. We will use Corollary~\ref{min:PIN} to show that the PIN model on $K_{m,t}$ is strict Type $\cS$. First we make the observation that ${|\cE|}=\binom{m}{t}$ for the case of $K_{m,t}$. To proceed, we need to evaluate the expression $\sum_{e\in\cE}[P_B(e)-1]$. We first consider the case when ${|B|}\geq t$. The fact that $|B|$ is at least $t$ implies that there are $\binom{|B|}{t}$ hyperedges which contain only elements of $B$, i.e., intersect the partition $\cP_B$ in $t$ parts. Now fix an $i$ with $1\leq i\leq t-1$. There are $\binom{|B|}{i}\binom{m-|B|}{t-i}$ hyperedges containing any $i$ terminals from $B$ and any $t-i$ terminals from $\cM\setminus B$, i.e., intersecting the partition $\cP_B$ in $(i+1)$ parts. Any remaining hyperedge will contain terminals from $B^c$ only and hence will intersect the partition $\cP_B$ in only one part. As a result, we have $\sum_{e\in\cE}[P_B(e)-1]=(t-1)\binom{|B|}{t}+\sum_{i=1}^{t-1}\binom{|B|}{i}\binom{m-|B|}{t-i}i=(t-1)\binom{|B|}{t}+|B|\sum_{i=1}^{t-1}\binom{|B|-1}{i-1}\binom{m-|B|}{t-i}$. Observe that $\sum_{i=1}^{t-1}\binom{|B|-1}{i-1}\binom{m-|B|}{t-i}$ is equal to $\binom{|B|-1}{t-1}$ subtracted from the coefficient of  $x^{t-1}$ in the expansion of $(1+x)^{|B|-1}(1+x)^{m-|B|}=(1+x)^{m-1}$. Therefore, $\sum_{i=1}^{t-1}\binom{|B|-1}{i-1}\binom{m-|B|}{t-i}=\binom{m-1}{t-1}-\binom{|B|-1}{t-1}$, and hence, for ${|B|}\geq t$, we have
\begin{equation}
\sum_{e\in\cE}[P_B(e)-1]=|B|\binom{m-1}{t-1}+(t-1)\binom{|B|}{t}-|B|\binom{|B|-1}{t-1}=|B|\binom{m-1}{t-1}-\binom{|B|}{t}. 
\label{eq1:CPIN}
\end{equation}

Next, we turn our attention to the case of $|B|\:<t$. In this case there are no hyperedges containing only terminals in $B$. For any $i$ satisfying $1\leq i\leq|B|$, there exist $\binom{|B|}{i}\binom{m-|B|}{t-i}$ hyperedges intersecting the partition in $(i+1)$ parts, as in the earlier case. However, all the remaining hyperedges are contained in $B^c$ only, and hence play no part in the expression $\sum_{e\in\cE}[P_B(e)-1]$. Thus, noting $|B|\:<t$, we have as in the previous case,
\begin{equation}
\sum_{e\in\cE}[P_B(e)-1]=|B|\sum_{i=1}^{|B|}\binom{|B|-1}{i-1}\binom{m-|B|}{t-i}=|B|\binom{m-1}{t-1}. \label{CPIN2}
\end{equation}

We will now apply Corollary~\ref{min:PIN}. When ${|B|}\geq t$, using \eqref{eq1:CPIN} we have
\begin{align}
\frac{1}{|B|}\sum_{e\in\cE}[P_B(e)-1]-\frac{(t-1)|\cE|}{m-1} & = \binom{m-1}{t-1}-\frac{1}{|B|}\binom{|B|}{t}-\frac{t-1}{m-1}\binom{m}{t} \label{CPIN3}\\
                                      & = \frac{1}{t}\biggl[\frac{(m-1)!\ t}{(m-t)!\ (t-1)!}-\frac{m!}{(t-2)!\ (m-t)!\ (m-1)}-\binom{|B|-1}{t-1}\biggr] \nonumber\\
                                      & = \frac{1}{t}\biggl[\frac{(m-1)!}{(t-2)!\ (m-t)!}\left(\frac{t}{t-1}-\frac{m}{m-1}\right)-\binom{|B|-1}{t-1}\biggr] \nonumber\\
                                      & = \frac{1}{t}\left[\binom{m-2}{t-1}-\binom{|B|-1}{t-1}\right] \notag \\
                                     & > 0 \label{CPIN4}
\end{align}
where \eqref{CPIN4} holds as ${|B|}\leq m-2$. For the case of ${|B|}< t$, we have 
\begin{align}
\frac{1}{|B|}\sum_{e\in\cE}[P_B(e)-1]-\frac{(t-1)|\cE|}{m-1} & = \binom{m-1}{t-1}-\frac{t-1}{m-1}\binom{m}{t}\notag\\
                                                                                                       &=\frac{1}{t}\biggl[\binom{m-2}{t-1}\biggr] \label{CPIN5}\\
                                                                                                      &> 0 \notag
\end{align}
where \eqref{CPIN5} follows from \eqref{CPIN3} and \eqref{CPIN4}. Thus, using Corollary~\ref{min:PIN} we have the result.
\end{IEEEproof}

Next up is the proof of Corollary~\ref{kkgraph}, which states that PIN models on $k$-regular, $k$-edge-connected graphs are strict Type $\cS$.

\begin{IEEEproof}[Proof of Corollary~\ref{kkgraph}]
Consider a $k$-regular, $k$-edge-connected graph  $\cG=(\cV,\cE)$. Using $k$-regularity, we have $|\cE|\;=\frac{km}{2}$. As usual, we fix a $B\subsetneq\cM$ satisfying $1\leq|B|\;\leq m-2$ and proceed to evaluate the expression $\sum_{e\in\cE}[P_B(e)-1]$. Observe that for an ordinary graph, the sum $\sum_{e\in\cE}[P_B(e)-1]=|\cE_{\cP_B}|$, where $\cE_{\cP_B}$ is the set of edges whose end-points lie in different cells of the partition $\cP_B$. To proceed, we perform a graph contraction operation along the partition $\cP_B$ on $\cG$ to get a new graph $\cG'=(\cV',\cE')$. More precisely, we take $\cV'=B\bigcup\{B^c\}$ and $\cE'=\biggl\{\{i,j\}\in\cE: i,j\in B\biggr\}\bigcup\biggl\{\{B^c,i\}: \exists\{i,j\}\in\cE, i\in B, j\in B^c\biggr\}$, so that $|\cE_{\cP_B}|\;=|\cE'|$. Now, the degree of every $v\in\cV'$ satisfying $v\in B$ is $k$, whereas by the $k$-edge connectivity the degree of $B^c$ in $\cG'$ is at least $k$. Hence, we have $|\cE_{\cP_B}|\;=|\cE'|\;\geq \frac{k(|B|+1)}{2}$. Therefore,
\begin{align}
\frac{1}{|B|}\sum_{e\in\cE}[P_B(e)-1]-\frac{|\cE|}{m-1}&=\frac{1}{|B|}|\cE_{\cP_B}|-\frac{km}{2(m-1)} \notag\\
                                                                                                     &\geq \frac{k}{2}\biggl[\frac{|B|+1}{|B|}-\frac{m}{m-1}\biggr] \notag\\
                                                                                                     &>0 \label{kkgraph:1}
\end{align}
where, \eqref{kkgraph:1} follows from the fact that $|B|\;\leq m-2$. Using Corollary~\ref{min:PIN} we have the result.
\end{IEEEproof}

Finally, we give the proof of Corollary~\ref{STS}, which states that a PIN model obtained from a Steiner triple system (STS) is strict Type $\cS$. Recall that $\cH_{\STS} = (\cM,\STS(\cM))$ is a $3$-uniform hypergraph obtained from an STS on $\cM$.

\begin{IEEEproof}[Proof of Corollary~\ref{STS}]
We will use Proposition~\ref{prop:min} to get the result. First, we calculate $H(X_i)$ for any $i\in\cM$. Observe that $H(X_i)$ counts the number of elements of $\STS(\cM)$ containing $i$. Now, fixing $i\in\cM$, there are $m-1$ pairs of elements from $\cM$ which contain $i$. Any set in $\STS(\cM)$ containing $i$ contains two such pairs. Further, by the definition of STS, we know that any such pair is a subset of exactly one element of $\STS(\cM)$. Hence, we have $H(X_i)=\frac{m-1}{2}$. Next, we evaluate $H(X_{\cM})=|\STS(\cM)|$. Note that there are $\binom{m}{2}$ pairs of elements in $\cM$, each pair being a subset of exactly one element of $\STS(\cM)$. Also, each element of $\STS(\cM)$ contains three such pairs. Therefore, we have $H(X_{\cM})=|\STS(\cM)|=\frac{m(m-1)}{6}$. Using these facts, we have $\Delta(\cS)=\frac{1}{m-1}\biggl[\frac{m(m-1)}{2}-\frac{m(m-1)}{6}\biggr]=\frac{m}{3}$. 

Now, fix a $B\subsetneq\cM$ with $1\leq |B|\;\leq m-2$ and evaluate $\Delta(\cP_B)$. We consider two cases: $1\leq |B|\;\leq m-3$ and $|B|\;=m-2$. First, consider $1\leq |B|\;\leq m-3$. To proceed, we calculate a lower bound on $H(X_A)$ for any $A\subsetneq\cM$. Observe that $H(X_A)$ counts the number of sets in $\STS(\cM)$ which contain at least one element from $A$. We will calculate an upper bound on the number of elements of $\STS(\cM)$ containing only elements of $A^c$, and subtract it from $|\STS(\cM)|$ to get the required lower bound. The total number of pairs formed by the elements of $A^c$ is $\binom{m-|A|}{2}$. Again, as each element of $\STS(\cM)$ contains 3 pairs, the required upper bound is $\lfloor\frac{(m-|A|)(m-|A|-1)}{6}\rfloor$. Thus, we have $H(X_A)\geq |\STS(\cM)|-\frac{(m-|A|)(m-|A|-1)}{6}$. So, $H(X_{B^c})\geq |\STS(\cM)|-\frac{|B|(|B|-1)}{6}$, and hence, $\Delta(\cP_B)\geq \frac{1}{|B|}\biggl[\frac{|B|(m-1)}{2}-\frac{|B|(|B|-1)}{6}\biggr]=\frac{m-1}{2}-\frac{|B|-1}{6}$. Therefore, 
\begin{align}
\Delta(\cP_B)-\Delta(\cS)&\geq \frac{m-1}{2}-\frac{|B|-1}{6}-\frac{m}{3} \notag\\
                                        &=\frac{1}{6}[m-2-|B|]\notag\\
                                        &>0 \label{STS:1}
\end{align}
where, \eqref{STS:1} follows from the fact that $|B|\;<m-2$. 

To complete the proof, we show that $\Delta(\cP_B)-\Delta(\cS)>0$ is satisfied when $|B|\;=m-2$. To this end, we fix a $B=\cM\setminus\{i,j\}$, where $i,j\in\cM$. We will exactly calculate $H(X_{B^c})$, which is the number of elements of $\STS(\cM)$ containing at least one of $i$ and $j$. It has been shown earlier that $i$ and $j$ each occur in exactly $\frac{m-1}{2}$ elements, and they occur together exactly once. Therefore, we have $H(X_{B^c})=m-2$, and hence, $\Delta(\cP_B)=\frac{1}{m-2}\biggl[\frac{(m-2)(m-1)}{2}+(m-2)-\frac{m(m-1)}{6}\biggr]$. Thus,
\begin{align}
\Delta(\cP_B)-\Delta(\cS)&=\frac{1}{m-2}\biggl[\frac{(m-2)(m-1)}{2}+(m-2)-\frac{m(m-1)}{6}\biggr]-\frac{m}{3}\notag\\
                                        &=\frac{m-3}{3(m-2)}\notag\\
                                        &>0 \label{STS:2}
\end{align} 
where \eqref{STS:2} follows from the fact that $m>3$.
\end{IEEEproof}

\end{appendices}

\end{document}